\let\csname equation*\endcsname\relax
\let\csname endequation*\endcsname\relax
\theoremstyle{plain}
\newtheorem{theorem}{Theorem}
\newtheorem{lemma}[theorem]{Lemma}
\newtheorem{corollary}[theorem]{Corollary}
\newtheorem{definition}[theorem]{Definition}
\newtheorem{remark}[theorem]{Remark}
\newcommand{\hide}[1]{}
\renewcommand{\Re}{\text{Re}}
\renewcommand{\Im}{\text{Im}}
\def\MindexA{\mathcal{A}}
\def\sDiv{\mathscr{D}}
\def\sCurl{\mathscr{C}}
\def\sCurlDagger{\mathscr{C}^\dagger}
\def\sTwist{\mathscr{T}}
\DeclareMathOperator{\tho}{\text{\rm{\textthorn}}}
\DeclareMathOperator{\edt}{\eth}
\def\TMEEop{\mathbf{E}}
\def\TMETop{\mathbf{T}}
\def\TMESop{\mathbf{S}}
\def\TSISop{\widehat{\mathbf{S}}}
\def\TMEOop{\mathbf{O}}
\def\TSIOop{\widehat{\mathbf{O}}}
\def\TMEFop{\mathbf{F}}
\def\TSIFop{\widehat{\mathbf{F}}}
\def\TSIUop{\widehat{\mathbf{L}}}
\def\TSINop{\widehat{\mathbf{N}}}
\def\SymSpin{\mathcal{S}}
\def\defeq{\coloneqq}
\begin{document}

\title[Symmetries of linearized gravity from adjoint operators]{Symmetries of linearized gravity from adjoint operators}

\author{Steffen Aksteiner} \email{steffen.aksteiner@aei.mpg.de}
\affiliation{Albert Einstein Institute, Am M\"uhlenberg 1, D-14476 Potsdam,
Germany}

\author{Thomas B\"ackdahl} \email{thomas.backdahl@chalmers.se}
\affiliation{Mathematical Sciences, Chalmers University of Technology and University of Gothenburg, SE-412 96 Gothenburg, Sweden}

\date{\today}

\begin{abstract}
Based on operator identities and their formal adjoints, we derive two symmetry operators for the linearized Einstein operator on vacuum backgrounds of Petrov type D and in particular the Kerr spacetime. One of them is of differential order four and coincides with a result of Cohen and Kegeles. The other one is a new operator of differential order six. The corresponding operator identities are based on the Teukolsky equation and the Teukolsky-Starobinski identities, respectively. The method applies to other field equations as well, which is illustrated with the Maxwell equation. The resulting symmetry operators are connected to Hertz and Debye potentials as well as to the separability of the Teukolsky equation for both Maxwell and linearized gravity.
\end{abstract}

\maketitle

\section{Introduction}

A remarkable property of the Kerr black hole spacetime is that certain wave equations derived from the spin-$s$ field equations are separable. Geometrically this cannot be explained by the two isometries of the Kerr solution alone, but also requires its Killing-Yano tensor. The connection to separability is made by partial differential operators involving the Killing vectors and Killing-Yano tensor such that solutions of a field equation are mapped to solutions. We call such operators \textit{symmetry operators}. 

Second order symmetry operators for the scalar wave equation on Kerr were first discussed by Carter, \cite{Carter:1968}. The spin-1/2 case was analyzed by Carter and McLenaghan \cite{Carter:McLenaghan:1979} and Kamran and McLenaghan \cite{Kamran:McLenaghan:1984} and for spin-1 Kalnins and Miller \cite{kalnins:miller:1989} constructed two second order symmetry operators, see also \cite{Carter:2006} for an overview. Recently symmetry operators up to second order for scalar wave, massless spin-$1/2$ and Maxwell equations on 4-dimensional Lorentzian spacetimes were classified in~\cite{ABB:symop:2014CQGra..31m5015A}.

For spin-2, linearized gravity, only partial results are known. Cohen and Kegeles \cite{1979PhRvD..19.1641K} constructed Hertz potentials for spin-2, which is a covariant form of the scalar Debye potential construction presented earlier by Cohen and Kegeles \cite{1975PhLA...54....5C}, Chrzanowski \cite{Chrzanowski:1975} and also Wald \cite{wald:1978PhRvL..41..203W}. Both can be thought of as higher order versions of the well known vector potential construction for Maxwell fields and they can be interpreted as symmetry operators. In this paper we use a method introduced by Wald \cite{wald:1978PhRvL..41..203W} to re-derive the covariant result of Cohen and Kegeles. Furthermore, we show that a slight extension of Wald's adjoint method can be used to construct a new symmetry operator of order six.

We start by reviewing symmetry operators for the scalar wave equation $\square \phi = 0$ on a Kerr background. The two isometries generated by say $\xi^a, \zeta^a$, correspond to time translation and angular rotation, directly lead to first order (in this case even commuting) symmetry operators via Lie dragging,
\begin{align}
[\mathcal{L}_\xi, \square] =0 && [\mathcal{L}_\zeta,\square] = 0.
\end{align}
Moreover, as Carter showed \cite{carter:1977PhRvD..16.3395C} the operator $Q = \nabla^a K_{ab} \nabla^b$, with $K_{ab}$ a Killing tensor leads to
\begin{align}\label{eq:ScalarCarterCOmmutator}
[Q, \square] =0,
\end{align}
i.e. it is a symmetry operator not reducible to compositions of Lie derivatives along isometries. It is this operator which encodes the separability of the $r$ and $\theta$ coordinates on the Kerr spacetime in Boyer-Lindquist coordinates. However, we note that $Q$ is neither \textit{purely radial} in the sense that $Q = \sum_i f^{ij}(r) \partial_i \partial_j$ nor \textit{purely angular} in the sense that $Q = \sum_i f^{ij}(\theta) \partial_i \partial_j$. Only by adding/subtracting terms vanishing due to the field equation can this form be accomplished and we find a similar structure for the Maxwell field and linearized gravity.

As was shown by Andersson and Blue \cite{andersson:blue:0908.2265}, symmetry operators play an important role in proving decay of scalar waves on a Kerr background. This decay result can be seen both as tool and as a model problem for the black hole stability problem. An essential step in their proof was to prove integrated local energy estimates or Morawetz estimates. For this to work, they needed to identify trapped modes, i.e. waves following the orbiting null geodesics of the Kerr spacetime and make the estimate insensitive to this phenomena. This could be done using the symmetry operators and a direct relation between them and the constants of motion for the geodesic equation. Furthermore, by inserting symmetry operators into energy estimates one can easily increase the differential order of the estimates, which is needed to prove point wise estimates.

We expect that similar techniques could be used also for higher spin fields, like Maxwell fields and linearized gravity.
We therefore need a large set of symmetry operators. Lie derivatives along isometries are obvious first order symmetry operators and we are not discussing them and their compositions further. Apart from these there are only two second order symmetry operators for Maxwell constructed from the Killing-Yano tensor, see~\cite{ABB:symop:2014CQGra..31m5015A}.

It is the goal of this paper to derive analogous symmetry operators for the linearized Einstein operator itself on vacuum spacetimes of Petrov type D. The method we use is based on operator identities originally proposed by Wald \cite{wald:1978PhRvL..41..203W}. An \textit{operator identity} is an equality of operators which requires only commutators and no field equations for its verification, hence it is a property of the background and not the particular field theory we study on it. It should be pointed out that the method is constructive, but we cannot say anything about completeness of the generated set of symmetry operators yet. Suppose $\TMEEop$ is a formally self-adjoint differential operator defining the field equation
\begin{align*}
 \TMEEop f = 0
\end{align*}
under consideration. Applying a differential operator $\TMESop$ on this gives an integrability condition $ \TMESop \TMEEop f = 0$. For certain operators $\TMESop$ there is an alternative form of the composed operator $\TMESop\TMEEop$ satisfying an operator identity $ \TMESop\TMEEop = \TMEOop\TMETop$, such that the operator $\TMEOop$ has particularly nice properties (in the cases discussed in this paper, it is formally self-adjoint). From this set-up a symmetry operator for $\TMEEop$ follows from
\begin{theorem}[\cite{wald:1978PhRvL..41..203W}] \label{thm:AdjointMethod}
Suppose the identity 
\begin{align} \label{eq:adjoint} 
\TMESop\TMEEop = \TMEOop\TMETop,
\end{align}
holds for linear partial differential operators $\TMESop, \TMEEop, \TMEOop$ and $\TMETop$. 
Suppose $\psi$ satisfies $\TMEOop^\dagger \psi = 0$, where $\dagger$ denotes the adjoint of the operator with respect to some inner product. Then $\TMESop^\dagger \psi$ satisfies $\TMEEop^\dagger \TMESop^\dagger \psi = 0$. Thus, in particular, if $\TMEEop$ is self-adjoint then $\phi = \TMESop^\dagger \psi$ is a solution of $\TMEEop \phi = 0$.
\end{theorem}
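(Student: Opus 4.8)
The plan is to take the formal adjoint of the operator identity \eqref{eq:adjoint} and then evaluate it on $\psi$. The only algebraic input needed is that the adjoint operation reverses the order of composition, $(\mathbf{A}\mathbf{B})^\dagger = \mathbf{B}^\dagger\mathbf{A}^\dagger$, which follows immediately from the defining relation $\langle \mathbf{A}\mathbf{B}\,u, v\rangle = \langle u, (\mathbf{A}\mathbf{B})^\dagger v\rangle$ by peeling off the adjoints of $\mathbf{A}$ and $\mathbf{B}$ one at a time; at the level of the formal operator identity the boundary terms play no role.

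First I would dualize both sides of \eqref{eq:adjoint}. The left-hand side $\TMESop\TMEEop$ becomes $\TMEEop^\dagger\TMESop^\dagger$ and the right-hand side $\TMEOop\TMETop$ becomes $\TMETop^\dagger\TMEOop^\dagger$, so the adjoint identity reads
\[
\TMEEop^\dagger\TMESop^\dagger = \TMETop^\dagger\TMEOop^\dagger .
\]
Next I would apply this identity to $\psi$. By hypothesis $\TMEOop^\dagger\psi = 0$, so the right-hand side annihilates $\psi$, and therefore $\TMEEop^\dagger(\TMESop^\dagger\psi) = 0$. Writing $\phi = \TMESop^\dagger\psi$, this says $\TMEEop^\dagger\phi = 0$; if in addition $\TMEEop$ is self-adjoint, then $\TMEEop^\dagger = \TMEEop$ and $\phi$ solves $\TMEEop\phi = 0$, as claimed.

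There is essentially no genuine obstacle: the argument is purely formal. The one point demanding care is the bookkeeping of operator order when dualizing a composition, since reversing it incorrectly would yield the generally false relation $\TMESop^\dagger\TMEEop^\dagger = \TMETop^\dagger\TMEOop^\dagger$. It is also worth stressing that ``adjoint'' here means the formal adjoint with respect to a fixed bilinear pairing, so the statement is local and no functional-analytic or boundary considerations intervene. The substantive work, carried out elsewhere in the paper, is to produce explicit factorizations of the form \eqref{eq:adjoint} in which $\TMEEop$ is a self-adjoint Teukolsky or Teukolsky--Starobinsky operator, and then to identify the resulting operator $\TMESop^\dagger$ as a symmetry operator.
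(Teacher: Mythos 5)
Your proposal is correct and coincides with the paper's own argument (given in a footnote): take the adjoint of \eqref{eq:adjoint} using $(\mathbf{A}\mathbf{B})^\dagger = \mathbf{B}^\dagger\mathbf{A}^\dagger$ to get $\TMEEop^\dagger\TMESop^\dagger = \TMETop^\dagger\TMEOop^\dagger$, then apply this to $\psi$ and use $\TMEOop^\dagger\psi = 0$. Nothing further is needed.
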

\begin{proof}
The adjoint of \eqref{eq:adjoint} is given by $\TMEEop^\dagger\TMESop^\dagger = \TMETop^\dagger\TMEOop^\dagger$. Applied to $\psi$, we obtain $\TMEEop^\dagger \TMESop^\dagger \psi = 0$.
\end{proof}
We find that there are two integrability conditions, for Maxwell and linearized gravity, which fit into this scheme. The first one is the second order Teukolsky master equation (TME), a wave equation for certain field strength components of the Maxwell field or certain curvature components of the linearized gravitational field. We denote the TME operator by $\TMEOop$ so that its operator identity is given by \eqref{eq:adjoint}. The second one is the Teukolsky-Starobinski identity (TSI), a differential relation between the aforementioned components. It is of differential order two for the Maxwell equation and of order four for linearized gravity. We introduce hats to distinguish from the first case and write $\TSISop\TMEEop = \TSIOop\TMETop$ with $\TSIOop$ the TSI operator. For linearized gravity on a curved background an additional term occurs in the operator identity and we refer to section~\ref{sec:Spin2TSI} for details. The covariant form of the spin-1 TME and TSI can be found in \cite{2015arXiv150402069A} and for linearized gravity its full form was first derived in \cite{2016arXiv160106084A}.

\begin{remark}
In the original work \cite{wald:1978PhRvL..41..203W}, Wald applied the theorem to the TME for Maxwell and linearized gravity in a Newman-Penrose component form and made two key observations:
\begin{enumerate}
 \item \textit{Connecting the TME to covariant field equations}\\
  First use the (self-adjoint) wave equations for the vector potential (spin-1) or the linearized metric (spin-2) to define the $\TMEEop$ operator covariantly. Then there exists an operator $\TMESop$ connected to the TME operator $\TMEOop$ via \eqref{eq:adjoint}. In fact, the operator $\TMESop$ can be read-off from the source terms of the TME in Teukolsky's original work.
 \item \textit{Adjointness property of TME component operator}\\
 The TME operator (component) for one extreme scalar (say $\phi_0$ for spin-1 or $\dot{\Psi}_0$ for spin-2) is, up to rescaling by a scalar field, the adjoint of the TME operator for the other extreme scalar ($\phi_2$ for spin-1 or $\dot{\Psi}_4$ for spin-2).
\end{enumerate}
Identifying $\TMEOop^\dagger$ with the Debye potential operator, Wald provided a re-derivation of the spin-1 Debye potential formulation of Cohen and Kegeles \cite{cohen:kegeles:1974PhRvD..10.1070C} and the first complete proof of the spin-2 Debye potential formulation initiated by Cohen and Kegeles \cite{1975PhLA...54....5C} and Chrzanowski \cite{Chrzanowski:1975}. See also \cite{Whiting:2005hr} for an overview.
\end{remark} 
We note that the second observation mentioned in the remark is naturally encoded in the self-adjointness of the \textit{covariant} TME operator, as we will see below. Therefore the Debye potential formulation can also be discussed in terms of symmetry operators, see sections~\ref{sec:Spin1Interpretation} and \ref{sec:Spin2Interpretation}.

For the Maxwell case we find that the symmetry operator following from the TSI operator identity is closely related to the separability of the TME operator (from that point of view, it is the generalization of the Carter operator). 
To see this one needs to make use of the freedom to add/subtract terms vanishing due to the field equations to produce symmetry operators which are "purely angular" or "purely radial" in the sense discussed on page~\pageref{eq:ScalarCarterCOmmutator}.
The symmetry operator following from the TME operator identity is equivalent a Hertz potential construction. See also \cite[Chapter 5]{aksteiner:thesis}. Both symmetry operators are of differential order two, and they coincide with the irreducible symmetry operators found in the complete classification \cite{ABB:symop:2014CQGra..31m5015A}, see remark~\ref{rem:Spin1RelToClassification}.

The two symmetry operators for linearized gravity are of higher order than in the spin-1 case.
The one we get from the TSI operator identity is of order six. Since the TME operator is of order two, the relation to separability is not as immediate as for the spin-1 case. See however lemma~\ref{lem:Spin21stkindComps} and remark~\ref{rem:Spin2TMESymop}. 
From the TME operator identity we get a fourth order operator, which is equivalent to the covariant Hertz potential formalism of Cohen and Kegeles \cite{1979PhRvD..19.1641K}. 

We use Penrose's 2-spinor formalism and irreducible decompositions leading to symmetric spinors exclusively. The main information about the type D geometry is encoded in the existence of a Killing spinor $\kappa_{AB}$ satisfying $\nabla_{A'(A}\kappa_{BC)}=0$. On the Kerr spacetime the Killing spinor contains the same information as the Killing-Yano tensor. We therefore construct covariant operators involving the Killing spinor which allows us to express all symmetry operators covariantly.

All calculations in this paper were performed in the \emph{xAct} \cite{xAct} suit for \emph{Mathematica}, and in particular we have used and developed the \emph{SymManipulator} and \emph{SpinFrames} packages for this work. The typeset ready equations were produced with \emph{TexAct}.

\subsection*{Overview}
In the preliminaries section~\ref{sec:Preliminaries} we introduce the inner products for the adjoint method and a set of algebraic and differential operators tied to the Petrov type D geometry. In section~\ref{sec:spin1} we analyze the source-free Maxwell equation (spin-1) in covariant spinorial form. The self-adjointness of TME and TSI is shown and afterwards used in  Theorem~\ref{thm:Spin1TMETSIOperatorIdentity} to construct two second order symmetry operators.
In subsection~\ref{sec:Spin1Interpretation} the symmetry operators for spin-1 are related to the covariant and GHP component forms of  Hertz and Debye potentials and to the separability of the TME. In section~\ref{sec:Spin2} we formulate the linearized gravity equations in terms of suitable operators. The TME and TSI are then separately treated in subsections~\ref{sec:Spin2TME} and \ref{sec:Spin2TSI}, respectively. In each case formal self-adjointness is shown and used to construct a symmetry operator. These are the main results of the paper and stated in Theorems \ref{thm:Spin2TMEOperatorIdentity} and \ref{thm:Spin2TSIOperatorIdentity}, respectively. In subsection~\ref{sec:Spin2Interpretation} the symmetry operators for spin-2 are related to the covariant and GHP component forms of  Hertz and Debye potentials and to the separability of the TME. In the concluding section~\ref{sec:conclusion} we summarize the results and also point to interesting directions for the future. In appendix~\ref{sec:adjoints}, we present the formal adjoints of the operators introduced in this paper. Appendix~\ref{app:KopComm} contains a list of commutator relations for some of these operators. In appendix~\ref{sec:GHPform} we list the GHP component form of selected operators.

\section{Preliminaries} \label{sec:Preliminaries}
We use the 2-spinor formalism and sign convention of Penrose \& Rindler \cite{Penrose:1986fk} in which $24\Lambda$ denotes the Ricci scalar, $\Phi_{ABA'B'}$ is the trace-free Ricci spinor and $\Psi_{ABCD}$ is the Weyl spinor. All spinors can be decomposed into irreducible parts, which are symmetric, so we can work with symmetric spinors exclusively. Throughout the paper we will therefore assume that all spinors are symmetric unless explicitly stated otherwise. We denote the space of symmetric spinors of valence $(k,l)$ by $\SymSpin_{k,l}$. Sets of spinor indices are collected into a multi index, e.g. $\MindexA_k \defeq A_1 \dots A_k$, or suppressed completely in cases where it does not lead to confusion.

Anti-self-dual and self-dual 2-forms  are equivalent to elements of $\SymSpin_{2,0}$ and $\SymSpin_{0,2}$ respectively. Anti-self-dual and self-dual 4-tensors with Weyl symmetries correspond to elements of $\SymSpin_{4,0}$ and $\SymSpin_{0,4}$ respectively. To avoid clutter in the notation, we use the same symbol for corresponding operators (e.g. $\TMEEop$) in the spin-1 and 2 cases in sections~\ref{sec:spin1} and \ref{sec:Spin2}, respectively.

Component expressions with respect to a spinor dyad $(o_A,\iota_A)$ will be expressed in the GHP notation. For certain operators in GHP form we will talk about "purely angular" or "purely radial" operators. By this we mean that after separation of variables the "purely angular" operators will only involve angular variables, and the "purely radial" operators will only involve radial variables.

Following \cite[sec. 2.2]{2015arXiv150402069A}, we denote the formal adjoint of a linear differential operator $\mathbf{A}$ with respect to the bilinear pairing
\begin{align} \label{eq:InnerProd1}
(\phi_{\MindexA_k \MindexA'_l},\psi_{\MindexA_k \MindexA'_l}) = \int \phi_{\MindexA_k \MindexA'_l} \psi^{\MindexA_k \MindexA'_l} d\mu
\end{align}
by $\mathbf{A}^\dagger$, and the adjoint with respect to the sesquilinear pairing
\begin{align} \label{eq:InnerProd2} 
\langle\phi_{\MindexA_k \MindexA'_l},\psi_{\MindexA_l \MindexA'_k}\rangle = \int \phi_{\MindexA_k \MindexA'_l} \overline{\psi}^{\MindexA_k \MindexA'_l} d\mu
\end{align}
by $\mathbf{A}^\star$. We refer to them as $\dagger$-adjoint and $\star$-adjoint, respectively. The integrations are formal and with respect to the geometric volume element $ d\mu$ of some background geometry. The adjoints of certain natural operators with respect to both inner products are listed in appendix \ref{sec:adjoints}. Also note that $\overline{\mathbf{A}^\dagger(\phi)}~=~\mathbf{A}^\star(\bar{\phi})$. For  $\mathbf{A}:\SymSpin_{k,l} \to \SymSpin_{u,v}$ the adjoint operators map according to $\mathbf{A}^\dagger:\SymSpin_{u,v} \to \SymSpin_{k,l}$ and $\mathbf{A}^\star:\SymSpin_{v,u} \to \SymSpin_{l,k}$. The adjoint operator argument in Theorem~\ref{thm:AdjointMethod} will be used for both the $\dagger$-adjoint and the $\star$-adjoint.

We consider vacuum type D spaces, and collect certain algebraic and differential operators on such spaces which were first introduced (with examples) in \cite{2016arXiv160106084A}. The fundamental operators $\sCurl, \sCurlDagger, \sTwist, \sDiv$ acting on $\SymSpin_{k,l}$ are defined as the irreducible parts of the covariant derivative $\nabla_{AA'} \varphi_{BC\cdots DB'C'\cdots D'}$ of the symmetric spinor $\varphi \in \SymSpin_{k,l}$. See \cite{ABB:symop:2014CQGra..31m5015A} for a detailed discussion of their properties including commutators. The main feature of the Petrov type D geometry is encoded in the symmetric Killing spinor $\kappa_{AB}$ found in \cite{walker:penrose:1970CMaPh..18..265W}, satisfying 
\begin{align} 
\nabla_{(A|A'|}\kappa_{BC)} = 0.
\end{align} 
In a principal dyad the Killing spinor takes the simple form 
\begin{align} \label{eq:TypeDKS}
\kappa_{AB}={}&-2 \kappa_1 o_{(A}\iota_{B)},
\end{align}
with $\kappa_1 \propto \Psi_2^{-1/3}$ and $\Psi_2$ being the only non-vanishing component of the Weyl spinor. The constant factor of proportionality can be choosen arbitrarily. 
However, see Lemma~\ref{lem:Spin21stkindComps} for an explicit form in the Schwarzschild geometry. Note that $\kappa_1$ and $\Psi_2$ can be expressed covariantly via the relations $\kappa_{AB} \kappa^{AB} = -2 \kappa_{1}{}^2$ and $\Psi_{ABCD} \Psi^{ABCD}=6 \Psi_{2}^2$. Hence, we can allow $\kappa_1$ and $\Psi_2$ in covariant expressions. From a commutator it follows that
\begin{align}\label{eq:XiDef}
\xi_{AA'}={}&- \nabla_{BA'}\kappa_{A}{}^{B}, 
\end{align}
is a Killing vector field. Another important vector field is defined by
\begin{align} \label{eq:UU11Def}
U_{AA'}={}&- \frac{\kappa_{AB} \xi^{B}{}_{A'}}{3 \kappa_1^2} = - \nabla_{AA'}\log(\kappa_1) .
\end{align}
Setting $\nabla_{(n,m)\, AA'} \defeq \nabla_{AA'} + n U_{AA'} + m \bar U_{AA'}$ for constants $n,m$, we define the extended fundamental operators acting on $\varphi \in \SymSpin_{k,l}$ by,
\begin{subequations}\label{eq:ExtendedFundSpinOp}
\begin{align}
(\sDiv_{(n,m)} \varphi)_{\MindexA_{k-1}}{}^{\MindexA'_{l-1}}
&\defeq
\nabla_{(n,m)}^{BB'} \varphi_{\MindexA_{k-1} B}{}^{\MindexA'_{l-1}}{}_{B'} ,  
\\
(\sCurl_{(n,m)} \varphi)_{\MindexA_{k+1}}{}^{\MindexA'_{l-1}}
&\defeq 
\nabla_{(n,m)\, (A_1}{}^{B'} \varphi_{A_2 \dots A_{k+1})}{}^{\MindexA'_{l-1}}{}_{B'} , 
\\
(\sCurlDagger_{(n,m)} \varphi)_{\MindexA_{k-1}}{}^{\MindexA'_{l+1}}
&\defeq 
\nabla_{(n,m)}^{B(A'_1} \varphi_{\MindexA_{k-1} B}{}^{A'_2 \dots A'_{l+1})} , 
\\
(\sTwist_{(n,m)} \varphi)_{\MindexA_{k+1}}{}^{\MindexA'_{l+1}}
&\defeq 
\nabla_{(n,m)\, (A_1}{}^{(A'_1} \varphi_{A_2 \dots A_{k+1})}{}^{A'_2 \dots A'_{l+1})} .
\end{align}
\end{subequations}
For $n = m = 0$ they coincide with the fundamental operators of \cite{ABB:symop:2014CQGra..31m5015A}, and the indices will be suppressed in that case.  Because $U_{AA'}$ is a logarithmic derivative we have
\begin{align} 
 \sCurl_{(n,m)} &= \kappa_{1}^{n}\bar\kappa_{1'}^{m} (\sCurl \kappa_{1}^{-n}\bar\kappa_{1'}^{-m} \bullet),
\end{align}
and similarly for the other operators. In particular it follows that the commutator of extended fundamental spinor operators with $n_1 = n_2, m_1 = m_2$ reduces to the commutator of the fundamental spinor operators given in \cite[Lemma 18]{ABB:symop:2014CQGra..31m5015A}. We can also use this to commute factors of $\kappa_1$ or $\bar\kappa_{1'}$ in or out of the extended fundamental spinor operators like 
\begin{equation} \label{eq:ExtendedFundSpinOpRescaling}
\kappa_{1}^{n}\bar\kappa_{1'}^{m} (\sCurl_{(p,q)} \bullet) = (\sCurl_{(p+n,q+m)} \kappa_{1}^{n}\bar\kappa_{1'}^{m} \bullet),
\end{equation}
and similarly for the other operators. Because $\kappa_1 \propto \Psi_2^{-1/3}$ we also find 
\begin{equation}
\Psi_2^{n} (\sCurl_{(p,q)} \bullet) =(\sCurl_{(p-3n,q)} \Psi_2^{n} \bullet).
\end{equation}

Since the Killing spinor \eqref{eq:TypeDKS} is an element in $\SymSpin_{2,0}$, there are three possible actions within the algebra of symmetric spinors, namely by first contraction zero, one or two indices and then symmetrizing. Including a normalization, we define them as algebraic operators $\mathcal{K}^i:\SymSpin_{k,l} \to \SymSpin_{k-2i+2,l}, i=0,1,2$ via
\begin{subequations} \label{eq:Kprojectors}
\begin{align}
(\mathcal{K}^0 \varphi)_{\MindexA_{k+2} \MindexA'_l} \defeq{}&2\kappa_1^{-1} \kappa_{(A_1A_2}\varphi_{A_3\dots A_{k+2}) \MindexA'_l},\\
(\mathcal{K}^1 \varphi)_{\MindexA_k \MindexA'_l} \defeq{}&
\kappa_1^{-1}\kappa_{(A_1}{}^{F}\varphi_{A_2\dots A_k)F \MindexA'_l},\\
(\mathcal{K}^2 \varphi)_{\MindexA_{k-2} \MindexA'_l} \defeq{}&- \tfrac{1}{2}\kappa_1^{-1}\kappa^{CD} \varphi_{A_1\dots A_{k-2} CD \MindexA'_l}.
\end{align}
\end{subequations}
The commutators of the $\mathcal{K}$-operators and the extended fundamental operators are given in appendix~\ref{app:KopComm} and the complex conjugated operators act analogously on the primed indices.
\begin{definition}[Spin decomposition] \label{def:SpinDecomposition}
For any symmetric spinor $\varphi_{A_1 \dots A_{2s}}$ with integer $s$, define $s+1$ symmetric valence $2s$ spin-projectors $\mathcal{P}^{i}:\SymSpin_{2s,0} \to \SymSpin_{2s,0}, i = 0 \dots s$ solving
 \begin{align} \label{eq:SpinDecomposition}
  \varphi_{A_1 \dots A_{2s}} = \sum_{i=0}^{s} (\mathcal{P}^{i} \varphi)_{A_1 \dots A_{2s}},
 \end{align}
 with $(\mathcal{P}^{i} \varphi)_{A_1 \dots A_{2s}}$ depending only on the dyad components $\varphi_{s+i}$ and $\varphi_{s-i}$.
\end{definition}
These spin-projectors can be expressed in terms of the $\mathcal{K}$-operators and they are independent of the choice of principal frame, see \cite[Example 2.7]{2016arXiv160106084A}. The adjoints of the extended fundamental operators, the $\mathcal{K}$-operators and the spin-projectors are given in appendix~\ref{sec:adjoints}.

On a vacuum type D background in a principal dyad, the wave operator
\begin{align} \label{eq:DiagonalWaveOperator}
\sCurl_{(c-k,0)} \sCurlDagger_{(c,0)} 
\end{align}
is diagonal in $\SymSpin_{k,0}$ on its dyad components $\varphi_i, i = 0,\dots,k$ for any constant $c$. It follows that the spin-projectors commute with this diagonal wave operator.

\section{Spin-1} \label{sec:spin1}
In this section we consider second order symmetry operators for the source-free Maxwell equation
\begin{align} \label{eq:MaxwellEq}
(\sCurlDagger \phi)_{AA'}={}&0,
\end{align}
for the field strength $\phi_{AB} \in \SymSpin_{2,0}$. The field strength can locally be represented in terms of a real vector potential $\alpha_{AA'}$ via
\begin{align} \label{eq:MaxwellPotential}
 \phi_{AB} ={}&  (\sCurl \alpha)_{AB}.
\end{align}
The symmetry operators of this section naturally lead to complex vector potentials and we refer to remark \ref{rem:ComplexFields} for the general picture of symmetries for complex Maxwell fields.

\subsection{Maxwell symmetry operators}
We start by defining the operators 
\begin{subequations}
\begin{align}
\TMEEop&:\SymSpin_{1,1} \to \SymSpin_{1,1}, &
\TMEEop \defeq{}& \sCurlDagger \sCurl, \label{eq:Spin1Edef}\\
\TMETop&:\SymSpin_{1,1} \to \SymSpin_{2,0}, &
\TMETop \defeq{}& \sCurl. \label{eq:Spin1Tdef}
\end{align}
\end{subequations}
The first operator is the real Maxwell operator,
\begin{align} 
(\TMEEop \varphi)_{AA'}={}&\tfrac{1}{2} \nabla_{BB'}\nabla_{AA'}\varphi^{BB'}
 -  \tfrac{1}{2} \nabla_{BB'}\nabla^{BB'}\varphi_{AA'},
\end{align}
having vector potentials for source-free Maxwell solutions in its kernel and the second operator is the map \eqref{eq:MaxwellPotential} from a vector potential to its anti-self-dual field strength. Because of reality of $\TMEEop$ and properties of the adjoints of the fundamental operators given in appendix \ref{sec:adjoints}, we find
\begin{align} \label{eq:Spin1TMEEopAdjoints}
\TMEEop^\dagger  = \TMEEop^\star = \overline{\TMEEop} = \TMEEop.
\end{align}
The operator defined in \eqref{eq:Spin1Edef} differs from \cite{wald:1978PhRvL..41..203W} by a factor of $-2$. We also do not restrict the operator given in \eqref{eq:Spin1Tdef} to depend only on particular dyad components. To construct operator identities, define
 \begin{subequations}
  \begin{align} 
\TMESop&:\SymSpin_{1,1} \to \SymSpin_{2,0}, &
 \TMESop \defeq{}&\kappa_1^2 \mathcal{K}^1 \mathcal{K}^1 \sCurl_{(-2,0)}, \label{eq:Spin1Sdef}\\
\TMEOop&:\SymSpin_{2,0} \to \SymSpin_{2,0}, &
 \TMEOop \defeq{}&\kappa_1^2 \mathcal{K}^1 \mathcal{K}^1 \sCurl_{(-2,0)} \sCurlDagger, \label{eq:Spin1Odef} \\
\TSISop&:\SymSpin_{1,1} \to \SymSpin_{0,2}, &
\TSISop \defeq{}&\kappa_1 \bar{\kappa}_{1'} \overline{\mathcal{K}}^1 \sCurlDagger_{(-2,0)} \mathcal{K}^1, \label{eq:Spin1Shatdef}\\
\TSIOop&:\SymSpin_{2,0} \to \SymSpin_{0,2}, &
\TSIOop \defeq{}&\kappa_1 \bar{\kappa}_{1'} \overline{\mathcal{K}}^1 \sCurlDagger_{(-2,0)} \mathcal{K}^1 \sCurlDagger. \label{eq:Spin1Ohatdef}
 \end{align} 
 \end{subequations}
For any solution $\phi_{AB}$ to the source-free Maxwell equation \eqref{eq:MaxwellEq} it follows that $(\TMEOop \phi)_{AB}=0$ and $(\TSIOop \phi)_{A'B'}=0$. These are the covariant TME and TSI, respectively, see \eqref{eq:Spin1TMEGHPForm} and \eqref{eq:Spin1TSIGHPForm} for the GHP component form. We collect remarkable properties in
\begin{lemma} \label{lem:TMETSIadjoints}
 The TME operator is $\dagger$-self-adjoint and the TSI operator is $\star$-self-adjoint,
 \begin{align} \label{eq:Spin1TMETSIAdjoint}
 \TMEOop^\dagger = \TMEOop, &&&
 \TSIOop^\star = \TSIOop.
 \end{align}
 Furthermore the operators factorize into
 \begin{subequations} 
 \begin{align}
  \TMEOop ={}&- \TMEFop^\dagger \TMEFop, \label{eq:Spin1TMEOfactor}\\
  \TSIOop ={}&- \TSIFop^\star \TSIFop, \label{eq:Spin1TSIOfactor}
 \end{align}
 \end{subequations}
 with
\begin{subequations} 
\begin{align}
 \TMEFop&:\SymSpin_{2,0} \to \SymSpin_{1,1}, &
 \TMEFop \defeq{}& \sCurlDagger_{(1,0)} \mathcal{K}^1 (\kappa_1 \bullet),\\
 \TSIFop&:\SymSpin_{2,0} \to \SymSpin_{1,1}, &
 \TSIFop \defeq{}& \sCurlDagger \mathcal{K}^1 (\kappa_1 \bullet).
\end{align}
\end{subequations}
\end{lemma}
\begin{proof}
 The proof relies on rescalings of the form \eqref{eq:ExtendedFundSpinOpRescaling} and $\mathcal{K}$-operator commutators given in appendix~\ref{app:KopComm}. Commuting $\kappa_1\mathcal{K}^1$ from the left of the diagonal wave operator in \eqref{eq:Spin1Odef} to the right yields
\begin{align*}
\kappa_1^2 \mathcal{K}^1 \mathcal{K}^1 \sCurl_{(-2,0)} \sCurlDagger  ={}&\kappa_1 \mathcal{K}^1 \sCurl_{(-1,0)} \sCurlDagger_{(1,0)} \mathcal{K}^1 (\kappa_1 \bullet),
\end{align*}
from which \eqref{eq:Spin1TMEOfactor} follows. Commuting the $\mathcal{K}^1$ and the $\kappa_1$ in \eqref{eq:Spin1Ohatdef} to the right and arranging extended indices yields
\begin{align*}
\kappa_1 \bar{\kappa}_{1'} \overline{\mathcal{K}}^1 \sCurlDagger_{(-2,0)} \mathcal{K}^1 \sCurlDagger 
={}&\bar{\kappa}_{1'} \overline{\mathcal{K}}^1 \sCurlDagger_{(-1,0)} \sCurlDagger_{(1,0)} \mathcal{K}^1 (\kappa_1 \bullet) \\
={}&\bar{\kappa}_{1'} \overline{\mathcal{K}}^1 \sCurlDagger \sCurlDagger \mathcal{K}^1 (\kappa_1 \bullet),
\end{align*}  
from which \eqref{eq:Spin1TSIOfactor} follows. From these factorizations the self-adjointness \eqref{eq:Spin1TMETSIAdjoint} of TME and TSI follows from the property \eqref{eq:ProdAdjoint} of the adjoint of compositions.
\end{proof}
Lemma~\ref{lem:TMETSIadjoints} naturally leads to variational principles for the TME and the TSI. These, together with conservation laws, will be discussed in a separate paper. Here however, we use the self-adjointness to prove the main result of this section:

\begin{theorem} \label{thm:Spin1TMETSIOperatorIdentity}
Let $\phi_{AB}$ be a solution to the source-free Maxwell equation \eqref{eq:MaxwellEq} on a vacuum background of Petrov type D.
\begin{enumerate}
\item With the operators $\TMESop, \TMEEop, \TMEOop, \TMETop$ defined above we have the identity
\begin{align} \label{eq:Spin1TMEOperatorIdentity}
\TMESop \TMEEop ={}&\TMEOop \TMETop,
\end{align}
and the map
\begin{align}
\phi_{AB} \mapsto 
(\TMESop^\dagger \phi)_{AA'}={}&\bigl(\sCurlDagger_{(2,0)} \mathcal{K}^1 \mathcal{K}^1 (\kappa_1^2\phi)\bigr)_{AA'},\label{eq:PotentialSdaggerSpin1}
\end{align}
generates a new complex vector potential for a solution to the vacuum Maxwell equation.
\item With the operators $\TSISop, \TMEEop, \TSIOop, \TMETop$ defined above we have the identity
\begin{align} \label{eq:Spin1TSIOperatorIdentity}
\TSISop \TMEEop ={}&\TSIOop \TMETop,
\end{align}
and the map
\begin{align}
\phi_{AB} \mapsto 
(\TSISop^\star \phi)_{AA'}={}&\bigl(\overline{\mathcal{K}}^1 \sCurlDagger_{(0,2)} \mathcal{K}^1 (\kappa_1 \bar{\kappa}_{1'}\phi)\bigr)_{AA'},\label{eq:PotentialShatstarSpin1}
\end{align}
generates a new complex vector potential for a solution to the vacuum Maxwell equation.
\end{enumerate}
\end{theorem}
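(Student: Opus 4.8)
The plan is to verify the two operator identities \eqref{eq:Spin1TMEOperatorIdentity} and \eqref{eq:Spin1TSIOperatorIdentity} directly, and then invoke Theorem~\ref{thm:AdjointMethod} together with the self-adjointness of $\TMEEop_{1,1}$ from \eqref{eq:Spin1TMEEopAdjoints}. For part (1), observe that both sides are built from the same ingredients: the left side is $\TMESop_{1,1}$ applied to $\TMEEop_{1,1}\varphi = (\sCurlDagger_{2,0}\sCurl_{1,1}\varphi)_{AA'}$, while the right side is $\TMEOop_{2,0}$ applied to $\TMETop_{1,1}\varphi = (\sCurl_{1,1}\varphi)_{AB}$. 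Writing out the definitions \eqref{eq:Spin1Sdef} and \eqref{eq:Spin1Odef}, the left side is $\kappa_1^2(\mathcal{K}^1_{2,0}\mathcal{K}^1_{2,0}\sCurl_{1,1,-2}\sCurlDagger_{2,0}\sCurl_{1,1}\varphi)_{AA'}$ and the right side is $\kappa_1^2(\mathcal{K}^1_{2,0}\mathcal{K}^1_{2,0}\sCurl_{1,1,-2}\sCurlDagger_{2,0}\sCurl_{1,1}\varphi)_{AA'}$ — these are manifestly \emph{literally identical}, since $\TMEOop_{2,0}$ was defined precisely as $\TMESop_{1,1}$ with $\sCurlDagger_{2,0}$ appended on the right, and $\TMETop_{1,1} = \sCurl_{1,1}$ is exactly the operator that $\TMEEop_{1,1}$ differs from on the right. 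So the identity \eqref{eq:Spin1TMEOperatorIdentity} requires essentially no computation; it is a formal consequence of how the operators were assembled. The analogous remark handles \eqref{eq:Spin1TSIOperatorIdentity}: $\TSIOop_{2,0}$ is $\TSISop_{1,1}$ with $\sCurlDagger_{2,0}$ appended, and again $\TSISop_{1,1}\TMEEop_{1,1} = \TSISop_{1,1}\sCurlDagger_{2,0}\sCurl_{1,1} = \TSIOop_{2,0}\sCurl_{1,1} = \TSIOop_{2,0}\TMETop_{1,1}$.

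Once the identities are in place, I would apply Theorem~\ref{thm:AdjointMethod} with the substitution $(\TMESop,\TMEEop,\TMEOop,\TMETop) \leadsto (\TMESop_{1,1},\TMEEop_{1,1},\TMEOop_{2,0},\TMETop_{1,1})$ for part (1). Since $\phi_{AB}$ solves the source-free Maxwell equation, Lemma~\ref{lem:TMETSIadjoints} (or the remark just before the theorem) gives $(\TMEOop_{2,0}\phi)_{AB}=0$, i.e.\ $\phi$ lies in the kernel of the TME operator; and by \eqref{eq:Spin1TMETSIAdjoint} this operator is $\dagger$-self-adjoint, so $\TMEOop^\dagger_{2,0}\phi = \TMEOop_{2,0}\phi = 0$. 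Theorem~\ref{thm:AdjointMethod} then yields $\TMEEop^\dagger_{1,1}(\TMESop^\dagger_{2,0}\phi) = 0$, and by the self-adjointness \eqref{eq:Spin1TMEEopAdjoints} of $\TMEEop_{1,1}$ this reads $\TMEEop_{1,1}(\TMESop^\dagger_{2,0}\phi) = 0$. Since $\TMEEop_{1,1}$ annihilates exactly the vector potentials of vacuum Maxwell fields, $\TMESop^\dagger_{2,0}\phi$ is such a (complex) vector potential. The explicit form \eqref{eq:PotentialSdaggerSpin1} is then obtained by computing the $\dagger$-adjoint of \eqref{eq:Spin1Sdef} using the adjoint formulas in appendix~\ref{sec:adjoints}: transpose the composition, replace each factor by its adjoint (so $\sCurl_{1,1,-2}$ becomes $-\sDiv$-type, $\mathcal{K}^1$ becomes $\mathcal{K}^1$ up to the stated signs, etc.), and then use the rescaling rule \eqref{eq:ExtendedFundSpinOpRescaling} to move the $\kappa_1$ factors past the operators, collecting them as $\kappa_1^2$ and adjusting the extended index from $-2$ to $+2$, arriving at $\bigl(\sCurlDagger_{2,0,2}\mathcal{K}^1_{2,0}\mathcal{K}^1_{2,0}(\kappa_1^2\phi)\bigr)_{AA'}$.

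For part (2) the structure is parallel but the relevant adjoint is the $\star$-adjoint, as emphasized after \eqref{eq:InnerProd2}: one applies Theorem~\ref{thm:AdjointMethod} in its $\star$-version with $(\TSISop_{1,1},\TMEEop_{1,1},\TSIOop_{2,0},\TMETop_{1,1})$, uses $(\TSIOop_{2,0}\phi)_{A'B'}=0$ for Maxwell solutions, the $\star$-self-adjointness $\TSIOop^\star_{2,0}=\TSIOop_{2,0}$ from \eqref{eq:Spin1TMETSIAdjoint}, and the fact $\TMEEop^\star_{1,1}=\TMEEop_{1,1}$ from \eqref{eq:Spin1TMEEopAdjoints}, to conclude $\TMEEop_{1,1}(\TSISop^\star_{2,0}\phi)=0$. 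The explicit expression \eqref{eq:PotentialShatstarSpin1} follows by taking the $\star$-adjoint of \eqref{eq:Spin1Shatdef}: reverse the composition, replace $\kappa_1\bar\kappa_{1'}$-weighted operators and each $\mathcal{K}^1$, $\overline{\mathcal{K}}^1$, $\sCurlDagger_{1,1,-2}$ by their $\star$-adjoints from appendix~\ref{sec:adjoints}, then move $\kappa_1\bar\kappa_{1'}$ through using \eqref{eq:ExtendedFundSpinOpRescaling} and its conjugate to land on $\bigl(\overline{\mathcal{K}}^1_{1,1}\sCurlDagger_{2,0,0,2}\mathcal{K}^1_{2,0}(\kappa_1\bar\kappa_{1'}\phi)\bigr)_{AA'}$.

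The only genuine work is bookkeeping: checking that the valences and extended indices line up correctly in the adjoint computations, and that the signs coming from the adjoint formulas for $\sCurl$, $\sCurlDagger$, $\mathcal{K}^i$ and from the rescaling rule \eqref{eq:ExtendedFundSpinOpRescaling} combine to produce exactly the stated expressions \eqref{eq:PotentialSdaggerSpin1} and \eqref{eq:PotentialShatstarSpin1}. I do not anticipate a conceptual obstacle — the hard structural fact (self-adjointness of the TME and TSI operators) has already been established in Lemma~\ref{lem:TMETSIadjoints} — so the main obstacle is simply carrying out the adjoint-of-a-composition calculation carefully and keeping track of which inner product ($\dagger$ versus $\star$) is in play in each of the two parts.
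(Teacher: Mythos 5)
Your proposal is correct and follows essentially the same route as the paper: the identities \eqref{eq:Spin1TMEOperatorIdentity} and \eqref{eq:Spin1TSIOperatorIdentity} are immediate because $\TMESop_{1,1}\TMEEop_{1,1}$ and $\TMEOop_{2,0}\TMETop_{1,1}$ (and likewise $\TSISop_{1,1}\TMEEop_{1,1}$ and $\TSIOop_{2,0}\TMETop_{1,1}$) are two groupings of one and the same third order composition, and the rest is Theorem~\ref{thm:AdjointMethod} combined with the self-adjointness statements \eqref{eq:Spin1TMEEopAdjoints} and \eqref{eq:Spin1TMETSIAdjoint}, with the $\star$-version used for part (2). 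One small slip in your adjoint bookkeeping: the $\dagger$-adjoint of $\sCurl_{1,1,-2}$ is $\sCurlDagger_{2,0,2}$, not a $-\sDiv$-type operator, but since your final expressions agree with \eqref{eq:PotentialSdaggerSpin1} and \eqref{eq:PotentialShatstarSpin1} this does not affect the argument.
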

\begin{proof}
The operators can be viewed as different parts of third order operators,
\begin{align}
\rlap{$\overbrace{\phantom{\kappa_1^2\mathcal{K}^1 \mathcal{K}^1 \sCurl_{(-2,0)}}}^{\TMESop}$}
\rlap{$\underbrace{\phantom{\kappa_1^2 \mathcal{K}^1 \mathcal{K}^1  \sCurl_{(-2,0)}\sCurlDagger}}_{\TMEOop}$}
\kappa_1^2 \mathcal{K}^1 \mathcal{K}^1 \sCurl_{(-2,0)} 
\rlap{$\overbrace{\phantom{\sCurlDagger \sCurl}}^{\TMEEop}$} 
\sCurlDagger 
\rlap{$\underbrace{\phantom{\sCurl_{()}}}_{\TMETop}$}
\sCurl, 
&&
\rlap{$\overbrace{\phantom{\kappa_1 \bar{\kappa}_{1'}\overline{\mathcal{K}}^1 \sCurlDagger_{(-2,0)} \mathcal{K}^1}}^{\TSISop}$}
\rlap{$\underbrace{\phantom{\kappa_1 \bar{\kappa}_{1'}\overline{\mathcal{K}}^1 \sCurlDagger_{(-2,0)} \mathcal{K}^1 \sCurlDagger}}_{\TSIOop}$}
\kappa_1 \bar{\kappa}_{1'}\overline{\mathcal{K}}^1 \sCurlDagger_{(-2,0)} \mathcal{K}^1
\rlap{$\overbrace{\phantom{\sCurlDagger \sCurl}}^{\TMEEop}$} 
\sCurlDagger 
\rlap{$\underbrace{\phantom{\sCurlDagger_{()}}}_{\TMETop}$}
\sCurl,
\end{align}
so the identities \eqref{eq:Spin1TMEOperatorIdentity} and \eqref{eq:Spin1TSIOperatorIdentity} follow by construction. The rest follows from theorem~\ref{thm:AdjointMethod} by using \eqref{eq:Spin1TMEEopAdjoints} and \eqref{eq:Spin1TMETSIAdjoint}.
\end{proof}

\begin{remark} \label{rem:Spin1RelToClassification}
In \cite{ABB:symop:2014CQGra..31m5015A}, among others, second order symmetry operators for the source-free Maxwell equation \eqref{eq:MaxwellEq} are completely classified. In particular on a vacuum Petrov type D background the list contains, beside second Lie derivatives along isometries, one linear operator and one anti-linear operator. Both operators were presented in terms of complex vector potentials, $A_{AA'}$ and $B_{AA'}$, and a comparison reveals
\begin{subequations} 
\begin{align}
(\TMESop^\dagger \phi)_{AA'}
 ={}& 
 \bigl(\sCurlDagger_{(2,0)} \mathcal{K}^1 \mathcal{K}^1 (\kappa_1^2\phi)\bigr)_{AA'} \nonumber \\
 ={}&\bigl(\mathcal{K}^1 \sCurlDagger_{(2,0)} \mathcal{K}^1 (\kappa_1^2\phi)\bigr)_{AA'} \nonumber \\
 ={}& B_{AA'}, \\
(\TSISop^\star \phi)_{AA'}
={}&\bigl(\overline{\mathcal{K}}^1 \sCurlDagger_{(0,2)} \mathcal{K}^1 (\kappa_1 \bar{\kappa}_{1'}\phi)\bigr)_{AA'} \nonumber \\
={}&\tfrac{1}{3} (\sCurl \bar{\kappa})^{B}{}_{A'} \bigl(\mathcal{K}^1 (\kappa_1\phi)\bigr)_{AB}
 + \bar{\kappa}_{1'} \bigl(\overline{\mathcal{K}}^1 \sCurlDagger  \mathcal{K}^1 (\kappa_1\phi)\bigr)_{AA'} \nonumber \\
={}&A_{AA'}. 
\end{align}
\end{subequations} 
This shows that all irreducible, in the sense that they do not factor into first order symmetry operators, second order symmetry operators for the Maxwell equation \eqref{eq:MaxwellEq} on vacuum type D backgrounds follow from the adjoint operator argument.
\end{remark}
We collect further properties of the complex vector potentials in
\begin{corollary} \label{cor:Spin1PotentialDerivatives}
Let $\phi_{AB}$ be a solution to the source-free Maxwell equation \eqref{eq:MaxwellEq} on a vacuum background of Petrov type D.
\begin{enumerate} 
 \item The vector potential \eqref{eq:PotentialSdaggerSpin1} satisfies
 \begin{align}
 (\sDiv_{(2,0)} \TMESop^{\dagger}\phi)={}&0,\\
 (\TMETop \TMESop^{\dagger}\phi)_{AB}={}&(\TMEOop \phi)_{AB} = 0. \label{eq:Spin1CurlSdg}
 \end{align}
 The first equation can be interpreted as a generalized Lorenz gauge and the second one states that the anti-self-dual field strength of the vector potential vanishes on-shell. The self-dual field strength reads
 \begin{align} \label{eq:Spin1CurlDgSdg}
 \bar{\chi}_{A'B'}={}&(\overline{\TMETop} \TMESop^{\dagger}\phi)_{A'B'}.
 \end{align}
 This is the anti-linear symmetry operator of \cite{ABB:symop:2014CQGra..31m5015A}.  In particular it follows that taking the real part of the complex vector potential \eqref{eq:PotentialSdaggerSpin1} does not alter the (self-dual) field strength due to the complex conjugate of \eqref{eq:Spin1CurlSdg}. 
 \item The vector potential \eqref{eq:PotentialShatstarSpin1} satisfies
 \begin{align} \label{eq:Spin1CurlDgShatst}
 (\overline{\TMETop} \TSISop^{\star}\phi)_{A'B'}={}&(\TSIOop \phi)_{A'B'} = 0, 
 \end{align}
 which states that its self-dual field strength vanishes on-shell. The anti-self-dual field strength reads
 \begin{align} \label{eq:Spin1CurlShatst}
 \psi_{AB}={}&(\TMETop \TSISop^{\star}\phi)_{AB}.
 \end{align}
 This is the linear symmetry operator of \cite{ABB:symop:2014CQGra..31m5015A}.   Here again it follows that taking the real part of the complex vector potential \eqref{eq:PotentialSdaggerSpin1} does not alter the (anti-self-dual) field strength due to the complex conjugate of \eqref{eq:Spin1CurlDgShatst}.
\end{enumerate}
\end{corollary}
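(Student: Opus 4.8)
The plan is to reduce every assertion to the operator identities already established (Lemma~\ref{lem:TMETSIadjoints}, Theorem~\ref{thm:Spin1TMETSIOperatorIdentity}) together with two elementary facts. \emph{Fact (i):} on a vacuum background $\sDiv_{1,1}\sCurlDagger_{2,0}$ annihilates every symmetric valence $(2,0)$ spinor $\chi_{AB}$ --- the differential operator $\nabla^{AA'}\nabla^{B}{}_{A'}$ is antisymmetric in $A,B$ up to curvature, and the curvature correction is the symmetrised box $\Box_{(AB)}$ contracted with the symmetric spinor $\chi_{AB}$, which vanishes because $\Psi_{ABCD}$ is totally symmetric and $\Phi_{ABA'B'}=\Lambda=0$. \emph{Fact (ii):} in a principal dyad $\mathcal{K}^1_{2,0}$ acts as a numerical diagonal operator on the three dyad components (indeed $\mathcal{K}^1_{2,0}\mathcal{K}^1_{2,0}=\mathcal{P}^1_{2,0}$, the spin-projector of Definition~\ref{def:SpinDecomposition}), so $\mathcal{K}^1_{2,0}\mathcal{K}^1_{2,0}$ commutes with the diagonal wave operator \eqref{eq:DiagonalWaveOperator}. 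As a preliminary step I would bring the two potentials into normal form by moving the $\kappa_1,\bar\kappa_{1'}$ factors through the extended operators with \eqref{eq:ExtendedFundSpinOpRescaling}: \eqref{eq:PotentialSdaggerSpin1} becomes $\kappa_1^2(\sCurlDagger_{2,0}\mathcal{K}^1_{2,0}\mathcal{K}^1_{2,0}\phi)_{AA'}$ and \eqref{eq:PotentialShatstarSpin1} becomes $\kappa_1\bar\kappa_{1'}(\overline{\mathcal{K}}^1_{1,1}\sCurlDagger_{2,0,-1,1}\mathcal{K}^1_{2,0}\phi)_{AA'}$.

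For item (1): applying $\sDiv_{1,1,2}$ to the normal form and pulling the $\kappa_1^2$ out by \eqref{eq:ExtendedFundSpinOpRescaling} (which simultaneously reduces the extended indices to zero) leaves $\kappa_1^2(\sDiv_{1,1}\sCurlDagger_{2,0}\mathcal{K}^1_{2,0}\mathcal{K}^1_{2,0}\phi)$, which vanishes by Fact (i) since $\mathcal{K}^1_{2,0}\mathcal{K}^1_{2,0}\phi$ is a symmetric $(2,0)$ spinor; this is the generalised Lorenz gauge. Applying $\sCurl_{1,1}$ to the normal form and pulling out $\kappa_1^2$ gives $\kappa_1^2(\sCurl_{1,1,-2}\sCurlDagger_{2,0}\mathcal{K}^1_{2,0}\mathcal{K}^1_{2,0}\phi)_{AB}$; here $\sCurl_{1,1,-2}\sCurlDagger_{2,0}$ is precisely the diagonal wave operator \eqref{eq:DiagonalWaveOperator} with $n=2$, $c=0$, so by Fact (ii) it commutes with $\mathcal{K}^1_{2,0}\mathcal{K}^1_{2,0}$, and the result equals $\kappa_1^2(\mathcal{K}^1_{2,0}\mathcal{K}^1_{2,0}\sCurl_{1,1,-2}\sCurlDagger_{2,0}\phi)_{AB}=(\TMEOop_{2,0}\phi)_{AB}$ by \eqref{eq:Spin1Odef}, which is $0$ because $(\sCurlDagger_{2,0}\phi)_{AA'}=0$ by \eqref{eq:MaxwellEq}; this is \eqref{eq:Spin1CurlSdg}. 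Then \eqref{eq:Spin1CurlDgSdg} is only the definition of the self-dual field strength of the potential, and $(\TMESop^\dagger_{2,0}\phi)_{AA'}=B_{AA'}$ --- hence that \eqref{eq:Spin1CurlDgSdg} realises the anti-linear operator of \cite{ABB:symop:2014CQGra..31m5015A} --- is the rewriting already carried out in Remark~\ref{rem:Spin1RelToClassification}. Finally, writing $V_{AA'}=(\TMESop^\dagger_{2,0}\phi)_{AA'}$ and using $\overline{(\sCurl_{1,1}\varphi)}=(\sCurlDagger_{1,1}\bar\varphi)$, the conjugate of \eqref{eq:Spin1CurlSdg} reads $(\sCurlDagger_{1,1}\bar V)_{A'B'}=0$, so $(\sCurlDagger_{1,1}(V+\bar V))_{A'B'}=(\sCurlDagger_{1,1}V)_{A'B'}=\bar\chi_{A'B'}$; this is the invariance of the self-dual field strength under passing to the real potential.

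Item (2) follows the same scheme with the roles of $\sCurl_{1,1}$ and $\sCurlDagger_{1,1}$, and of the unprimed and primed slots, interchanged. Applying $\sCurlDagger_{1,1}$ to the normal form of \eqref{eq:PotentialShatstarSpin1}, I would commute it through $\overline{\mathcal{K}}^1_{1,1}$ and the extended $\sCurlDagger_{2,0,-1,1}$ using the $\mathcal{K}$-commutators of Appendix~\ref{app:KopComm} and realign the extended indices so as to recognise the definition \eqref{eq:Spin1Ohatdef} of $\TSIOop_{2,0}$; the latter again carries the factor $(\sCurlDagger_{2,0}\phi)_{AA'}=0$, which gives \eqref{eq:Spin1CurlDgShatst}. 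Equation \eqref{eq:Spin1CurlShatst} is then the definition of the complementary anti-self-dual field strength, $(\TSISop^\star_{2,0}\phi)_{AA'}=A_{AA'}$ (hence the linear operator of \cite{ABB:symop:2014CQGra..31m5015A}) is Remark~\ref{rem:Spin1RelToClassification}, and the real-part statement follows from the conjugate of \eqref{eq:Spin1CurlDgShatst} via $\overline{(\sCurlDagger_{1,1}\varphi)}=(\sCurl_{1,1}\bar\varphi)$.

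I expect the only real obstacle to be the index bookkeeping in item (2): on the TME side $\sCurl_{1,1}$ meets the purely algebraic block $\mathcal{K}^1_{2,0}\mathcal{K}^1_{2,0}=\mathcal{P}^1_{2,0}$ and the argument is immediate, but on the TSI side $\sCurlDagger_{1,1}$ must be commuted all the way through $\overline{\mathcal{K}}^1_{1,1}$ and an extended $\sCurlDagger$ carrying the mixed indices $(-1,1)$, with the $\kappa_1$ and $\bar\kappa_{1'}$ powers to be kept straight throughout; a single misplaced extended index or $\kappa$-power would break the match with $\TSIOop_{2,0}$. Everything else is a definition, a one-line consequence of reality, or Fact (i).
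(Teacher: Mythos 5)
Your item (1) is sound and complete: the rescaling \eqref{eq:ExtendedFundSpinOpRescaling}, the vacuum identity $\sDiv_{1,1}\sCurlDagger_{2,0}=0$ on symmetric $(2,0)$ spinors, and the fact that $\mathcal{K}^1_{2,0}\mathcal{K}^1_{2,0}=\mathcal{P}^1_{2,0}$ commutes with the diagonal operator $\sCurl_{1,1,-2}\sCurlDagger_{2,0}$ of \eqref{eq:DiagonalWaveOperator} do yield the generalized Lorenz gauge, the identity \eqref{eq:Spin1CurlSdg}, and the real-part remark.

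Item (2), however, is only a plan, and the assertion that it ``follows the same scheme'' hides a genuine gap at exactly the step you flag. The mechanism that made item (1) immediate --- an algebraic spin projector commuting with the diagonal wave operator --- has no counterpart here: $\sCurlDagger_{1,1}$ composed with the extended $\sCurlDagger_{2,0,\cdot,\cdot}$ is not of the form \eqref{eq:DiagonalWaveOperator}, and the interposed $\overline{\mathcal{K}}^1_{1,1}$ acts on the primed index. Carrying out your prescription, the conjugated commutator from appendix~\ref{app:KopComm} gives $\sCurlDagger_{1,1}\overline{\mathcal{K}}^1_{1,1}=\overline{\mathcal{K}}^1_{0,2}\sCurlDagger_{1,1}+\tfrac14\overline{\mathcal{K}}^0_{0,0}\sDiv_{1,1,0,2}$; the divergence term does vanish by your Fact (i), but the surviving term is $\overline{\mathcal{K}}^1_{0,2}\sCurlDagger_{1,1}\sCurlDagger_{2,0,0,2}\mathcal{K}^1_{2,0}(\kappa_1\bar\kappa_{1'}\phi)$, whose extended indices and $\bar\kappa_{1'}$ powers do \emph{not} match the factorized form $\bar\kappa_{1'}\overline{\mathcal{K}}^1_{0,2}\sCurlDagger_{1,1}\sCurlDagger_{2,0}\mathcal{K}^1_{2,0}(\kappa_1\phi)$ of $\TSIOop_{2,0}$ from Lemma~\ref{lem:TMETSIadjoints} (equivalently the definition \eqref{eq:Spin1Ohatdef}). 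Working out the mismatch with $\zeta_{AB}=\bar\kappa_{1'}^{-1}(\mathcal{K}^1_{2,0}\kappa_1\phi)_{AB}$, the first-order $\bar U$-terms cancel, but a zeroth-order remainder of the form $\overline{\mathcal{K}}^1_{0,2}\bigl[\bar\kappa_{1'}^2\bigl(\nabla^{A}{}_{(A'}\bar{U}^{|B|}{}_{B')}-\bar{U}^{A}{}_{(A'}\bar{U}^{|B|}{}_{B')}\bigr)\zeta_{AB}\bigr]$ is left over. Being algebraic in $\phi$, it cannot be discarded by invoking the Maxwell equation; it must vanish identically, and establishing that requires the Petrov type D / Killing spinor structure (the behaviour of $\nabla\bar{U}$, equivalently of $\nabla\bar\kappa_{A'B'}$ and $\bar\xi_{AA'}$, or the corresponding GHP equations), which is input beyond the $\mathcal{K}$-commutators and rescalings you cite. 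Until this cancellation is exhibited, \eqref{eq:Spin1CurlDgShatst} --- and with it the interpretation of \eqref{eq:PotentialShatstarSpin1} as having vanishing self-dual field strength and the real-part statement of item (2) --- remains unproved in your argument.
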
 
In section \ref{sec:Spin2} we find that the complex potentials (metrics) in the linearized gravity case on a type D background do have both self-dual and anti-self-dual field strength (curvature) and hence taking the real part does have an effect in that case.
\begin{remark} \label{rem:ComplexFields} 
In general, for a complex vector potential $\alpha_{AA'}$, there is an anti-self-dual and a self-dual field strength
\begin{align}
\phi_{AB}={}&(\sCurl \alpha)_{AB}, &&&
\bar{\pi}_{A'B'}={}&(\sCurlDagger \alpha)_{A'B'},
\end{align}
solving the left and right Maxwell equations
\begin{align}
(\sCurlDagger \phi)_{AA'}={}&0, &&&
(\sCurl \bar{\pi})_{AA'}={}&0.
\end{align}
The equations are not coupled and the last equation can be read as the complex conjugate of $(\sCurlDagger \pi)_{AA'}=0$, so it is sufficient to analyze one of the equations. Also because $\TMEEop$ is a real operator, the above argument goes through for $\bar{\pi}_{A'B'}$ with the symmetry operators being complex conjugates of \eqref{eq:Spin1CurlDgSdg} and \eqref{eq:Spin1CurlShatst}.
Therefore, with constants $c_1,c_2,c_3,c_4$, the general irreducible symmetry operator reads
\begin{align}
{\left(\begin{array}{c}
\psi\\
\bar{\chi}
\end{array}\right)}={}&{\left(\begin{array}{cc}
c_1 \TMETop \TSISop^{\star} & c_2\TMETop \TMESop^{\star}\\
c_3 \overline{\TMETop} \TMESop^{\dagger} & c_4 \overline{\TMETop} \TSISop^{\dagger}
\end{array}\right)}{\left(\begin{array}{c}
\phi\\
\bar{\pi}
\end{array}\right)}.
\end{align}
\end{remark}

The operator $\mathcal{K}^1$ performs a sign-flip on one of the extreme components, for example in $\SymSpin_{2,0}$ the spinor $(\mathcal{K}^1\phi)_{AB}$ has GHP components
$ (\mathcal{K}^1\phi)_{0}=\phi_{0}, 
(\mathcal{K}^1\phi)_{1}=0,
(\mathcal{K}^1\phi)_{2}=- \phi_{2}$.
We note the following operator identity for $\TMESop^{\dagger}$ acting on the sign-flipped field,
\begin{align} \label{eq:Spin1SdgK1phi20}
(\TMESop^{\dagger}\mathcal{K}^1 \varphi)_{AA'}={}&(\kappa_1^2\mathcal{K}^1 \sCurlDagger  \varphi)_{AA'}
 + (\sTwist \mathcal{K}^2 \kappa_1^2\varphi)_{AA'}.
\end{align}
If $\varphi_{AB}$ is a source-free Maxwell field, it states that the left-hand side is a pure gauge vector potential because the right-hand side is a gradient (the first term vanishes for Maxwell fields). The analog for linearized gravity has recently been derived in \cite{2016arXiv160106084A}, see \eqref{eq:Spin2NiceIdentity} below, and will be used to construct a symmetry operator from the TSI in subsection~\ref{sec:Spin2TSI}. Next, we provide an interpretation of the two symmetry operators \eqref{eq:Spin1CurlDgSdg} and \eqref{eq:Spin1CurlShatst} in the sense of connecting it to the concepts of potentials and separability.

\subsection{Hertz potentials, Debye potentials and Teukolsky separability} \label{sec:Spin1Interpretation}

To discuss the anti-linear symmetry operator \eqref{eq:Spin1CurlDgSdg} we briefly recall the Hertz potential construction for spin-1 on a curved background along the lines of \cite[Section III]{1979PhRvD..19.1641K}. Let $\bar P_{A'B'} \in \SymSpin_{0,2}$ be a Hertz potential, i.e. solving the Hertz equation
\begin{align} \label{eq:Spin1HertzEquation}
(\sCurlDagger \sCurl \overline{P})_{A'B'}={}&-2 (\sCurlDagger \mathcal{G})_{A'B'},
\end{align}
with some (arbitrary) Nisbet gauge spinor $\mathcal{G}_{AA'}$ (on Minkowski space, setting $\mathcal{G}_{AA'}=0$, the Hertz equation is given by $\square\bar{P}_{A'B'}=0$). Then $\alpha_{AA'}$, generated via the Hertz map
\begin{align} \label{eq:Spin1HertzMap}
\alpha_{AA'} ={}& (\sCurl \overline{P})_{AA'} + 2 \mathcal{G}_{AA'},
\end{align} 
is a complex vector potential and its anti-self-dual field strength
\begin{align} \label{eq:Spin1HertzMapFieldStrength}
\chi_{AB}={}& (\sCurl \alpha)_{AB},
\end{align}
solves the Maxwell equation \eqref{eq:MaxwellEq} on any background as follows from a commutator. The self-dual field strength $(\sCurlDagger \alpha)_{A'B'}$ vanishes identically because of \eqref{eq:Spin1HertzEquation}. This implies that taking the real part of \eqref{eq:Spin1HertzMap} does not change the field strength \eqref{eq:Spin1HertzMapFieldStrength}. 

Restricting to a vacuum type D background and choosing
\begin{align} \label{eq:DiagonalGauge}
\mathcal{G}_{AA'}={}&\overline{P}_{A'}{}^{B'} \overline{U}_{B'A},
\end{align}
the Hertz equation \eqref{eq:Spin1HertzEquation} becomes
\begin{align} \label{eq:Spin1HertzEquationDiagonal}
(\sCurlDagger \sCurl_{(0,-2)} \overline{P})_{A'B'}={}&0,
\end{align}
which is a diagonal wave operator, c.f. the complex conjugate of \eqref{eq:DiagonalWaveOperator}. It is actually the complex conjugate of the TME \eqref{eq:Spin1Odef} if we set
\begin{align} \label{eq:Spin1MaxwellHertzPotential}
\overline{P}_{A'B'}= \bar{\kappa}_{1'}^2 (\overline{\mathcal{K}}^1 \overline{\mathcal{K}}^1 \bar{\phi})_{A'B'}.
\end{align} 
With this choice of $\overline{P}_{A'B'}$ and the gauge \eqref{eq:DiagonalGauge}, the field strength \eqref{eq:Spin1HertzMapFieldStrength} of the Hertz map is the complex conjugate of the anti-linear symmetry operator \eqref{eq:Spin1CurlDgSdg}. Note that only the gauge choice \eqref{eq:DiagonalGauge} converts the Hertz equation into the TME and only in this case can the Hertz map be interpreted as a symmetry operator. 

Because the Hertz equation in the form \eqref{eq:Spin1HertzEquationDiagonal} is diagonal, one can choose $\overline{P}_{A'B'}$ to have only one non-vanishing dyad component. This weighted scalar is called Debye potential and solves (up to rescaling by $\bar\kappa_{1'}$) either one of the scalar TMEs \eqref{eq:Spin1TMEGHPForm} in case it is an extreme component, or the Fackerell-Ipser equation $(\sDiv \sTwist_{(-2,0)} \mathcal{K}^2 \phi)=0$ in case it is the middle component. If one chooses the Hertz potential to be a (rescaled) Maxwell field then each of its three components, used as Debye potential, lead to the same new solution \eqref{eq:Spin1HertzMap}. To see this in terms of the symmetry operator it is important to make use of the freedom to modify the symmetry operator by terms vanishing due to the field equations. The situation is similar to the scalar wave operator $\square$ on Kerr and its Carter symmetry operator $Q = \nabla^a K_{ab}\nabla^b$ with $K_{ab}$ the Killing tensor. The linear combination $Q \pm \Sigma \square$, for a specific function $\Sigma$, is "purely radial" or "purely angular" and leads directly to separation of variables. We present the components of the anti-linear symmetry operator \eqref{eq:Spin1CurlDgSdg} in \eqref{eq:Spin1CurlDgSdgComps} in the appendix and here instead look at an alternative form in 
\begin{lemma} 
The anti-linear symmetry operator \eqref{eq:Spin1CurlDgSdg} can be represented in the form
\begin{align} \label{eq:Spin1CurlDgSdgModified}
\bar{\chi}_{A'B'}={}&(\overline{\TMETop} \TMESop^{\dagger}\phi)_{A'B'}
 \pm (\overline{\TMETop} \TMESop^{\dagger}\mathcal{K}^1 \phi)_{A'B'},
\end{align}
with the second term vanishing for solutions of the field equations, see \eqref{eq:Spin1SdgK1phi20}. For the plus sign the components are (see appendix~\ref{sec:GHPform})
\begin{subequations}  \label{eq:Spin12ndtypeSymGHP}
\begin{align} \label{eq:Spin12ndtypeSymGHPphi0}
\bar{\chi}_{0'}={}&2 \edt' \edt' (\kappa_1^2\phi_{0}), \nonumber \\
\bar{\chi}_{1'}={}&2 (\tho' \edt'
 + \bar{\tau}\tho' )(\kappa_1^2\phi_{0}), \\
\bar{\chi}_{2'}={}&2 \tho' \tho' (\kappa_1^2\phi_{0}),\nonumber
\end{align}
depending only on $\phi_0$ and for the minus sign they are
\begin{align} \label{eq:Spin12ndtypeSymGHPphi2}
\bar{\chi}_{0'}={}&2 \tho \tho (\kappa_1^2\phi_{2}),\nonumber \\
\bar{\chi}_{1'}={}&2 (\tho \edt
 + \bar{\tau}'\tho )(\kappa_1^2\phi_{2}),\\
\bar{\chi}_{2'}={}&2 \edt \edt (\kappa_1^2\phi_{2}),\nonumber
\end{align}
\end{subequations} 
depending only on $\phi_2$. 
\end{lemma}
Equations \eqref{eq:Spin12ndtypeSymGHP} are the Debye maps previously discussed e.g. in \cite{1979PhRvD..19.1641K}, \cite{wald:1978PhRvL..41..203W}, \cite{1985:TDCastillo}. The difference between \eqref{eq:Spin12ndtypeSymGHPphi0} and \eqref{eq:Spin12ndtypeSymGHPphi2} is the TSI \eqref{eq:Spin1TSIGHPForm} which leads us to interpret the anti-linear symmetry operator \eqref{eq:Spin1CurlDgSdg} as a covariant characterization of the Teukolsky-Starobinski constant, see \cite{kalnins:miller:1989} for an explicit proof in the Kerr case. From the above it also follows that the gradient term on the right-hand side of \eqref{eq:Spin1SdgK1phi20} maps between ingoing and outgoing radiation gauge.

The extreme components of the operator \eqref{eq:Spin1CurlDgSdg} can alternatively be made "purely angular" or "purely radial" by choosing another form as follows.
\begin{lemma} 
The anti-linear symmetry operator \eqref{eq:Spin1CurlDgSdg} can be represented in the form
\begin{align}
\bar{\chi}_{A'B'}={}&(\overline{\TMETop} \TMESop^{\dagger}\phi)_{A'B'}
 \pm (\overline{\mathcal{K}}^1 \kappa_1^2\sCurlDagger_{(-2,0)} \mathcal{K}^1 \sCurlDagger \phi)_{A'B'},
\end{align}
with the second term vanishing on solutions. For the plus sign the components are
\begin{subequations} \label{eq:Spin12ndtypeSymGHPangular} 
\begin{align}
\bar{\chi}_{0'}={}&2 \edt' \edt' (\kappa_1^2\phi_{0}),\\
\bar{\chi}_{1'}={}&(\tho \edt
 + \bar{\tau}'\tho )(\kappa_1^2\phi_{2})
 + (\tho' \edt'
 + \bar{\tau}\tho' )(\kappa_1^2\phi_{0}),\\
\bar{\chi}_{2'}={}&2 \edt \edt (\kappa_1^2\phi_{2}),
\end{align}
\end{subequations} 
while for the minus sign they are
\begin{subequations} \label{eq:Spin12ndtypeSymGHPradial} 
\begin{align}
\bar{\chi}_{0'}={}&2 \tho \tho (\kappa_1^2\phi_{2}),\\
\bar{\chi}_{1'}={}&(\tho \edt
 + \bar{\tau}'\tho )(\kappa_1^2\phi_{2})
 + (\tho' \edt'
 + \bar{\tau}\tho' )(\kappa_1^2\phi_{0}),\\
\bar{\chi}_{2'}={}&2 \tho' \tho' (\kappa_1^2\phi_{0}).
\end{align}
\end{subequations} 
\end{lemma} 
Note that the four representations \eqref{eq:Spin12ndtypeSymGHPphi0}, \eqref{eq:Spin12ndtypeSymGHPphi2}, \eqref{eq:Spin12ndtypeSymGHPangular}, \eqref{eq:Spin12ndtypeSymGHPradial} of the anti-linear symmetry operator lead to the same field $\bar{\chi}_{A'B'}$. This freedom in the representation may be important in the analysis of further properties of the symmetry operator.

Next, we consider the linear symmetry operator \eqref{eq:Spin1CurlShatst}. Its GHP components are given in \eqref{eq:Spin1CurlShatstComps} and here we consider again an alternative form with the additional term being, up to a multiplying function, the TME operator \eqref{eq:Spin1Odef}.
\begin{lemma} 
The linear symmetry operator \eqref{eq:Spin1CurlShatst} can be represented in the form
\begin{align} \label{eq:spin11sttypeSymMod}
\psi_{AB} ={}&(\TMETop \TSISop^{\star}\phi)_{AB}
 \pm \kappa_1 \bar{\kappa}_{1'} (\mathcal{K}^1 \mathcal{K}^1 \sCurl_{(-2,0)} \sCurlDagger \phi)_{AB},
\end{align}
with the second term vanishing on solutions. It leads to the "purely angular" extreme components
\begin{subequations} \label{eq:Spin11sttypeSymGHPAngular}
\begin{align}
\psi_{0}={}& 2 \edt \bigl(\kappa_1 \bar{\kappa}_{1'}(\edt'
 - 2 \tau')\bigr)\phi_{0} + \tfrac{1}{3} (\bar{\kappa}_{1'} \mathcal{L}_{\xi} -   \kappa_1 \mathcal{L}_{\bar{\xi}})\phi_{0},\\
\psi_{2}={}& 2 \edt' \bigl(\kappa_1 \bar{\kappa}_{1'}(\edt
 - 2 \tau)\bigr)\phi_{2} - \tfrac{1}{3} (\bar{\kappa}_{1'} \mathcal{L}_{\xi} - \kappa_1 \mathcal{L}_{\bar{\xi}})\phi_{2},
\end{align}
\end{subequations} 
for the plus sign and to the "purely radial" extreme components
\begin{subequations} \label{eq:Spin11sttypeSymGHPRadial}
\begin{align}
\psi_{0}={}& 2 \tho \bigl(\kappa_1 \bar{\kappa}_{1'}(\tho'
 - 2 \rho')\bigr)\phi_{0} 
 - \tfrac{1}{3} (\bar{\kappa}_{1'} \mathcal{L}_{\xi}
 + \kappa_1 \mathcal{L}_{\bar{\xi}})\phi_{0},\\
\psi_{2}={}& 2 \tho' \bigl(\kappa_1 \bar{\kappa}_{1'}(\tho
 - 2 \rho)\bigr)\phi_{2} + \tfrac{1}{3} (\bar{\kappa}_{1'} \mathcal{L}_{\xi}
 + \kappa_1 \mathcal{L}_{\bar{\xi}} )\phi_{2},
\end{align}
\end{subequations} 
for the minus sign. 
\end{lemma} 
See \eqref{eq:GHPLieXiphi20} for the Lie derivative of weighted scalars. The difference between \eqref{eq:Spin11sttypeSymGHPAngular} and \eqref{eq:Spin11sttypeSymGHPRadial} is the TME \eqref{eq:Spin1TMEGHPForm}, see also \cite[Section 5.4]{aksteiner:thesis}. This leads us to interpret the linear symmetry operator \eqref{eq:Spin1CurlShatst} as a covariant characterization of the TME separability (and therefore of the Teukolsky separation constant, see \cite{kalnins:miller:1989} for an explicit proof in the Kerr case). 
 
Summarized, we succeeded calculating both irreducible symmetry operators for Maxwell on vacuum type D backgrounds using the adjoint operator method and self-adjointness of the TME and TSI.

\section{Spin-2} \label{sec:Spin2}
Let $\delta g_{ABA'B'}=\delta g_{BAB'A'}$ be the spinorial form of a symmetric tensor field representing a linearized metric and define the irreducible parts
\begin{align}
G_{ABA'B'} \defeq{}&\delta g_{(AB)(A'B')}, & 
\slashed{G}_{} \defeq{}&\delta g^{C}{}_{C}{}^{C'}{}_{C'}.
\end{align} 
Note that $\delta g_{ABA'B'}$ is not a symmetric spinor, but $G_{ABA'B'}$ is. We use the covariant spinor variational operator $\vartheta$ developed in \cite{BaeVal15}. It is invariant under linearized tetrad rotations which allows us to do calculations covariantly. For relations to linearized dyad components (Newman-Penrose scalars), which involve the linearized tetrad, see  \cite[Remark 6]{BaeVal15}. We can express the variation of the three irreducible curvature spinors $\Psi_{ABCD}, \Phi_{ABA'B'}$ and $\Lambda$ on a vacuum background as differential operators acting on the linearized metric via
\begin{align}
\vartheta \Lambda={}&- \tfrac{1}{24} (\sDiv \sDiv G)
 + \tfrac{1}{32} (\sDiv \sTwist \slashed{G}),\\
\vartheta \Phi_{ABA'B'}={}&\tfrac{1}{2} G^{CD}{}_{A'B'} \Psi_{ABCD}
 + \tfrac{1}{2} (\sCurlDagger \sCurl G)_{ABA'B'}\nonumber\\
 & + \tfrac{1}{6} (\sTwist \sDiv G)_{ABA'B'} -  \tfrac{1}{8} (\sTwist \sTwist \slashed{G})_{ABA'B'},\\
\vartheta \Psi_{ABCD}={}&- \tfrac{1}{4} \slashed{G} \Psi_{ABCD}
 + \tfrac{1}{2} (\sCurl \sCurl G)_{ABCD}. \label{eq:VarSPsi}
\end{align}

It is convenient to introduce a modification of the linearized Weyl spinor $\vartheta\Psi_{ABCD}$,
\begin{align} \label{eq:phidef} 
\phi_{ABCD} \defeq{}& \vartheta \Psi_{ABCD} + \tfrac{1}{4} \slashed{G}_{} \Psi_{ABCD}.
\end{align}
In a type D principal frame this modification only affects the middle component. A variation of the Einstein spinor on a vacuum background without sources leads to the spinorial form of the linearized Einstein equation
\begin{align} \label{eq:LinearizedEinsteinEquation}
\vartheta\mathcal{E} 
\defeq -2\vartheta\Phi_{ABA'B'} -6 \epsilon_{AB}\bar{\epsilon}_{A'B'} \vartheta\Lambda  = 0.
\end{align} 
Multiplied by a factor of $-2$ the operator reads
\begin{align} \label{eq:LinearizedEinsteinSpinor}
-2 \vartheta\mathcal{E}  ={}&- \nabla_{AA'}\nabla_{BB'}\delta g^{C}{}_{C}{}^{C'}{}_{C'}
 + \nabla_{CC'}\nabla_{AA'}\delta g_{B}{}^{C}{}_{B'}{}^{C'}\nonumber\\
& + \nabla_{CC'}\nabla_{BB'}\delta g_{A}{}^{C}{}_{A'}{}^{C'}
 -  \nabla_{CC'}\nabla^{CC'}\delta g_{ABA'B'}\nonumber\\
& -  \epsilon_{AB} \bar\epsilon_{A'B'} \nabla_{DD'}\nabla_{CC'}\delta g^{CDC'D'}\nonumber\\
& + \epsilon_{AB} \bar\epsilon_{A'B'} \nabla_{DD'}\nabla^{DD'}\delta g^{C}{}_{C}{}^{C'}{}_{C'}.
\end{align}
A computation shows that this operator is self-adjoint. On a Petrov type D background, the irreducible components of \eqref{eq:LinearizedEinsteinSpinor} lead to the matrix equation
\begin{align*}
{\left(\begin{array}{c}
\vartheta \Phi\\
3\vartheta \Lambda
\end{array}\right)}={}&{\left(\begin{array}{cc}
 \TMEEop & - \tfrac{1}{8}\sTwist \sTwist \\
- \tfrac{1}{8}\sDiv \sDiv & \tfrac{3}{32}\sDiv \sTwist 
\end{array}\right)}{\left(\begin{array}{c}
G\\
\slashed{G}
\end{array}\right)},
\end{align*}
with
\begin{align}\label{eq:Spin2Edef}
\TMEEop&:\SymSpin_{2,2} \to \SymSpin_{2,2}, \nonumber \\
\TMEEop &\defeq{} \tfrac{1}{2} \sCurlDagger \sCurl 
 + \tfrac{1}{6} \sTwist \sDiv 
 + \tfrac{1}{2} \Psi_{2} (\mathcal{K}^1\mathcal{K}^1 
 -   \mathcal{K}^0 \mathcal{K}^2).
\end{align}
This operator is real and self-adjoint,
\begin{align} \label{eq:Spin2TMEEopAdjoints}
 \TMEEop^\dagger = \TMEEop^\star = \overline{\TMEEop} = \TMEEop,
\end{align}
as follows from appendix~\ref{sec:adjoints}. To define the $\TMETop$-operator, we consider the map from linearized metric to its Weyl-curvature \eqref{eq:VarSPsi}. Because the trace term only contributes to the middle component on a type D background in a principal frame, and we are interested in the extreme components only, we define
\begin{align}
\TMETop&:\SymSpin_{2,2} \to \SymSpin_{4,0}, &
\TMETop \defeq{}&\tfrac{1}{2} \sCurl \sCurl, \label{eq:TMETop}
\end{align}
so that \eqref{eq:phidef} is equivalent to $(\TMETop G)_{ABCD} = \phi_{ABCD}$. Because the equations are considerably more complicated for linearized gravity than in the spin-1 case, we investigate the TME and TSI separately in the following two subsections and interpret the resulting symmetry operators in subsection~\ref{sec:Spin2Interpretation}.

\subsection{The TME and a fourth order symmetry operator} \label{sec:Spin2TME}
In this section we derive an operator identity based on the TME, analogous to \eqref{eq:Spin1TMEOperatorIdentity} in the spin-1 case. Because \eqref{eq:Spin2Edef} is the linearized trace-free Ricci spinor of a trace-free metric, $(\TMEEop G)_{ABA'B'} = (\vartheta \Phi[G,0])_{ABA'B'}$, we can use the spin-2 TME with sources, derived in \cite[eq.(3.13)]{2016arXiv160106084A}, given by
\begin{align} \label{eq:Spin2TME}
\kappa_1^4 ( \mathcal{P}^{2} \mathcal{K}^1 \sCurl_{(-4,0)} \sCurl
\vartheta \Phi)_{ABCD}
={}& (\sCurl \sCurlDagger_{(4,0)} \mathcal{K}^1 \mathcal{P}^{2} \kappa_1^4\vartheta\Psi)_{ABCD} \nonumber \\
&+ 3 \Psi_{2} (\mathcal{K}^1 \mathcal{P}^{2} \kappa_1^4\vartheta\Psi)_{ABCD}. \hspace{5pt}
\end{align}
Motivated by \eqref{eq:Spin2TME}, define the operators
\begin{align} 
\TMESop&:\SymSpin_{2,2} \to \SymSpin_{4,0}, &
\TMESop \defeq{}&\kappa_1^4\mathcal{P}^{2} \sCurl_{(-4,0)} \sCurl, \label{eq:Spin2SopDef}\\
\TMEOop&:\SymSpin_{4,0} \to \SymSpin_{4,0}, &
\TMEOop \defeq{}&\sCurl \sCurlDagger_{(4,0)} \kappa_1^4\mathcal{P}^{2} 
 + 3 \Psi_{2} \kappa_1^4 \mathcal{P}^{2}. \label{eq:Spin2OopDef}
\end{align}
Equation~\eqref{eq:Spin2TME} could have been written without the $\mathcal{K}^1$ operators, see the proof of theorem~\ref{thm:Spin2TMEOperatorIdentity} below for details. 

The the modified linearized curvature $\phi_{ABCD}$ of any source-free solution of the linearized Einstein equation \eqref{eq:LinearizedEinsteinEquation} on a vacuum type D background solves the TME 
\begin{align} \label{eq:Spin2TMEOop}
(\TMEOop \phi)_{ABCD} = 0.
\end{align}
See \eqref{eq:Spin2TMEGHPForm} for the GHP component form. Remarkable properties of this operator are summarized in
\begin{lemma}
The TME operator is $\dagger$-self-adjoint,
\begin{align} \label{eq:TMEOopSelfadj}
\TMEOop^\dagger ={}&\TMEOop.
\end{align}
Furthermore it factorizes, up to a potential term, into
\begin{align}
\TMEOop ={}& \TMEFop^\dagger \TMEFop + 3 \Psi_{2} \kappa_1^4 \mathcal{P}^{2},
\end{align}
with
\begin{align}
\TMEFop&:\SymSpin_{4,0} \to \SymSpin_{3,1}, &
\TMEFop \defeq{}&\sCurlDagger_{(2,0)} \mathcal{P}^{2}( \kappa_1^2 \bullet).
\end{align}
\end{lemma}
\begin{proof}
Because \eqref{eq:Spin2OopDef} is diagonal, we can apply another (idempotent) spin-2 projector without altering the result and commute $\kappa_1^2$ out to get
\begin{align*}
\TMEOop
={} \kappa_1^2 \mathcal{P}^{2} \sCurl_{(-2,0)} \sCurlDagger_{(2,0)} (\mathcal{P}^{2} \kappa_1^2  \bullet) + 3 \Psi_{2} \kappa_1^4 \mathcal{P}^{2} \\
={}\TMEFop^\dagger \TMEFop + 3 \Psi_{2} \kappa_1^4 \mathcal{P}^{2},
\end{align*}
From this, the $\dagger$-self-adjointness of $\TMEOop$ follows.
\end{proof}

\begin{theorem} \label{thm:Spin2TMEOperatorIdentity}
With the operators $\TMESop, \TMEEop, \TMEOop, \TMETop$ defined above we have the identities
\begin{subequations}
\begin{align} 
\TMESop \TMEEop={}&\TMEOop \TMETop, \label{eq:Spin2TMEOperatorIdentitya} \\
\TMESop \sTwist \sTwist ={}& 0. \label{eq:Spin2TMEOperatorIdentityb}
\end{align}
\end{subequations} 
If $G_{ABA'B'}$ is the trace-free part of a solution to the source-free vacuum linearized Einstein equation \eqref{eq:LinearizedEinsteinEquation} with modified curvature $\phi_{ABCD}=(\TMETop G)_{ABCD}$,  then the map
\begin{align} \label{eq:Spin2LinMetricSdagger}
\phi_{ABCD} \mapsto 
(\TMESop^\dagger \phi)_{ABA'B'}={}&(\sCurlDagger \sCurlDagger_{(4,0)} \mathcal{P}^{2} \kappa_1^4\phi)_{ABA'B'},
\end{align}
generates a new complex solution to the source-free vacuum linearized Einstein equation.
\end{theorem}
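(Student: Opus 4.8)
The plan is to extract both identities \eqref{eq:Spin2TMEOperatorIdentitya}--\eqref{eq:Spin2TMEOperatorIdentityb} from the single TME-with-sources identity \eqref{eq:Spin2TME}, and then to run Wald's argument (Theorem~\ref{thm:AdjointMethod}) with the \emph{full} self-adjoint linearized Einstein operator — the matrix operator sending $(G,\slashed{G})$ to $(\vartheta\Phi,3\vartheta\Lambda)$ displayed above \eqref{eq:Spin2Edef} — in the role of $\TMEEop$, and the TME operator $\TMEOop_{4,0}$, which is $\dagger$-self-adjoint by \eqref{eq:TMEOopSelfadj}, in the role of $\TMEOop$. Concretely I would view $\TMESop_{2,2}$ as the row operator $(\TMESop_{2,2},0)$ acting on pairs $(\vartheta\Phi,3\vartheta\Lambda)$ and $\TMETop_{2,2}$ as the row $(\TMETop_{2,2},0)$ acting on $(G,\slashed{G})$; then \eqref{eq:Spin2TMEOperatorIdentitya} together with \eqref{eq:Spin2TMEOperatorIdentityb} says precisely that $(\TMESop_{2,2},0)$ composed with the matrix operator equals $\TMEOop_{4,0}$ composed with $(\TMETop_{2,2},0)$, which is of the form \eqref{eq:adjoint}.

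For the identities themselves, the first step is to remove the operators $\mathcal{K}^1_{4,0}$ from \eqref{eq:Spin2TME}. Since $\mathcal{K}^1_{4,0}$ is algebraic and diagonal on dyad components and $\mathcal{P}^2_{4,0}$ projects onto the extreme components $\varphi_0,\varphi_4$, one has $\mathcal{K}^1_{4,0}\mathcal{P}^2_{4,0}=\mathcal{P}^2_{4,0}\mathcal{K}^1_{4,0}=\sigma\mathcal{P}^2_{4,0}$, where $\sigma$ acts as $+1$ on $\varphi_0$ and $-1$ on $\varphi_4$ (compare $(\mathcal{K}^1_{2,0}\phi)_0=\phi_0$, $(\mathcal{K}^1_{2,0}\phi)_2=-\phi_2$ from the spin-$1$ discussion); $\sigma^2$ is the identity on the image of $\mathcal{P}^2_{4,0}$ and $\sigma$ commutes both with scalar multiplication and with the diagonal wave operator $\sCurl_{3,1}\sCurlDagger_{4,0,4}$ (a case of \eqref{eq:DiagonalWaveOperator} with $c=n=4$), so it cancels between the two sides of \eqref{eq:Spin2TME}. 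Next I would apply \eqref{eq:Spin2TME} — which is an operator identity valid for an arbitrary linearized metric, the left-hand side being its linearized trace-free Ricci spinor — to a general $(G,\slashed{G})$. By the matrix form of the linearized Einstein operator, $\vartheta\Phi=\TMEEop_{2,2}G-\tfrac{1}{8}\sTwist_{1,1}\sTwist_{0,0}\slashed{G}$, while $\vartheta\Psi=\TMETop_{2,2}G-\tfrac{1}{4}\slashed{G}\Psi$ (cf. \eqref{eq:phidef}). In a principal frame the background Weyl spinor has no extreme components, so $\mathcal{P}^2_{4,0}\Psi_{ABCD}=0$ and the $\slashed{G}\Psi$ piece is annihilated by $\TMEOop_{4,0}$; after the $\mathcal{K}^1_{4,0}$-removal, \eqref{eq:Spin2TME} therefore reads
\begin{align*}
\bigl(\TMESop_{2,2}(\TMEEop_{2,2}G-\tfrac{1}{8}\sTwist_{1,1}\sTwist_{0,0}\slashed{G})\bigr)_{ABCD}=(\TMEOop_{4,0}\TMETop_{2,2}G)_{ABCD}.
\end{align*}
Since this holds for all $G$ and all $\slashed{G}$ independently, setting $\slashed{G}=0$ gives \eqref{eq:Spin2TMEOperatorIdentitya}, and subtracting it off leaves $(\TMESop_{2,2}\sTwist_{1,1}\sTwist_{0,0}\slashed{G})_{ABCD}=0$, which is \eqref{eq:Spin2TMEOperatorIdentityb}.

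With \eqref{eq:Spin2TMEOperatorIdentitya}--\eqref{eq:Spin2TMEOperatorIdentityb} established, the pair $(\TMESop_{2,2},0)$, the self-adjoint full linearized Einstein operator, $\TMEOop_{4,0}$, and $\TMETop=(\TMETop_{2,2},0)$ realize the configuration of Theorem~\ref{thm:AdjointMethod}. If $G$ is the trace-free part of a source-free solution with modified curvature $\phi_{ABCD}=(\TMETop_{2,2}G)_{ABCD}$, then $(\TMEOop_{4,0}\phi)_{ABCD}=0$ by \eqref{eq:Spin2TMEOop}, hence $\TMEOop_{4,0}^\dagger\phi=\TMEOop_{4,0}\phi=0$ using \eqref{eq:TMEOopSelfadj}. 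Theorem~\ref{thm:AdjointMethod} then yields $\TMEEop^\dagger(\TMESop^\dagger\phi)=\TMEEop(\TMESop^\dagger\phi)=0$; unwinding the row/column structure, this says that the trace-free metric $\TMESop^\dagger_{4,0}\phi$ together with vanishing trace satisfies $\vartheta\Phi=0$ \emph{and} $\vartheta\Lambda=0$, i.e. the full source-free vacuum linearized Einstein equation \eqref{eq:LinearizedEinsteinEquation}. Finally, the explicit form \eqref{eq:Spin2LinMetricSdagger} follows by taking the $\dagger$-adjoint of $\TMESop_{2,2}=\kappa_1^4\mathcal{P}^2_{4,0}\sCurl_{3,1,-4}\sCurl_{2,2}$ term by term with the rules of Appendix~\ref{sec:adjoints}: each $\sCurl$ turns into a $\sCurlDagger$ with the sign of the extended index reversed, $\mathcal{P}^2_{4,0}$ is self-adjoint, and the scalar factor $\kappa_1^4$ passes through; the resulting solution is genuinely complex because $\kappa_1\propto\Psi_2^{-1/3}$.

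The main obstacle is the careful use of \eqref{eq:Spin2TME} together with the $\mathcal{K}^1_{4,0}$-removal: one must verify that $\mathcal{K}^1_{4,0}$ really restricts to $\mathrm{diag}(1,-1)$ on the extreme components and commutes with precisely the operators surviving next to $\mathcal{P}^2_{4,0}$ — namely $\sCurl_{3,1}\sCurlDagger_{4,0,4}$, $\mathcal{P}^2_{4,0}$ itself, and scalar multiplication — and that \eqref{eq:Spin2TME} is being invoked as the genuinely sourced operator identity so that $G$ and $\slashed{G}$ may be varied independently. One also has to keep the trace bookkeeping straight: that $\phi_{ABCD}$ depends only on $G$ whereas $\vartheta\Psi$ carries the extra $-\tfrac{1}{4}\slashed{G}\Psi$, and that only the latter is annihilated by $\mathcal{P}^2_{4,0}$. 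Everything downstream — the adjoint-method step and the explicit adjoint computation — is routine given \eqref{eq:TMEOopSelfadj} and Appendix~\ref{sec:adjoints}.
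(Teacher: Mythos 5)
Your proposal is correct and follows essentially the same route as the paper: both extract \eqref{eq:Spin2TMEOperatorIdentitya}--\eqref{eq:Spin2TMEOperatorIdentityb} from the sourced TME \eqref{eq:Spin2TME} by using that $\mathcal{K}^1_{4,0}$ and $\mathcal{P}^{2}_{4,0}$ act diagonally (the paper applies a second $\mathcal{K}^1_{4,0}$ and uses $\mathcal{K}^1_{4,0}\mathcal{K}^1_{4,0}\mathcal{P}^{2}_{4,0}=\mathcal{P}^{2}_{4,0}$, which is precisely your cancellation of the invertible factor $\sigma$, and likewise notes that the right-hand side is independent of $\slashed{G}$), and then invoke Theorem~\ref{thm:AdjointMethod} with the self-adjointness statements \eqref{eq:Spin2TMEEopAdjoints} and \eqref{eq:TMEOopSelfadj}. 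The only cosmetic difference is that you run Wald's argument once with the full self-adjoint matrix Einstein operator and the row $(\TMESop_{2,2},0)$, whereas the paper applies it to the block $\TMEEop_{2,2}$ and separately takes the adjoint of \eqref{eq:Spin2TMEOperatorIdentityb} to get $\sDiv_{1,1}\sDiv_{2,2}\TMESop^\dagger_{4,0}\phi=0$, i.e.\ vanishing linearized scalar curvature --- exactly the information carried by your off-diagonal block.
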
 
\begin{proof}
Applying another $\mathcal{K}^1$ to \eqref{eq:Spin2TME}, using $\mathcal{K}^1 \mathcal{K}^1 \mathcal{P}^2 = \mathcal{P}^2$  and identifying the operators \eqref{eq:Spin2OopDef} leads to identity \eqref{eq:Spin2TMEOperatorIdentitya} for the trace-free part of the linearized metric. The right-hand side of \eqref{eq:Spin2TME} does not depend on $\slashed{G}$ and therefore $(\TMESop \vartheta \Phi[0,\slashed{G}])_{A'B'C'D'}=0$, where $(\vartheta \Phi[0,\slashed{G}])_{ABA'B'}=(\sTwist \sTwist \slashed{G})_{ABA'B'}$. This proves the identity \eqref{eq:Spin2TMEOperatorIdentityb}.

From theorem~\ref{thm:AdjointMethod} together with the self-adjointness of $ \TMEEop$ and $\TMEOop$  given in \eqref{eq:Spin2TMEEopAdjoints} and \eqref{eq:TMEOopSelfadj}, respectively, it follows that \eqref{eq:Spin2LinMetricSdagger} maps into the kernel of $ \TMEEop$. The adjoint of \eqref{eq:Spin2TMEOperatorIdentityb} yields
\begin{align} \label{eq:DivDivSdagger}
\sDiv \sDiv \TMESop^\dagger \phi={}&0,
\end{align}
which ensures that \eqref{eq:Spin2LinMetricSdagger} has vanishing Ricci scalar curvature. Hence it generates new solutions to linearized gravity from solutions to the TME \eqref{eq:Spin2TMEOop}.
\end{proof}
For later reference, we define
\begin{align} \label{eq:hmetric}
h_{ABA'B'} \defeq (\TMESop^\dagger \phi)_{ABA'B'}
\end{align}
for the new complex metric generated from TME solutions via \eqref{eq:Spin2LinMetricSdagger}. Analogous to corollary \ref{cor:Spin1PotentialDerivatives} for spin-1, we collect the curvature of the new solution in
\begin{lemma}
The complex linearized metric \eqref{eq:hmetric} has self-dual and anti-self-dual curvature
\begin{subequations} \label{eq:Spin2TMECurvature}
\begin{align}
\bar{\chi}_{A'B'C'D'}={}&(\overline{\TMETop} h)_{A'B'C'D'} \nonumber \\
={}&\tfrac{1}{2} (\sCurlDagger \sCurlDagger \sCurlDagger \sCurlDagger_{(4,0)} \mathcal{P}^{2}\kappa_1^4\phi)_{A'B'C'D'},\label{eq:Spin2TMEsdCurvature}\\
\psi_{ABCD}={}&(\TMETop h)_{ABCD} \nonumber \\
={}&\tfrac{1}{2} (\sCurl \sCurl \sCurlDagger \sCurlDagger_{(4,0)} \mathcal{P}^{2} \kappa_1^4\phi)_{ABCD} \nonumber \\
={}&\tfrac{1}{2} \Psi_{2} \kappa_1^3 (\mathcal{L}_{\xi}\mathcal{K}^1 \mathcal{P}^{2} \phi)_{ABCD}. \label{eq:Spin2TMEasdCurvature}
\end{align}
\end{subequations} 
The last equality holds on-shell.
\end{lemma}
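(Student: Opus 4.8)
The plan is to substitute the definition \eqref{eq:Spin2LinMetricSdagger} of $h_{ABA'B'}=(\TMESop^\dagger_{4,0}\phi)_{ABA'B'}$ into the two curvature operators and then simplify. For the self-dual part, note that $\overline{\TMETop}_{2,2}=\tfrac12\,\sCurlDagger_{1,3}\sCurlDagger_{2,2}$ since it is the operator conjugate to $\TMETop_{2,2}=\tfrac12\,\sCurl_{3,1}\sCurl_{2,2}$ and $\overline{\sCurl_{k,l}\varphi}=\sCurlDagger_{l,k}\bar\varphi$. Composing directly with $h=\sCurlDagger_{3,1}\sCurlDagger_{4,0,4}\mathcal{P}^{2}_{4,0}\kappa_1^4\phi$ gives \eqref{eq:Spin2TMEsdCurvature} with no further work: all four $\sCurlDagger$'s act ``in the same chirality direction'', so there is nothing to commute. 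Likewise $\psi_{ABCD}=(\TMETop_{2,2}h)_{ABCD}=\tfrac12(\sCurl_{3,1}\sCurl_{2,2}h)_{ABCD}$, which upon inserting $h$ is exactly the first displayed form in \eqref{eq:Spin2TMEasdCurvature}. Thus only the last equality in \eqref{eq:Spin2TMEasdCurvature} requires a genuine argument.

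That remaining claim is the operator identity $\sCurl_{3,1}\sCurl_{2,2}\sCurlDagger_{3,1}\sCurlDagger_{4,0,4}\mathcal{P}^{2}_{4,0}\kappa_1^4\phi=\Psi_2\kappa_1^3\,\mathcal{L}_\xi\mathcal{K}^1_{4,0}\mathcal{P}^{2}_{4,0}\phi$, valid when $\phi$ solves \eqref{eq:Spin2TMEOop}. I would first observe that scalars commute with the algebraic $\mathcal{K}$- and $\mathcal{P}$-operators, so $\mathcal{P}^{2}_{4,0}\kappa_1^4\phi=\kappa_1^4\mathcal{P}^{2}_{4,0}\phi$, and use the rescaling rules \eqref{eq:ExtendedFundSpinOpRescaling} to absorb the $\kappa_1$-powers into the extended indices of the $\sCurlDagger$'s. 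Then commute the $\sCurl$-operators rightward past the $\sCurlDagger$-operators using the commutator relations of appendix~\ref{app:KopComm}; on a vacuum type D background these commutators produce only terms that are algebraic in $\kappa_{AB}$, $\Psi_2$ and $\kappa_1$ times lower-order fundamental operators, plus the ``straightened'' term in which an inner factor $\sCurl_{3,1}\sCurlDagger_{4,0,4}\kappa_1^4\mathcal{P}^{2}_{4,0}\phi$ appears. By \eqref{eq:Spin2OopDef} and $(\TMEOop_{4,0}\phi)_{ABCD}=0$ that inner factor collapses to $-3\Psi_2\kappa_1^4\mathcal{P}^{2}_{4,0}\phi$, which drops the differential order; the $\mathcal{P}^{2}_{4,0}$-projector simultaneously annihilates the non-extreme (obstruction) pieces, exactly as in the derivation of \eqref{eq:Spin2TME}, so the would-be troublesome terms vanish.

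After this reduction the right-hand side is a first-order operator acting on $\mathcal{P}^{2}_{4,0}\phi$, with coefficients built from $\kappa_{AB}$, $\Psi_2$ and $\kappa_1$. To finish I would recognise this combination as a Lie derivative: using $\xi_{AA'}=(\sCurlDagger_{2,0}\kappa)_{AA'}$ from \eqref{eq:XiDef}, $U_{AA'}=-\nabla_{AA'}\log\kappa_1$ from \eqref{eq:UU11Def}, and the spinorial formula for $\mathcal{L}_\xi$ on a symmetric valence-$(4,0)$ spinor (for a Killing vector $\mathcal{L}_\xi$ is expressible through $\nabla$ and the algebraic tensor $\nabla_a\xi_b$, which on vacuum type D is algebraic in $\kappa_{AB}$ and $\Psi_2$), together with the $\mathcal{K}^1$--fundamental-operator commutators of appendix~\ref{app:KopComm} to move $\mathcal{K}^1_{4,0}$ into place. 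This yields the stated form $\tfrac12\,\Psi_2\kappa_1^3\,\mathcal{L}_\xi\mathcal{K}^1_{4,0}\mathcal{P}^{2}_{4,0}\phi$.

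I expect the main obstacle to be precisely the commutator bookkeeping: the iterated $\sCurl/\sCurlDagger$ commutators on a curved type D background generate a substantial number of curvature- and trace-type terms, and the nontrivial point is that, once the TME has been used and the $\mathcal{P}^{2}$-projection applied, all of them reorganise into a single first-order operator with no remainder. Making this cancellation transparent rather than a brute-force \emph{xAct} verification is the delicate step; in practice one may also invoke the factorization of $\TMEOop_{4,0}$ from the preceding lemma (or equivalently the spin-2 TSI) to see that the obstruction terms truly vanish, and the identity \eqref{eq:Spin2TMEOperatorIdentitya}--\eqref{eq:Spin2TMEOperatorIdentityb} to control the trace contributions.
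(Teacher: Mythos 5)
Your proposal is correct and follows essentially the same route as the paper: the self-dual curvature and the first form of $\psi_{ABCD}$ are direct expansions of the definitions, and the final equality is obtained by commuting the middle $\sCurl_{2,2}\sCurlDagger_{3,1}$ pair so that the TME operator $\TMEOop_{4,0}$ appears and is annihilated by the field equation, with the remainder organizing into the Lie derivative term — exactly what the paper packages as the operator identity \eqref{eq:Spin2TMEasdCurvatureId}, likewise deferring the bookkeeping to a lengthy commutator computation. One small correction: the commutators needed to move $\sCurl$ past $\sCurlDagger$ are the fundamental-operator commutators of \cite[Lemma~18]{ABB:symop:2014CQGra..31m5015A}, not the $\mathcal{K}$-operator commutators of appendix~\ref{app:KopComm}.
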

\begin{proof}
The first equation is just an expansion of the operators and \eqref{eq:Spin2TMEasdCurvature} follows from
\begin{align} \label{eq:Spin2TMEasdCurvatureId}
\psi_{ABCD}
={}&\tfrac{1}{2} (\sCurl \sCurlDagger \TMEOop \phi)_{ABCD} - \tfrac{1}{2} \Psi_{2} (\TMEOop \phi)_{ABCD} \nonumber \\
& + \tfrac{1}{2} \Psi_{2} \kappa_1^3 (\mathcal{L}_{\xi}\mathcal{K}^1 \mathcal{P}^{2} \phi)_{ABCD},
\end{align}
which is an operator identity when starting with the first equality in \eqref{eq:Spin2TMEasdCurvature}. This operator identity follows from a lengthy calculation involving the commutator of $\sCurl \sCurlDagger $.
\end{proof}
Note that the anti-self-dual curvature \eqref{eq:Spin2TMEasdCurvature} reduces to first order and \eqref{eq:Spin2TMEasdCurvatureId} is the analog of \eqref{eq:Spin1CurlSdg}. Restricting to the real or imaginary part of \eqref{eq:Spin2LinMetricSdagger} leads to a mixture of the two curvatures. An interpretation of the symmetry operator is given in subsection~\ref{sec:Spin2Interpretation}.

\subsection{The TSI and a sixth order symmetry operator} \label{sec:Spin2TSI}
In this section we derive an operator identity based on the TSI, analogous to \eqref{eq:Spin1TSIOperatorIdentity} in the spin-1 case. The covariant form of the spin-2 TSI can be deduced from the identity, derived in \cite[eq.(4.19)]{2016arXiv160106084A},
\begin{align} \label{eq:Spin2NiceIdentity}
(\TMESop^{\dagger}\mathcal{K}^1 \phi)_{ABA'B'}
={}& (\sTwist \mathcal{A})_{ABA'B'} 
+ \tfrac{1}{2} \Psi_{2} \kappa_1^3 (\mathcal{L}_{\xi}G)_{ABA'B'} \nonumber \\
&+ (\TSINop \vartheta \Phi)_{ABA'B'},
\end{align}
where the $\TMESop^{\dagger}$ operator is given in \eqref{eq:Spin2LinMetricSdagger} and
\begin{align}
\TSINop \defeq{}&- \kappa_1^4 \mathcal{K}^0 \mathcal{K}^2 \sTwist_{(-7,0)} \mathcal{K}^2 \sCurl  - 3 \Psi_{2} \kappa_1^4 \mathcal{K}^1 \nonumber\\
& + \kappa_1^4 \mathcal{K}^1 \sCurlDagger_{(-4,0)} \sCurl 
 + \tfrac{4}{3} \kappa_1^4 \sTwist_{(-7,0)} \mathcal{K}^2 \sCurl, \label{eq:TSINop} \\
\mathcal{A}_{AA'} \defeq{}&- \tfrac{1}{2} \Psi_{2} \kappa_1^3 \xi^{BB'} (\mathcal{K}^0 \mathcal{K}^2 G)_{ABA'B'}  + \tfrac{2}{3} \kappa_1^3 \xi^{B}{}_{A'} (\mathcal{K}^1 \mathcal{K}^2 \phi)_{AB}\nonumber\\
& + (\mathcal{K}^1 \sTwist  \mathcal{K}^2 \mathcal{K}^2 (\kappa_1^4\phi))_{AA'}
 -  \tfrac{1}{4} \Psi_{2} \kappa_1^4 (\mathcal{K}^1 \sTwist_{(2,0)} \slashed{G})_{AA'}\nonumber\\
&+ \tfrac{2}{3} \kappa_{1}{}^4 (\mathcal{K}^2\sCurl\vartheta \Phi)_{AA'}.
\end{align}
The right (and therefore also the left) hand side of \eqref{eq:Spin2NiceIdentity}  can be shown to be a complex, trace-free solution to the source-free linearized Einstein equation if $(G_{ABA'B'}, \slashed{G})$ is, see \cite[Cor.4.3]{2016arXiv160106084A}. The analogue of \eqref{eq:Spin2NiceIdentity} in the spin-1 case is \eqref{eq:Spin1SdgK1phi20} and there it is a pure gauge vector potential. Here the last term on the right-hand side of \eqref{eq:Spin2NiceIdentity} is the trace-free part of a linearized diffeomorphism but the first term is not. 

By applying $\tfrac{1}{2}\sCurlDagger \sCurlDagger $, or equivalently the $\overline{\TMETop}$ operator, to \eqref{eq:Spin2NiceIdentity} we get the corresponding self-dual curvature. It turns out to be convenient to apply $\overline{\mathcal{P}}^{2}$ to pick out the extreme components and a $\overline{\mathcal{K}}^1$ operator to flip the sign on one of them.
This combination of operators on the Lie derivative term just gives a Lie derivative of the complex conjugated curvature. The $(\sTwist \mathcal{A})_{ABA'B'}$ term is the trace-free part of a linearized diffeomorphism, and will therefore not contribute to the gauge independent extreme components of the curvature. As we will see in Theorem~\ref{thm:Spin2TSIOperatorIdentity}, the remaining terms can be compactly expressed in terms of the operators
\begin{subequations}\label{eq:TSIOperatorDefs}
\begin{align}
\TSIOop&: \SymSpin_{4,0} \to \SymSpin_{0,4}, &
\TSIOop \defeq{}&2 \kappa_1^{-1} \bar{\kappa}_{1'}^3 \overline{\mathcal{P}}^{2} \overline{\mathcal{K}}^1 \overline{\TMETop} \TMESop^{\dagger}\mathcal{K}^1,\label{eq:TSIOperator}\\
\TSISop&: \SymSpin_{2,2} \to \SymSpin_{0,4}, &
\TSISop \defeq{}&2 \kappa_1^{-1} \bar{\kappa}_{1'}^3 \overline{\mathcal{P}}^{2} \overline{\mathcal{K}}^1 \overline{\TMETop} \TSINop,\\
\TSIUop&: \SymSpin_{0,4} \to \SymSpin_{0,4}, &
\TSIUop \defeq{}&\Psi_{2} \kappa_1^2 \bar{\kappa}_{1'}^3 \overline{\mathcal{P}}^{2} \overline{\mathcal{K}}^1 \mathcal{L}_{\xi}.\label{eq:TSILOperator}
\end{align}
\end{subequations}
In particular, the modified curvature $\phi_{ABCD}=(\TMETop G)_{ABCD}$ of any source-free solution to the linearized Einstein equation \eqref{eq:LinearizedEinsteinSpinor} on a vacuum type D background solves the TSI
\begin{align}\label{eq:TSIEqOperatorForm}
(\TSIOop \phi)_{A'B'C'D'} - (\TSIUop \overline{\phi})_{A'B'C'D'} = 0.
\end{align}
See \eqref{eq:Spin2TSIOpGHPForm} for the GHP component form. Remarkable properties of these operators are summarized in
\begin{lemma}
The operator \eqref{eq:TSIOperator} is $\star$-self-adjoint,
\begin{align}\label{eq:TSIOopSelfadj}
\TSIOop^\star={}&\TSIOop,
\end{align}
and the Lie derivative term \eqref{eq:TSILOperator} is $\dagger$-self-adjoint,
\begin{align}\label{eq:TSIUopSelfadj}
\TSIUop^{\dagger}={}&\TSIUop.
\end{align}
Furthermore \eqref{eq:TSIOperator} factorizes into
\begin{align}
\TSIOop ={}&- \TSIFop^\star \TSIFop,
\end{align}
with
\begin{align}
\TSIFop&: \SymSpin_{4,0} \to \SymSpin_{2,2}, &
\TSIFop \defeq{}&\sCurlDagger \sCurlDagger  \mathcal{K}^1 \kappa_1^3\mathcal{P}^{2}.
\end{align}
\end{lemma}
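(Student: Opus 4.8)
The plan is to establish the factorization first, since both self-adjointness statements then follow with essentially no further work. Starting from the explicit form \eqref{eq:TSIOperator} of $\TSIOop_{4,0}$, I would use the idempotency of the spin projectors together with the operator identity $\mathcal{K}^1_{4,0}\mathcal{K}^1_{4,0}\mathcal{P}^{2}_{4,0}=\mathcal{P}^{2}_{4,0}$ (and the corresponding primed identity for $\overline{\mathcal{K}}^1_{0,4},\overline{\mathcal{P}}^{2}_{0,4}$) to insert an auxiliary $\mathcal{K}^1$ next to $\mathcal{P}^{2}_{4,0}$, and then the rescaling rule \eqref{eq:ExtendedFundSpinOpRescaling} (and its $\sCurlDagger$ analogue) to redistribute the prefactor $\kappa_1^3\bar{\kappa}_{1'}^3$ across the chain of four extended operators $\sCurlDagger_{1,3,-4}\sCurlDagger_{2,2,-4}\sCurlDagger_{3,1,-4}\sCurlDagger_{4,0}$. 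The target is to rewrite $\TSIOop_{4,0}$ so that the two innermost $\sCurlDagger$ operators, together with $\mathcal{K}^1_{4,0}\kappa_1^3\mathcal{P}^{2}_{4,0}$, reproduce exactly $\TSIFop_{4,0}$, while the two outermost $\sCurlDagger$ operators, together with $\overline{\mathcal{P}}^{2}_{0,4}\overline{\mathcal{K}}^1_{0,4}$ and whatever powers of $\bar{\kappa}_{1'}$ remain, reproduce $\TSIFop^\star_{2,2}=(\TSIFop_{4,0})^\star$. The $\mathcal{K}$-operator commutators of appendix~\ref{app:KopComm} are used throughout to move the algebraic factors past the differential ones into the required order, and the adjoint formulas of appendix~\ref{sec:adjoints} for $\sCurlDagger$, $\mathcal{K}^1$, $\mathcal{P}^{2}$ and scalar multiplication are used to recognise the outer half as $(\TSIFop_{4,0})^\star$; this last step is also where the overall minus sign is produced, exactly as for the spin-1 identity \eqref{eq:Spin1TSIOfactor} in Lemma~\ref{lem:TMETSIadjoints}. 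Since $\mathcal{P}^{2}$ retains only the extreme dyad components and the $\dagger$-pairing couples the $0$-component with the $2s$-component, $\mathcal{P}^{2}$ is $\dagger$-self-adjoint and hence $(\mathcal{P}^{2}_{4,0})^\star=\overline{\mathcal{P}}^{2}_{0,4}$, which keeps the projector bookkeeping clean.

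Given the factorization $(\TSIOop_{4,0}\varphi)_{A'B'C'D'}=-(\TSIFop^\star_{2,2}\TSIFop_{4,0}\varphi)_{A'B'C'D'}$ and the fact that $\star$ is an involution on the relevant valences, the $\star$-self-adjointness \eqref{eq:TSIOopSelfadj} is immediate: $(\TSIFop^\star_{2,2}\TSIFop_{4,0})^\star=(\TSIFop_{4,0})^\star(\TSIFop^\star_{2,2})^\star=\TSIFop^\star_{2,2}\TSIFop_{4,0}$, so the factorized form is manifestly $\star$-self-adjoint.

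For the $\dagger$-self-adjointness \eqref{eq:TSIUopSelfadj} of $\TSIUop_{0,4}$ I would argue directly from \eqref{eq:TSILOperator}, writing $\TSIUop_{0,4}$ as the composition of multiplication by the scalar $\Psi_{2}\kappa_1^2\bar{\kappa}_{1'}^3$, the algebraic operator $\overline{\mathcal{P}}^{2}_{0,4}\overline{\mathcal{K}}^1_{0,4}$, and the Lie derivative $\mathcal{L}_\xi$. Multiplication by a scalar is $\dagger$-self-adjoint; $\mathcal{L}_\xi$ is $\dagger$-anti-self-adjoint because $\xi$ is Killing, so that $\mathcal{L}_\xi d\mu=0$ and the boundary term drops by Stokes; and $\overline{\mathcal{P}}^{2}_{0,4}\overline{\mathcal{K}}^1_{0,4}$ is $\dagger$-anti-self-adjoint, which one reads off from the appendix adjoint formulas (concretely, it acts as $\mathrm{diag}(1,-1)$ on the two extreme components $\varphi_{0'},\varphi_{4'}$ while the $\dagger$-pairing interchanges those two components). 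The three signs multiply to $+1$. It remains only to move $\mathcal{L}_\xi$ back to the inside: since $\xi$ annihilates $\Psi_2$, hence $\kappa_1\propto\Psi_2^{-1/3}$ and $\bar{\kappa}_{1'}$, and hence also the algebraic operators built from $\bar\kappa$, the Lie derivative commutes with the scalar prefactor and with $\overline{\mathcal{P}}^{2}_{0,4}\overline{\mathcal{K}}^1_{0,4}$, and one obtains $\bigl(\TSIUop_{0,4}^{\dagger}\varphi\bigr)_{A'B'C'D'}=(\TSIUop_{0,4}\varphi)_{A'B'C'D'}$.

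I expect the main obstacle to be the bookkeeping in the factorization step: correctly locating the split point in the four-fold chain of extended $\sCurlDagger$ operators, and verifying that, once the rescalings \eqref{eq:ExtendedFundSpinOpRescaling} have moved the eight powers of $\kappa_1,\bar{\kappa}_{1'}$ into position and the commutators of appendix~\ref{app:KopComm} have reordered the algebraic factors, the two halves coincide on the nose with $\TSIFop_{4,0}$ and $\TSIFop^\star_{2,2}$ — placement of the projectors and overall sign included. Once the factorization is in hand, both self-adjointness statements are formal consequences, just as in the spin-1 case.
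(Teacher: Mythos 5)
Your proposal is correct and follows essentially the same route as the paper: rewrite $\TSIOop_{4,0}$ as the plain chain $\overline{\mathcal{P}}^{2}_{0,4}\bar{\kappa}_{1'}^3\overline{\mathcal{K}}^1_{0,4}\sCurlDagger_{1,3}\sCurlDagger_{2,2}\sCurlDagger_{3,1}\sCurlDagger_{4,0}\mathcal{K}^1_{4,0}\kappa_1^3\mathcal{P}^{2}_{4,0}$, recognise the outer half as $-\TSIFop^{\star}_{2,2}$ via the appendix adjoints (the sign coming from $(\mathcal{K}^1)^\star=-\overline{\mathcal{K}}^1$), and then obtain both self-adjointness statements formally, with \eqref{eq:TSIUopSelfadj} handled exactly as you do from the skew-adjointness of $\mathcal{L}_\xi$, the appendix adjoints of the algebraic operators, and the fact that $\xi$ annihilates the background quantities. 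The one caveat is that passing from the definition \eqref{eq:TSIOperator}, whose extended indices are $(-4,-4,-4,0)$, to the uniform-index factorized form is not achieved by the rescaling rule \eqref{eq:ExtendedFundSpinOpRescaling} alone (a scalar moved through the chain shifts every operator it passes by the same amount, giving $(-3,-3,-3,-3)$); this redistribution is exactly the paper's ``extended index identity'', which on the extreme components rests on the $\mathcal{K}$-commutators/projections (equivalently the type D Ricci identities behind the GHP identities at the end of appendix~\ref{sec:GHPform}) — an ingredient you do list, but it is the genuinely nontrivial step rather than mere bookkeeping.
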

\begin{proof}
Using an extended index identity, we get
\begin{align*}
\TSIOop ={}&\overline{\mathcal{P}}^{2} \bar{\kappa}_{1'}^3\overline{\mathcal{K}}^1 \sCurlDagger \sCurlDagger \sCurlDagger \sCurlDagger \mathcal{K}^1 \kappa_1^3\mathcal{P}^{2}
={}- \TSIFop^{\star}\TSIFop.
\end{align*}
From this, the $\star$-self-adjointness of $\TSIOop$ is manifest. \eqref{eq:TSIUopSelfadj} follows directly from the adjoints given in appendix~\ref{sec:adjoints}.
\end{proof}

\begin{theorem}\label{thm:Spin2TSIOperatorIdentity}
With the operators $\TSISop, \TMEEop, \TSIOop, \TMETop$ defined above we have the identities 
\begin{subequations} \label{eq:Spin2TSIOperatorIndentity}
\begin{align} 
 \TSISop \TMEEop ={}& \TSIOop \TMETop - \TSIUop \overline{\TMETop}, \label{eq:SEOTSpin2TSI}\\
 \TSISop \sTwist \sTwist ={}& 0. \label{eq:STwistTwistSpin2TSI}
 \end{align}
 \end{subequations} 
If $G_{ABA'B'}$ is the trace-free part of a solution to the source-free vacuum linearized Einstein equation \eqref{eq:LinearizedEinsteinEquation} with modified curvature $\phi_{ABCD}=(\TMETop G)_{ABCD}$,  then the real part of the map
\begin{align} \label{eq:Spin2Shatstar}
\phi_{ABCD} \mapsto 
(\TSISop^{\star}\phi)_{ABA'B'}={}&-2 (\TSINop^{\star} \TMETop^{\dagger}\mathcal{K}^1 \mathcal{P}^{2} \kappa_1^3\bar{\kappa}_{1'}^{-1}\phi)_{ABA'B'}\quad 
\end{align}
generates a new solution to the source-free vacuum linearized Einstein equation. 
\end{theorem}
For convenience we state the adjoints of \eqref{eq:TSINop} and \eqref{eq:TMETop} explicitly,
\begin{subequations}
\begin{align}
\TSINop^{\star}={}&
 \tfrac{1}{3} \bar{\kappa}_{1'}^4 \sCurl_{(0,-4)} \overline{\mathcal{K}}^0 \sDiv_{(0,3)} 
 - \bar{\kappa}_{1'}^4 \sCurl_{(0,-4)} \sCurlDagger \overline{\mathcal{K}}^1  \nonumber \\
 &-  \tfrac{1}{4} \bar{\kappa}_{1'}^4 \sCurl_{(0,-4)} \overline{\mathcal{K}}^0 \sDiv_{(0,3)} \overline{\mathcal{K}}^0 \overline{\mathcal{K}}^2  
 + 3 \bar\Psi_{2} \bar{\kappa}_{1'}^4 \overline{\mathcal{K}}^1,\\
\TMETop^{\dagger} ={}&\tfrac{1}{2} \sCurlDagger \sCurlDagger.
\end{align}
\end{subequations}
\begin{proof} 
Applying the operator $\overline{\mathcal{P}}^{2} \overline{\mathcal{K}}^1 \overline{\TMETop}$ to the identity \eqref{eq:Spin2NiceIdentity} and identifying the different pieces with the operators \eqref{eq:TSIOperatorDefs} gives the relation \eqref{eq:SEOTSpin2TSI} after finding that the $(\sTwist \mathcal{A})_{ABA'B'}$ term can be seen as the trace-free part of a linearized diffeomorphism, which will not contribute to the gauge invariant extreme components of the curvature. The only terms in \eqref{eq:Spin2NiceIdentity} that depends on $\slashed{G}$ are $(\sTwist \mathcal{A})_{ABA'B'}$ and $(\TSINop \vartheta \Phi)_{ABA'B'}$, but as we have already concluded, the $(\sTwist \mathcal{A})_{ABA'B'}$ term does not contribute to the extreme components of the curvature. Therefore $(\TSISop \vartheta \Phi[0,\slashed{G}])_{A'B'C'D'}=0$, where $(\vartheta \Phi[0,\slashed{G}])_{ABA'B'}=(\sTwist \sTwist \slashed{G})_{ABA'B'}$. This gives the identity \eqref{eq:STwistTwistSpin2TSI}.

The $\star$-adjoint of \eqref{eq:SEOTSpin2TSI}, using the self-adjointness of the operators $\TSIOop$ \eqref{eq:TSIOopSelfadj}, $\TSIUop$ \eqref{eq:TSIUopSelfadj} and $\TMEEop$ \eqref{eq:Spin2TMEEopAdjoints}, yields
\begin{align}
(\TMEEop \TSISop^\star \varphi)_{ABA'B'} = (\TMETop^\star \TSIOop \varphi)_{ABA'B'} - (\overline{\TMETop}^\star \overline{\TSIUop} \varphi)_{ABA'B'}.
\end{align}
We want to use a solution of the TSI \eqref{eq:TSIEqOperatorForm} to generate new solutions, but the right-hand side of the last equation is not of that form. However, taking the real part of the "new metric", we find
\begin{align}
(\TMEEop (\TSISop^\star \phi + \TSISop{}^\dagger \bar\phi))_{ABA'B'} ={}& 
(\TMETop^\star (\TSIOop \phi - \TSIUop \bar\phi))_{ABA'B'} \nonumber \\
&+ (\TMETop^\dagger (\overline{\TSIOop} \bar\phi - \overline{\TSIUop} \phi ))_{ABA'B'}.
\end{align}
Now the two terms on the right-hand side contain the right-hand side of \eqref{eq:SEOTSpin2TSI} and its complex conjugate. Therefore solutions of the TSI \eqref{eq:TSIEqOperatorForm} generate metrics in the kernel of the $\TMEEop$ operator. 
The adjoint of \eqref{eq:STwistTwistSpin2TSI} gives
\begin{align}
\sDiv \sDiv \TSISop^\star \phi={}&0.
\end{align}
As the $\sDiv \sDiv$ operator is real, we see that also the linearized scalar curvature of the new metric vanishes. Hence, the mapping $\phi_{ABCD}\mapsto \Re(\TSISop^\star \phi)_{ABA'B'}$ to the real part generates new solutions to linearized gravity from solutions to the TSI equation \eqref{eq:TSIEqOperatorForm}.
\end{proof}

For later reference, we define
\begin{align} \label{eq:kmetric}
k_{ABA'B'} \defeq \Re(\TSISop^\star \phi)_{ABA'B'}
\end{align}
for the new real metric generated from TSI solutions via \eqref{eq:Spin2Shatstar}. Since the metric is real, its self-dual curvature is the complex conjugate of the anti-self-dual curvature and hence it is sufficient to calculate the latter via
\begin{align}\label{eq:kMetricCurvature}
\psi_{ABCD} ={}& (\TMETop k)_{ABCD} \nonumber \\
={}& \tfrac{1}{2} (\TMETop \TSISop^\star \phi)_{ABCD} + \tfrac{1}{2} (\TMETop \TSISop^\dagger \bar\phi)_{ABCD}.
\end{align}
The first part cannot be reduced by the field equations, but at least the extreme components of the second part can be reduced due to the following identity,
\begin{align}\label{eq:CurlDgCurlDgShatStarIdentity}
2(\overline{\mathcal{P}}^{2} \overline{\TMETop}  \TSISop^{\star}\phi)_{A'B'C'D'}
={}&(\overline{\mathcal{P}}^{2} \sCurlDagger \sCurl \TSIOop \phi)_{A'B'C'D'} \nonumber \\
&- \bar\Psi_{2} (\overline{\mathcal{P}}^{2} \TSIOop \phi)_{A'B'C'D'}.
\end{align}
Next we provide an interpretation along the lines of the spin-1 case.

\subsection{Hertz potentials, Debye potentials and Teukolsky separability} \label{sec:Spin2Interpretation}

To discuss the complex metric \eqref{eq:hmetric}, we briefly recall the Hertz potential construction for linearized gravity on a vacuum type D background similar to \cite[Section V]{1979PhRvD..19.1641K}.  Let $\bar P_{A'B'C'D'}$ be a Hertz potential with vanishing non-extreme components w.r.t. a principal dyad, i.e. a symmetric spinor solving the Hertz equation
\begin{align} \label{eq:Spin2HertzEquation}
(\sCurlDagger \sCurl_{(0,4)} \bar{P})_{A'B'C'D'} + 3 \bar\Psi_{2}\bar{P}_{A'B'C'D'} ={}&0.
\end{align}
Then the complex, symmetric spinor $H_{ABA'B'}$ generated via the Hertz map,
\begin{align} \label{eq:Spin2HertzMap}
H_{ABA'B'}={}&(\sCurl \sCurl_{(0,4)} \bar{P})_{ABA'B'},
\end{align}
solves the linearized Einstein equation \eqref{eq:LinearizedEinsteinEquation}.
\begin{remark}
In \cite[Section V]{1979PhRvD..19.1641K} the more general Hertz equation
\begin{align}
(\sCurlDagger \sCurl \bar{P})_{A'B'C'D'} + 3 \bar{P}_{A'B'C'D'} \bar\Psi_{2} ={}&(\sCurlDagger \mathcal{G})_{A'B'C'D'},
\end{align}
and Hertz map
\begin{align}
H_{ABA'B'}={}& (\sCurl \sCurl \bar{P})_{ABA'B'} - (\sCurl \mathcal{G})_{ABA'B'},
\end{align}
with a "gauge" spinor $\mathcal{G}_{AA'B'C'}$, were proposed. We checked that the linearized Ricci scalar vanishes, $ \vartheta \Lambda[H] = 0 $, but for the trace-free Ricci spinor components, we find e.g.
\begin{align}
(\TMEEop H)_{00'}={}&-6 \bar\Psi_{2} (\edt
 + \bar{\tau}')(\tfrac{1}{4} \mathcal{G}_{00'}
 + \bar{P}_{0'} \bar{\tau}'),
\end{align}
which fixes $\mathcal{G}_{00'}$. Therefore $\mathcal{G}_{AA'B'C'}$ is not a freely specifiable gauge field on a curved background. For the case $ \mathcal{G}_{AA'B'C'} = -4 \bar{P}_{A'B'C'D'} \bar{U}^{D'}{}_{A}$, which has components 
$\mathcal{G}_{00'}=-4 \bar{P}_{0'} \bar{\tau}',
\mathcal{G}_{01'}=0,
\mathcal{G}_{02'}=0,$
$\mathcal{G}_{03'}=4 \bar{P}_{4'} \bar{\rho},
\mathcal{G}_{10'}=-4 \bar{P}_{0'} \bar{\rho}',
\mathcal{G}_{11'}=0,
\mathcal{G}_{12'}=0,
\mathcal{G}_{13'}=4 \bar{P}_{4'} \bar{\tau}$
we end up with \eqref{eq:Spin2HertzEquation}, \eqref{eq:Spin2HertzMap}. This choice was made in the course of the proof in \cite[Section V]{1979PhRvD..19.1641K}.
\end{remark}
The operator in \eqref{eq:Spin2HertzEquation} is diagonal, c.f. the complex conjugate of \eqref{eq:DiagonalWaveOperator}.  It is actually the complex conjugate of the TME operator \eqref{eq:Spin2OopDef} if we set
\begin{align} \label{eq:Spin2LinGravHertzPotential}
\overline{P}_{A'B'C'D'}= \bar{\kappa}_{1'}^4 (\overline{\mathcal{P}}^2 \bar{\phi})_{A'B'C'D'}.
\end{align} 
With this choice of $\overline{P}_{A'B'C'D'}$, the Hertz map \eqref{eq:Spin2HertzMap} is the complex conjugate of the symmetry operator \eqref{eq:Spin2LinMetricSdagger} viewed as a map from metric to metric, $H_{ABA'B'} = \bar h_{ABA'B'}$. This shows that the spin-2 Hertz potential formalism on vacuum type D backgrounds can be understood as the symmetry operator \eqref{eq:Spin2LinMetricSdagger}.

Because \eqref{eq:Spin2HertzEquation} is diagonal, one can choose the Hertz potential $\overline{P}_{A'B'C'D'}$ to have only one non-vanishing extreme dyad component. This weighted scalar is called Debye potential and solves (up to rescaling by $\bar\kappa_{1'}$) by construction one of the complex conjugated scalar TMEs \eqref{eq:Spin2TMEGHPForm}. 

If we use the linearized curvature as a Hertz potential via \eqref{eq:Spin2LinGravHertzPotential}, the extreme components used as Debye potentials generate different new solutions to the linearized Einstein equation. However, the difference is not very complicated and we derive it explicitly. We do this in two steps. First a modification of the symmetry operator analogous to \eqref{eq:Spin1CurlDgSdgModified} is made, but this modification is not pure gauge on a curved background. Then in the second step we add the correction term to have a pure gauge modification and to show that both extreme curvature scalars generate the same new solution to linearized gravity.

A modification of the symmetry operator \eqref{eq:Spin2LinMetricSdagger} of the form (the second term is a solution on its own due to \eqref{eq:Spin2NiceIdentity}, but it is not pure gauge)
\begin{align} \label{eq:Spin2LinMetrichpm}
\underline{h}^{\pm}_{ABA'B'} =  
(\TMESop^\dagger \phi)_{ABA'B'} \pm (\TMESop^{\dagger}\mathcal{K}^1 \phi)_{ABA'B'}
\end{align} 
is again a complex solution to linearized gravity and it depends only on one of the extreme curvature scalars ($\phi_0$ for $+$ and $\phi_4$ for $-$). These are the Debye potential maps given by the Hertz map \eqref{eq:Spin2HertzMap} with one of the extreme components set to zero. However, we will see that $\underline{h}^+ \neq \underline{h}^-$ and contrary to the spin-1 case the difference is not pure gauge. For completeness we present the components of the self-dual curvature $\bar{\underline{\chi}}^\pm_{A'B'C'D'}=(\overline{\TMETop} \underline{h}^\pm)_{A'B'C'D'}$ and anti-self-dual curvature $\underline{\psi}^\pm_{ABCD}=(\TMETop \underline{h}^\pm)_{ABCD}$ of \eqref{eq:Spin2LinMetrichpm} for the plus sign in \eqref{eq:Spin2LinMetrichpCurvGHP} and for the minus sign in \eqref{eq:Spin2LinMetrichmCurvGHP} in the appendix. 

For the second step we note that a pure gauge metric can be constructed from \eqref{eq:Spin2NiceIdentity} and in the source-free case we have analogous to \eqref{eq:Spin1SdgK1phi20}
\begin{align} 
(\TMESop^{\dagger}\mathcal{K}^1 \phi)_{ABA'B'}- \tfrac{1}{2} \Psi_{2} \kappa_1^3 (\mathcal{L}_{\xi}G)_{ABA'B'}={}& (\sTwist \mathcal{A})_{ABA'B'}, \label{eq:Spin1SdgK1phi20a}\\
- \tfrac{1}{2} \Psi_{2} \kappa_1^3 (\mathcal{L}_{\xi}\slashed{G})={}& (\sDiv \mathcal{A}),
\end{align}
where the second equation is given in \cite[eq.(4.13)]{2016arXiv160106084A}. The right-hand side are the trace-free and trace parts of a linearized diffeomorphism (one can discuss the real and imaginary parts separately and deal with real diffeomorphisms), so we can add/subtract this to/from the linearized metric \eqref{eq:hmetric} without changing the actual perturbation,
\begin{subequations} 
\begin{align} 
h_{ABA'B'} ={}&  
(\TMESop^\dagger \phi)_{ABA'B'} \pm \bigl((\TMESop^{\dagger}\mathcal{K}^1 \phi)_{ABA'B'} \nonumber \\
& \hspace{55pt}- \tfrac{1}{2} \Psi_{2} \kappa_1^3 (\mathcal{L}_{\xi}G)_{ABA'B'}\bigr) \label{eq:Spin2LinMetricSdaggerb} \\
={}&  
\underline{h}^{\pm}_{ABA'B'} \mp \tfrac{1}{2} \Psi_{2} \kappa_1^3 (\mathcal{L}_{\xi}G)_{ABA'B'}, \label{eq:hTohpm} \\
\slashed{h} ={}& \mp \tfrac{1}{2} \Psi_{2} \kappa_1^3 (\mathcal{L}_{\xi}\slashed{G}).
\end{align}
\end{subequations} 
Here we understand $(h_{ABA'B'},\slashed{h})$ as a representative of gauge equivalent metrics, and therefore the equalities here are up to gauge.
It follows that the difference between $\underline{h}^+$ and $\underline{h}^-$ is a Lie derivative and a gauge transformation. Also note that the gauge transformation introduces a trace to the linearized metric. Now, using \eqref{eq:hTohpm}, \eqref{eq:Spin2LinMetrichpCurvGHP} and \eqref{eq:Spin2LinMetrichmCurvGHP} we get the different forms of the curvature of the metric $h_{ABA'B'}$. 

\begin{lemma}
The extreme components of the self-dual and anti-self-dual curvatures of the complex metric $h_{ABA'B'}$ are
\begin{align}
\begin{aligned}\label{eq:hcurvatureGHPplus}
\bar{\chi}_{0'}={}&  \edt' \edt' \edt' \edt' (\kappa_1^4\phi_{0}) - \tfrac{1}{2} \Psi_{2} \kappa_1^3 \mathcal{L}_{\xi}\bar{\phi}_{0'}, \\
\bar{\chi}_{4'}={}& \tho' \tho' \tho' \tho' (\kappa_1^4\phi_{0}) - \tfrac{1}{2} \Psi_{2} \kappa_1^3 \mathcal{L}_{\xi}\bar{\phi}_{4'}, \\
\psi_{0}={}&\tfrac{1}{2} \Psi_{2} \kappa_1^3 \mathcal{L}_{\xi}\phi_{0}, \\
\psi_{4}={}&- \tfrac{1}{2} \Psi_{2} \kappa_1^3 \mathcal{L}_{\xi}\phi_{4},
\end{aligned}
\end{align} 
for the plus case and
\begin{align} 
\begin{aligned}\label{eq:hcurvatureGHPminus}
\bar{\chi}_{0'}={}&  \tho \tho \tho \tho (\kappa_1^4\phi_{4}) + \tfrac{1}{2} \Psi_{2} \kappa_1^3 \mathcal{L}_{\xi}\bar{\phi}_{0'}, \\
\bar{\chi}_{4'}={}& \edt \edt \edt \edt (\kappa_1^4\phi_{4}) + \tfrac{1}{2} \Psi_{2} \kappa_1^3 \mathcal{L}_{\xi}\bar{\phi}_{4'}, \\
\psi_{0}={}&\tfrac{1}{2} \Psi_{2} \kappa_1^3 \mathcal{L}_{\xi}\phi_{0}, \\
\psi_{4}={}&- \tfrac{1}{2} \Psi_{2} \kappa_1^3 \mathcal{L}_{\xi}\phi_{4},
\end{aligned}
\end{align} 
for the minus case. As the corresponding metrics only differ by a linearized diffeomorphism, these extreme curvature components are the same. 
\end{lemma}
Restricting to the real or imaginary part of the metric \eqref{eq:Spin2LinMetricSdaggerb} leads to linear combinations $\frac{1}{2}(\overline{\chi}_{n} + \overline{\psi}_{n})$ or $\frac{1}{2i}(\overline{\chi}_{n} - \overline{\psi}_{n})$ for the self-dual curvature. The Lie derivative terms makes both \eqref{eq:hcurvatureGHPplus} and \eqref{eq:hcurvatureGHPminus} dependent of $\phi_0$ and $\phi_4$. 
\begin{remark}
If we study the imaginary part of $h_{ABA'B'}$ we get for the Kerr case $\Im( h)_{ABA'B'}=\Im(\underline{h}^{\pm})_{ABA'B'}$ because $G_{ABA'B'}$ and $\xi_{AA'}$ are real and $\Psi_{2} \kappa_1^3$ is a real constant in that case. Hence, it reduces down to the Debye potential case with self-dual curvature $\frac{1}{2i}(\overline{\underline{\chi}}^{\pm}_{n} - \overline{\underline{\psi}}^{\pm}_{n})$ given by \eqref{eq:Spin2LinMetrichpCurvGHP} and \eqref{eq:Spin2LinMetrichmCurvGHP}.
\end{remark}
 
An alternative point of view can be obtained by noting that $\overline{\mathcal{P}}^{2}\overline{\TMETop}$ on \eqref{eq:Spin1SdgK1phi20a} gives
\begin{subequations}
\begin{align}
(\overline{\mathcal{P}}^{2} \overline{\TMETop} \sTwist \mathcal{A})_{A'B'C'D'}
={}&(\overline{\mathcal{P}}^{2} \overline{\TMETop}  \TMESop^{\dagger}\mathcal{K}^1 \phi)_{A'B'C'D'} \nonumber \\
&\, - \tfrac{1}{2} \Psi_{2} \kappa_1^3 (\overline{\mathcal{P}}^{2} \mathcal{L}_{\xi}\bar{\phi})_{A'B'C'D'} \\
={}&\tfrac{1}{2} \kappa_1 \bar{\kappa}_{1'}^{-3} \bigl((\overline{\mathcal{K}}^1 \TSIOop  \phi)_{A'B'C'D'} \nonumber\\
&\hspace{35pt} - (\overline{\mathcal{K}}^1 \TSIUop \bar{\phi})_{A'B'C'D'}\bigr).
\end{align}
\end{subequations}
The left-hand side vanishes because the extreme curvature components are gauge invariant and the right-hand side vanishes because of the TSI. Summarized, the extreme components of $\phi_{ABCD}$ can be used as a Hertz potential via \eqref{eq:Spin2LinGravHertzPotential}, or each one of them as a Debye potential. Our analysis shows that the difference between these three possibilities are the TSI and Lie derivatives.

Similar to \eqref{eq:Spin12ndtypeSymGHPangular} and \eqref{eq:Spin12ndtypeSymGHPradial} for the spin-1 case, we can write the extreme components of the curvature of \eqref{eq:hmetric} in "purely angular" or "purely radial" form by adding/subtracting the TSI in a different way.
\begin{lemma} 
The extreme components of the self-dual curvature \eqref{eq:Spin2TMEsdCurvature} can be represented in the form
\begin{align}
\bar{\chi}_{A'B'C'D'}={}&(\overline{\TMETop} \TMESop^{\dagger}\phi)_{A'B'C'D'} \pm \tfrac{1}{2} \kappa_1 \bar{\kappa}_{1'}^{-3} \bigl((\TSIOop \phi)_{A'B'C'D'} \nonumber \\
& \hspace{100pt}-  (\TSIUop \bar{\phi})_{A'B'C'D'}\bigr).
\end{align}
with the second term vanishing on solutions. The components read
\begin{subequations}\label{eq:Spin2TMESymasdCurvMod3}
\begin{align}
\bar{\chi}_{0'}={}&\edt' \edt' \edt' \edt' (\kappa_1^4\phi_{0}) - \tfrac{1}{2} \Psi_{2} \kappa_1^3 \mathcal{L}_{\xi}\bar{\phi}_{0'},\\
\bar{\chi}_{4'}={}&\edt \edt \edt \edt (\kappa_1^4\phi_{4}) + \tfrac{1}{2} \Psi_{2} \kappa_1^3 \mathcal{L}_{\xi}\bar{\phi}_{4'}.
\end{align}
\end{subequations}
for the plus sign and
\begin{subequations}\label{eq:Spin2TMESymasdCurvMod4}
\begin{align}
\bar{\chi}_{0'}={}& \tho \tho \tho \tho (\kappa_1^4\phi_{4}) + \tfrac{1}{2} \Psi_{2} \kappa_1^3 \mathcal{L}_{\xi}\bar{\phi}_{0'},\\
\bar{\chi}_{4'}={}& \tho' \tho' \tho' \tho' (\kappa_1^4\phi_{0}) - \tfrac{1}{2} \Psi_{2} \kappa_1^3 \mathcal{L}_{\xi}\bar{\phi}_{4'}.
\end{align}
\end{subequations}
for the minus sign. 
\end{lemma} 
These alternative forms of the symmetry operator play an important role for its invertibility.

Finally, we consider the sixth order symmetry operator \eqref{eq:Spin2Shatstar}. Recall that  the analogous (i.e. from the TSI operator identity) symmetry operator in the spin-1 case encoded TME separability. This cannot work here due to the mismatched number of derivatives of the symmetry operator and the TME (however, see remark~\ref{rem:Spin2TMESymop}) but we find the following
\begin{lemma} \label{lem:Spin21stkindComps}
On a Schwarzschild background, the extreme components of the curvature \eqref{eq:kMetricCurvature},
\begin{align}
\psi_{ABCD} ={}& \tfrac{1}{2} (\TMETop \TSISop^\star \phi)_{ABCD} + \tfrac{1}{2} (\TMETop \TSISop^\dagger \bar\phi)_{ABCD},\label{eq:kMetricCurvature2}
\end{align}
after simplification due to field equations take the form
\begin{align}
\begin{aligned}\label{eq:SixthOrderSymOpSch}
\psi_{0}={}& \kappa_1^6 (\edt \edt'
 + 2 \Psi_{2}
 + 2 \rho \rho')(\edt \edt'
 + \Psi_{2}
 + \rho \rho')\edt \edt' \phi_{0},\\
 \psi_{4}={}& \kappa_1^6 (\edt' \edt
 + 2 \Psi_{2}
 + 2 \rho \rho')(\edt' \edt
 + \Psi_{2}
 + \rho \rho')\edt' \edt \phi_{4}.
\end{aligned}
\end{align}
In a principal frame we have $\Psi_2 = - \frac{M}{r^3}, \rho\rho' = \frac{2M-r}{2r^3}$ with $M$ the mass of the black hole and $r$ the areal radius coordinate. Since $\kappa_1 \propto \Psi_2^{-1/3}$ we choose $\kappa_1=-r/3$ so that $ \kappa_1^2 \Psi_{2}  + \kappa_1^2 \rho \rho'= - 1/18$ and \eqref{eq:SixthOrderSymOpSch} is almost the spin-weighted spherical Laplacian to the power three (the difference is a constant shift in the eigenvalues).
\end{lemma}
This can be seen through direct component calculations and the following argument. As we assume that $\phi_{ABCD}$ satisfies both TME and TSI, we can use the identity \eqref{eq:CurlDgCurlDgShatStarIdentity} to reduce the order of the second term in \eqref{eq:kMetricCurvature2} due to the TSI. The first term however, will remain sixth order, but we can use the TME to eliminate the $\tho\tho'$ derivatives after commutations. The lower order terms from the second term in \eqref{eq:kMetricCurvature2} cancel with the lower order terms from the first term so that we finally end up with \eqref{eq:SixthOrderSymOpSch}.

\begin{remark} \label{rem:Spin2TMESymop}
It should be noted that on a generalized Kerr-NUT spacetime (real $\xi_{AA'}$), the operator (similar to \eqref{eq:spin11sttypeSymMod} in the spin-1 case) defined by
\begin{align}\label{eq:SecondOrderTMESymop}
\mathrm{S}&: \SymSpin_{4,0} \to \SymSpin_{4,0}, \nonumber \\
\mathrm{S} &\defeq{} \kappa_1\mathcal{P}^{2}\mathcal{K}^1\sCurl_{(-1,1)} \bar{\kappa}_{1'}\overline{\mathcal{K}}^1 \sCurlDagger_{(-3,0)} \mathcal{P}^{2} - \kappa_1\mathcal{K}^1 \mathcal{P}^{2} \mathcal{L}_{\xi},
\end{align}
is a second order symmetry operator for the TME. In GHP form, the non-vanishing components are (c.f. Theorem 5.4.1 in \cite{aksteiner:thesis})
\begin{subequations}
\begin{align}
(\mathrm{S}\phi)_{0}={}&2 \kappa_1 \bar{\kappa}_{1'} (\edt
 -  \tau
 -  \bar{\tau}')(\edt'
 - 4 \tau')\phi_{0} \nonumber \\
 &+ (\bar{\kappa}_{1'}- \kappa_1) \mathcal{L}_{\xi}\phi_{0}
  -  \kappa_1^{-3} \bar{\kappa}_{1'} (\TMEOop \phi)_{0},\\
(\mathrm{S} \phi)_{4}={}&2 \kappa_1 \bar{\kappa}_{1'} (\edt'
 -  \bar{\tau}
 -  \tau')(\edt
 - 4 \tau)\phi_{4} \nonumber \\
 &+ (\kappa_1
 -  \bar{\kappa}_{1'}) \mathcal{L}_{\xi}\phi_{4}
  -  \kappa_1^{-3} \bar{\kappa}_{1'} (\TMEOop \phi)_ {4}.
\end{align}
\end{subequations}
If we assume that $\phi_{0}$ and $\phi_{4}$ satisfies the TME, then on a Schwarzschild spacetime we get $(\mathrm{S}\phi)_{0}=2 \kappa_1^2 \edt\edt'\phi_{0}$ and $(\mathrm{S} \phi)_{4}=2 \kappa_1^2 \edt' \edt\phi_{4}$. Therefore, \eqref{eq:SixthOrderSymOpSch} can be written in terms of  the $\mathrm{S}$ operator, which gives a relation to the TME separation constants. It is an open question if the sixth order operator can be factored also on the Kerr spacetime. Even though \eqref{eq:SecondOrderTMESymop} is a symmetry operator for the TME, it can not be interpreted as a symmetry operator for linearized gravity, but the sixth order operator comes from a linearized metric, and can therefore be cast into a form mapping linearized metrics to linearized metrics.
\end{remark}

\section{Conclusions} \label{sec:conclusion}

In this paper we have shown that for the Maxwell equations and linearized gravity on vacuum spacetimes of Petrov type D, the covariant TME and TSI equations can separately be cast into self-adjoint form. This was used to construct a symmetry operator in each of the cases which was then related to various concepts like Hertz and Debye potentials or TME separability. Moreover, the self-adjointness naturally leads to variational principles for the TME and TSI. A recent application of the corresponding canonical energy for the spin-2 TME in the Schwarzschild geometry for the discussion of linear stability by Prabhu and Wald is given in \cite{Prabhu:2018jvy}. We expect the symmetry operators to play an important role in the general study of decay estimates for spin-1 and spin-2 similar to the scalar wave equation case in \cite{andersson:blue:0908.2265}. The modifications of the operators with terms vanishing on-shell may open up the possibility to invert certain potential maps and lead to a generalization of the decay results of \cite{2014CMaPh.331..755A} to a curved background. Finally, the self-adjoint TSI for linearized gravity leads to a new conservation law which we plan to discuss in a separate paper.

\appendix

\section{Adjoints} \label{sec:adjoints}
In this section we collect the $\dagger$- and $\star$-adjoints of the algebraic and differential operators introduced in section~\ref{sec:Preliminaries}. First of all, for a general composition of operators $\mathbf{A}$ and $\mathbf{B}$ we have
\begin{align} \label{eq:ProdAdjoint}
 (\mathbf{A}\mathbf{B})^\dagger = \mathbf{B}^\dagger \mathbf{A}^\dagger, &&&
 (\mathbf{A}\mathbf{B})^\star = \mathbf{B}^\star \mathbf{A}^\star.
\end{align}
For general constants $n,m$, the adjoints of the extended fundamental spinor operators \eqref{eq:ExtendedFundSpinOp} are given by
\begin{subequations}
\begin{align}
(\sDiv_{(n,m)})^\dagger ={}& - \sTwist_{(-n,-m)}, &&&
(\sCurl_{(n,m)})^\dagger ={}& \sCurlDagger_{(-n,-m)}, \\
(\sTwist_{(n,m)})^\dagger ={}& - \sDiv_{(-n,-m)}, &&&
(\sCurlDagger_{(n,m)})^\dagger ={}& \sCurl_{(-n,-m)},\\
(\sDiv_{(n,m)})^\star ={}& - \sTwist_{(-m,-n)}, &&&
(\sCurl_{(n,m)})^\star ={}& \sCurl_{(-m,-n)}, \\
(\sTwist_{(n,m)})^\star ={}& - \sDiv_{(-m,-n)}, &&&
(\sCurlDagger_{(n,m)})^\star ={}& \sCurlDagger_{(-m,-n)}.
\end{align}
\end{subequations} 
The adjoints of the algebraic $\mathcal{K}$-operators \eqref{eq:Kprojectors} are
\begin{subequations}
\begin{align}
\mathcal{K}^0{}^{\dagger}={}&-4 \mathcal{K}^2, &
\mathcal{K}^1{}^{\dagger}={}&- \mathcal{K}^1, &
\mathcal{K}^2{}^{\dagger}={}&- \tfrac{1}{4} \mathcal{K}^0, \\
\overline{\mathcal{K}}^0{}^{\dagger}={}&-4 \overline{\mathcal{K}}^2, &
\overline{\mathcal{K}}^1{}^{\dagger}={}&- \overline{\mathcal{K}}^1, &
\overline{\mathcal{K}}^2{}^{\dagger}={}&- \tfrac{1}{4} \overline{\mathcal{K}}^0, \\
\mathcal{K}^0{}^{\star}={}&-4 \overline{\mathcal{K}}^2, &
\mathcal{K}^1{}^{\star}={}&- \overline{\mathcal{K}}^1, &
\mathcal{K}^2{}^{\star}={}&- \tfrac{1}{4} \overline{\mathcal{K}}^0, \\
\overline{\mathcal{K}}^0{}^{\star}={}&-4 \mathcal{K}^2, &
\overline{\mathcal{K}}^1{}^{\star}={}&- \mathcal{K}^1, &
\overline{\mathcal{K}}^2{}^{\star}={}&- \tfrac{1}{4} \mathcal{K}^0.
\end{align}
\end{subequations}
Multiplication by a scalar, e.g. $\kappa_1$ or $\Psi_2$, is a self-adjoint operation. For the projection operators \eqref{eq:SpinDecomposition} we find
\begin{align}
(\mathcal{P}^{s})^{\dagger}={}&\mathcal{P}^{s}, &
(\mathcal{P}^{s})^{\star}={}&\overline{\mathcal{P}}^{s}.
\end{align}
The Lie derivative is skew-adjoint,
\begin{align}
(\mathcal{L}_{\xi})^{\dagger} ={}&- \mathcal{L}_{\xi}.
\end{align}

\section{\texorpdfstring{$\mathcal{K}$}{K}-operator commutators} \label{app:KopComm}

To avoid clutter in the notation we present the commutators as operators which can be applied to arbitrary elements of $\SymSpin_{(k,l)}$, where $k$ and $l$ are large enough so that the combination of operators on the left hand side is properly defined. $n,m$ used below are arbitrary constants. The proof is straightforward but tedious.Examples can be found in \cite[Lemma 2.9]{2016arXiv160106084A}. Complex conjugating these identities gives commutators for the $\overline{\mathcal{K}}$ operators.
\begin{lemma}
Commuting $\mathcal{K}$-operator outside the extended fundamental spinor operators on $\SymSpin_{k,l}$ yields
\begin{subequations}
\begin{align}
\sDiv_{(n,m)}\mathcal{K}^2
={}&\mathcal{K}^2 \sDiv_{(n+1,m)},\\
\sCurlDagger_{(n,m)}\mathcal{K}^2
={}&\mathcal{K}^2 \sCurlDagger_{(n+1,m)},\\
\sCurl_{(n,m)}\mathcal{K}^2
={}&\mathcal{K}^2 \sCurl_{(n-1,m)}
-\tfrac{1}{k+1}\mathcal{K}^1 \sDiv_{(n+k,m)},\\
\sTwist_{(n,m)} \mathcal{K}^2
={}&\mathcal{K}^2 \sTwist_{(n-1,m)}
-\tfrac{1}{k+1}\mathcal{K}^1 \sCurlDagger_{(n+k,m)},
\end{align}
\begin{align}
\sDiv_{(n,m)}\mathcal{K}^1
={}&\tfrac{(k-1)(k+2)}{k(k+1)}\mathcal{K}^1 \sDiv_{(n,m)}
+\tfrac{2}{k}\mathcal{K}^2 \sCurl_{(n-k-1,m)},\\
\sCurlDagger_{(n,m)}\mathcal{K}^1
={}&\tfrac{(k-1)(k+2)}{k(k+1)}\mathcal{K}^1 \sCurlDagger_{(n,m)}
+\tfrac{2}{k}\mathcal{K}^2 \sTwist_{(n-k-1,m)},\\
\sCurl_{(n,m)}\mathcal{K}^1
={}&\mathcal{K}^1 \sCurl_{(n,m)}
+\tfrac{1}{2(k+1)}\mathcal{K}^0 \sDiv_{(n+k+1,m)},\\
\sTwist_{(n,m)}\mathcal{K}^1
={}&\mathcal{K}^1 \sTwist_{(n,m)}
+\tfrac{1}{2(k+1)}\mathcal{K}^0 \sCurlDagger_{(n+k+1,m)},
\end{align}
\begin{align}
\sDiv_{(n,m)}\mathcal{K}^0
={}&\tfrac{k(k+3)}{(k+1)(k+2)}\mathcal{K}^0 \sDiv_{(n-1,m)}
-\tfrac{4}{k+2}\mathcal{K}^1 \sCurl_{(n-k-2,m)},\\
\sCurlDagger_{(n,m)}\mathcal{K}^0
={}&\tfrac{k(k+3)}{(k+1)(k+2)}\mathcal{K}^0 \sCurlDagger_{(n-1,m)}
-\tfrac{4}{k+2}\mathcal{K}^1 \sTwist_{(n-k-2,m)},\\
\sCurl_{(n,m)}\mathcal{K}^0
={}&\mathcal{K}^0 \sCurl_{(n+1,m)},\\
\sTwist_{(n,m)}\mathcal{K}^0
={}&\mathcal{K}^0 \sTwist_{(n+1,m)}.
\end{align}
\end{subequations}
\end{lemma}
\begin{corollary}
Commuting $\mathcal{K}$-operator inside the extended fundamental spinor operators on $\SymSpin_{k,l}$ yields
\begin{subequations}
\begin{align}
\mathcal{K}^2 \sDiv_{(n,m)}
={}&\sDiv_{(n-1,m)}\mathcal{K}^2,\\
\mathcal{K}^2 \sCurlDagger_{(n,m)}
={}&\sCurlDagger_{(n-1,m)}\mathcal{K}^2,\\
\mathcal{K}^2 \sCurl_{(n,m)}
={}&\tfrac{(k-1)(k+2)}{k(k+1)}\sCurl_{(n+1,m)}\mathcal{K}^2
+\tfrac{1}{k+1}\sDiv_{(n+k+1,m)}\mathcal{K}^1,\\
\mathcal{K}^2 \sTwist_{(n,m)}
={}
&\tfrac{(k-1)(k+2)}{k(k+1)}\sTwist_{(n+1,m)}\mathcal{K}^2
+\tfrac{1}{k+1}\sCurlDagger_{(n+k+1,m)}\mathcal{K}^1,
\end{align}
\begin{align}
\mathcal{K}^1 \sDiv_{(n,m)}
={}
&\sDiv_{(n,m)}\mathcal{K}^1
-\tfrac{2}{k}\sCurl_{(n-k,m)}\mathcal{K}^2,\\
\mathcal{K}^1 \sCurlDagger_{(n,m)}
={}
&\sCurlDagger_{(n,m)}\mathcal{K}^1
-\tfrac{2}{k}\sTwist_{(n-k,m)}\mathcal{K}^2, \\
\mathcal{K}^1 \sCurl_{(n,m)}={}
&\tfrac{k(k+3)}{(k+1)(k+2)}\sCurl_{(n,m)}\mathcal{K}^1
-\tfrac{1}{2(k+1)}\sDiv_{(n+k+2,m)}\mathcal{K}^0,\\
\mathcal{K}^1 \sTwist_{(n,m)}={}
&\tfrac{k(k+3)}{(k+1)(k+2)}\sTwist_{(n,m)}\mathcal{K}^1
-\tfrac{1}{2(k+1)}\sCurlDagger_{(n+k+2,m)}\mathcal{K}^0,
\end{align}
\begin{align}
\mathcal{K}^0 \sDiv_{(n,m)}={}
&\sDiv_{(n+1,m)}\mathcal{K}^0
+\tfrac{4}{k+2}\sCurl_{(n-k-1,m)}\mathcal{K}^1,\\
\mathcal{K}^0\sCurlDagger_{(n,m)}={}
&\sCurlDagger_{(n+1,m)}\mathcal{K}^0
+\tfrac{4}{k+2}\sTwist_{(n-k-1,m)}\mathcal{K}^1,\\
\mathcal{K}^0\sCurl_{(n,m)}
={}&\sCurl_{(n-1,m)}\mathcal{K}^0,\\
\mathcal{K}^0 \sTwist_{(n,m)}
={}&\sTwist_{(n-1,m)}\mathcal{K}^0.
\end{align}
\end{subequations}
\end{corollary}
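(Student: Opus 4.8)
The plan is to derive all twelve identities from the preceding lemma by elementary operator algebra, invoking no new spinor computation; the lemma and corollary are the two directions of the same commutation relations, so everything reduces to inverting the lemma.

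First, the four one-term identities --- those for $\mathcal{K}^2_{k-1,l-1}\sDiv$, $\mathcal{K}^2_{k-1,l+1}\sCurlDagger$, $\mathcal{K}^0_{k+1,l-1}\sCurl$ and $\mathcal{K}^0_{k+1,l+1}\sTwist$ --- are immediate: each is exactly the matching one-term lemma identity, with its two sides interchanged, after the shift $v\mapsto v-1$ of the (first) extended index; the second extended index $w$ is carried along unchanged.

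The remaining eight identities split into four pairs that must be treated together: $\{\mathcal{K}^2_{k+1,l-1}\sCurl,\ \mathcal{K}^1_{k-1,l-1}\sDiv\}$ (coming from the lemma identities for $\sCurl\,\mathcal{K}^2$ and for $\sDiv\,\mathcal{K}^1$), its primed analogue $\{\mathcal{K}^2_{k+1,l+1}\sTwist,\ \mathcal{K}^1_{k-1,l+1}\sCurlDagger\}$, the pair $\{\mathcal{K}^1_{k+1,l-1}\sCurl,\ \mathcal{K}^0_{k-1,l-1}\sDiv\}$, and its primed analogue $\{\mathcal{K}^1_{k+1,l+1}\sTwist,\ \mathcal{K}^0_{k-1,l+1}\sCurlDagger\}$. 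I would handle the first pair and then repeat verbatim for the other three. The lemma supplies two identities, one expressing $\sCurl\,\mathcal{K}^2$ and one expressing $\sDiv\,\mathcal{K}^1$ (derivative outermost), each as a linear combination --- with coefficients rational in $k$ --- of the two operators $\mathcal{K}^2_{k+1,l-1}\sCurl$ and $\mathcal{K}^1_{k-1,l-1}\sDiv$ (derivative innermost). I would shift the free extended index in one of the two lemma identities so that these latter two operators occur with identical extended indices in both --- and here making the $\sDiv$-arguments agree automatically makes the $\sCurl$-arguments agree --- and then read the pair as a $2\times2$ linear system for $(\mathcal{K}^2_{k+1,l-1}\sCurl,\ \mathcal{K}^1_{k-1,l-1}\sDiv)$. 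Inverting it expresses each of these through $\sCurl\,\mathcal{K}^2$ and $\sDiv\,\mathcal{K}^1$, which is precisely the content of the two corresponding corollary identities; the coefficients collapse by the polynomial identities $(k-1)(k+2)+2=k(k+1)$ and $k(k+1)-2=(k-1)(k+2)$, and in the $\mathcal{K}^0$/$\mathcal{K}^1$ pairs additionally $2k(k+3)+4=2(k+1)(k+2)$. Complex conjugating the result yields the corresponding statements for the $\overline{\mathcal{K}}$-operators.

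I do not expect a genuine obstacle here: the whole argument is bookkeeping. The only point requiring care is choosing the extended-index shift in each pair so that the two ``derivative-innermost'' operators appearing in the two relevant lemma identities are literally the same operator, making the $2\times2$ system square; once that is set up, the inversion is routine. (One could alternatively prove each corollary identity directly by the same tedious spinor calculation used for the lemma, but deriving it this way is far cheaper.)
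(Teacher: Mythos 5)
Your proposal is correct and matches the route the paper intends: the result is stated as a corollary of the preceding lemma precisely because each identity follows by shifting the free extended index $v$ (legitimate since the lemma holds for all $v,w$) and, for the two-term cases, inverting the resulting $2\times2$ systems, whose determinants equal $1$ by the polynomial identities you cite, e.g. $(k-1)(k+2)+2=k(k+1)$ and $k(k+3)+2=(k+1)(k+2)$. The paper gives no separate proof beyond this implicit derivation, so your bookkeeping argument is exactly the expected one.
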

\begin{lemma}
Commuting two $\mathcal{K}$-operators on $\SymSpin_{k,l}$ yields
\begin{subequations}
\begin{align}
\mathcal{K}^2 \mathcal{K}^0 ={}&
\tfrac{(k-1)(k+2)}{k(k+1)}\mathcal{K}^0 \mathcal{K}^2
-\tfrac{4}{k+2}\mathcal{K}^1 \mathcal{K}^1,\\
\mathcal{K}^1 \mathcal{K}^0 ={}&
\tfrac{k}{k+2}\mathcal{K}^0 \mathcal{K}^1,\\
\mathcal{K}^2 \mathcal{K}^1 ={}&
\tfrac{k-2}{k}\mathcal{K}^1 \mathcal{K}^2,\\
\mathcal{K}^1 \mathcal{K}^1 ={}&
\mathrm{Id} - \tfrac{k-1}{k}\mathcal{K}^0 \mathcal{K}^2,
\end{align}
\end{subequations}
where $\mathrm{Id}$ is the identity operator.
\end{lemma}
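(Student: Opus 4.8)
The plan is to treat these as what they are: purely algebraic identities, containing no covariant derivatives, about the Killing spinor $\kappa_{AB}$, the scalar $\kappa_1$, and the symmetrization/contraction operations built into the definitions \eqref{eq:Kprojectors}. So everything reduces to spinor algebra in $\kappa_{AB}$. The only inputs beyond the definitions are the two quadratic relations $\kappa_{CD}\kappa^{CD}\propto\kappa_1^{2}$ and $\kappa_{A}{}^{C}\kappa_{CB}\propto\kappa_1^{2}\epsilon_{AB}$ (the second follows from the first, since the square of any symmetric valence-$(2,0)$ spinor is a multiple of $\epsilon_{AB}$); the precise constants are exactly those fixed by the relation between $\kappa_1$ and $\kappa_{AB}$ in \eqref{eq:TypeDKS}, and it is this normalization that produces the coefficients displayed in the lemma.

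The index-free route would be: apply each identity to an arbitrary symmetric spinor $\varphi_{A_1\dots A_k A'_1\dots A'_l}$ of the appropriate valence, with the primed indices inert throughout, and expand the left-hand composition, writing out the nested unprimed-index symmetrizations produced by the two $\mathcal{K}$-operators. The key move is to split each nested symmetrization into a "diagonal" part, in which the two $\kappa$-factors occupy disjoint slots of $\varphi$, and "contracted" parts, in which an index of the outer $\kappa$ is symmetrized onto an index of the inner one. In the contracted parts $\kappa_{A}{}^{C}\kappa_{CB}\propto\kappa_1^{2}\epsilon_{AB}$ (respectively $\kappa_{CD}\kappa^{CD}\propto\kappa_1^{2}$) collapses the $\kappa$-pair, leaving either a bare $\varphi$ — the origin of the identity term $\mathbf{1}$ in the last relation — or a single $\kappa$ still contracted into $\varphi$, which reassembles into the composite $\mathcal{K}$-operators appearing on the right. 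The rational-in-$k$ factors $\tfrac{(k-1)(k+2)}{k(k+1)}$, $\tfrac{k}{k+2}$, $\tfrac{k-2}{k}$, $\tfrac{k-1}{k}$ then come out purely from counting how a symmetrization over $k$ slots distributes over the relevant sub-collections.

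In practice the fastest and least error-prone route is the frame computation: fix a principal dyad so that $\kappa_{AB}=-2\kappa_1 o_{(A}\iota_{B)}$, record how $\mathcal{K}^{0}_{k,l}$, $\mathcal{K}^{1}_{k,l}$, $\mathcal{K}^{2}_{k,l}$ act on the dyad components $\varphi_{j}$ (each being an explicit weighted shift/contraction operator with coefficients depending on $j$ and $k$), and reduce both sides of every identity to a finite list of elementary numerical identities among those coefficients, checked directly — this is the computation implemented in \emph{xAct}, with worked analogues in \cite[Lemma~2.9]{2016arXiv160106084A}. Since the identities are frame-independent, verification in a principal dyad suffices.

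The main obstacle is entirely combinatorial: correctly resolving the iterated symmetrizations over the $k$ unprimed slots and keeping exact track of the rational coefficients, where it is easy to drop or double-count a term when the two $\kappa$'s partially contract. I would rely on the dyad-component check to pin down the constants and use the index-free argument as the frame-independent confirmation. As an additional consistency check, note that $\mathcal{K}^{1}_{k,0}\mathcal{K}^{1}_{k,0}$ together with $\tfrac{k-1}{k}\mathcal{K}^{0}_{k-2,0}\mathcal{K}^{2}_{k,0}$ must, via the last identity, combine into the spin projectors of Definition~\ref{def:SpinDecomposition}, which gives an independent grip on the coefficients.
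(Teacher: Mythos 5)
Your proposal is correct and matches the paper's own treatment: the paper offers no detailed derivation, stating only that the proof is a straightforward but tedious computation from the definitions \eqref{eq:Kprojectors} (with worked examples of the same kind in \cite[Lemma 2.9]{2016arXiv160106084A}), which is precisely the direct expansion of the nested symmetrizations using $\kappa_{A}{}^{C}\kappa_{CB}=\kappa_1^2\epsilon_{AB}$, $\kappa_{CD}\kappa^{CD}=-2\kappa_1^2$ that you outline. Your principal-dyad component check is a legitimate equivalent route, since the $\mathcal{K}$-operators and hence the identities are independent of the choice of principal frame.
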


\section{GHP form}\label{sec:GHPform}
Given a spinor dyad $(o_A, \iota_A)$, any spinor $\varphi_{A_1 \dots A_k A'_1 \dots A'_l} \in \SymSpin_{k,l}$ can be represented in terms of its $(k+1)(l+1)$ Newman-Penrose scalars
\begin{align} \label{eq:GHPcompDef}
\varphi_{i j'} = \varphi_{A_1 \dots A_k A'_1 \dots A'_l} \iota^{A_1}\dots \iota^{A_i}o^{A_{i+1}}\dots o^{A_k} \bar{\iota}^{A'_1}\dots \bar{\iota}^{A'_j}\bar{o}^{A'_{j+1}}\dots \bar{o}^{A'_l}.
\end{align}
In particular the Weyl spinor $\Psi_{ABCD}$ corresponds to the five complex Weyl scalars $\Psi_0, \dots , \Psi_4$. The dyad normalization $o_A \iota^A= 1$ is invariant under the transformation $o_A \to \lambda o_A$,  $\iota_A \to \lambda^{-1} \iota_A$ with $\lambda$ a non-vanishing, complex scalar field. It follows that \eqref{eq:GHPcompDef} transforms as a section of a complex line bundle,
\begin{align}
\varphi_{i j'} \to \lambda^p \bar{\lambda}^q \varphi_{i j'}
\end{align}
with $p = k - 2i, q = l - 2j$ and it is said to be of type $\{p,q\}$. The Levi-Civita connection $\nabla_{AA'}$ lifted to the complex line bundles of weighted fields is denoted by $\Theta_{AA'}$ and its dyad components are the weighted GHP operators
\begin{subequations} 
\begin{align}
\tho &= o^A \bar{o}^{A'}\Theta_{AA'}, &
\tho' &= \iota^A \bar{\iota}^{A'}\Theta_{AA'}, \\
\edt &= o^A \bar{\iota}^{A'}\Theta_{AA'}, &
\edt' &= \iota^A \bar{o}^{A'}\Theta_{AA'}.
\end{align}
\end{subequations} 
The remaining (properly weighted) complex connection coefficients are denoted by $\rho, \rho', \tau, \tau'$, $ \kappa, \kappa', \sigma, \sigma'$ and only the first four are non-zero with respect to a principal tetrad on vacuum type D spacetimes. Background and more details about the GHP formalism can be found in the original work \cite{GHP}. In this section we collect the dyad components of various covariant operators introduced in previous sections.
\subsection{Spin-1}\label{sec:GHPformSpin1}
The components of the TME operator \eqref{eq:Spin1Odef} are
\begin{subequations} \label{eq:Spin1TMEGHPForm}
\begin{align}
(\TMEOop \phi)_{0}={}&
 -  \frac{\kappa_1}{\bar{\kappa}_{1'}} \tho \bigl(\kappa_1 \bar{\kappa}_{1'}(\tho' - 2 \rho')\bigr)\phi_{0} \nonumber \\
 &+ \frac{\kappa_1}{\bar{\kappa}_{1'}} \edt \bigl(\kappa_1 \bar{\kappa}_{1'}(\edt' - 2 \tau')\bigr)\phi_{0} + \tfrac{1}{3} \kappa_1 \mathcal{L}_{\xi}\phi_{0},\\
(\TMEOop \phi)_{1}={}&0,\\
(\TMEOop \phi)_{2}={}&
 -  \frac{\kappa_1}{\bar{\kappa}_{1'}} \tho' \bigl(\kappa_1 \bar{\kappa}_{1'}(\tho - 2 \rho)\bigr)\phi_{2} \nonumber \\
 &+ \frac{\kappa_1}{\bar{\kappa}_{1'}} \edt' \bigl(\kappa_1 \bar{\kappa}_{1'}(\edt - 2 \tau)\bigr)\phi_{2} - \tfrac{1}{3} \kappa_1 \mathcal{L}_{\xi}\phi_{2}.
\end{align}
\end{subequations} 
The components of the TSI operator \eqref{eq:Spin1Ohatdef} are
\begin{subequations} \label{eq:Spin1TSIGHPForm}
\begin{align}
(\TSIOop \phi)_{0'}={}&- \frac{\bar{\kappa}_{1'}}{\kappa_1} \tho \tho (\kappa_1^2\phi_{2})
 + \frac{\bar{\kappa}_{1'}}{\kappa_1} \edt' \edt' (\kappa_1^2\phi_{0}),\\
(\TSIOop \phi)_{1'}={}&0,\\
(\TSIOop \phi)_{2'}={}&- \frac{\bar{\kappa}_{1'}}{\kappa_1} \tho' \tho' (\kappa_1^2\phi_{0})
 + \frac{\bar{\kappa}_{1'}}{\kappa_1} \edt \edt (\kappa_1^2\phi_{2}).
\end{align}
\end{subequations} 
The components of  the linear symmetry operator \eqref{eq:Spin1CurlShatst} are
\begin{subequations} \label{eq:Spin1CurlShatstComps} 
\begin{align}
\psi_{0}={}& \tho \bigl(\kappa_1 \bar{\kappa}_{1'}(\tho'
 - 2 \rho')\bigr)\phi_{0}  
 + \edt \bigl(\kappa_1 \bar{\kappa}_{1'}(\edt'
 - 2 \tau')\bigr)\phi_{0} \nonumber\\
 & - \tfrac{1}{3} \kappa_1 \mathcal{L}_{\bar{\xi}}\phi_{0},\\
 \psi_{1}={}&\tfrac{1}{2} (\tho' + \rho')\bigl(\kappa_1 \bar{\kappa}_{1'}(\edt' - 2 \tau')\bigr)\phi_{0} \nonumber\\
  &+ \tfrac{1}{2} (\edt' + \tau')\bigl(\kappa_1 \bar{\kappa}_{1'}(\tho' - 2 \rho')\bigr)\phi_{0}\nonumber \nonumber\\
& + \tfrac{1}{2} (\tho + \rho)\bigl(\kappa_1 \bar{\kappa}_{1'}(\edt - 2 \tau)\bigr)\phi_{2} \nonumber\\
 & + \tfrac{1}{2} (\edt + \tau)\bigl(\kappa_1 \bar{\kappa}_{1'}(\tho - 2 \rho)\bigr)\phi_{2},\\
\psi_{2}={}& \tho' \bigl(\kappa_1 \bar{\kappa}_{1'}(\tho
 - 2 \rho)\bigr)\phi_{2}
 + \edt' \bigl(\kappa_1 \bar{\kappa}_{1'}(\edt
 - 2 \tau)\bigr)\phi_{2} \nonumber\\
 & + \tfrac{1}{3} \kappa_1 \mathcal{L}_{\bar{\xi}}\phi_{2}.
\end{align}
\end{subequations} 
The components of the anti-linear symmetry operator \eqref{eq:Spin1CurlDgSdg} are
\begin{subequations} \label{eq:Spin1CurlDgSdgComps}
\begin{align}
\bar{\chi}_{0'}={}&\tho \tho (\kappa_1^2\phi_{2})
 + \edt' \edt' (\kappa_1^2\phi_{0}),\\
\bar{\chi}_{1'}={}&(\tho \edt
 + \bar{\tau}'\tho )(\kappa_1^2\phi_{2})
 + (\tho' \edt'
 + \bar{\tau}\tho' )(\kappa_1^2\phi_{0}),\\
\bar{\chi}_{2'}={}&\tho' \tho' (\kappa_1^2\phi_{0})
 + \edt \edt (\kappa_1^2\phi_{2}).
\end{align}
\end{subequations} 
The Lie derivative of $\phi_{AB}$ components along $\xi$ is given by
\begin{subequations}\label{eq:GHPLieXiphi20} 
\begin{align}
\mathcal{L}_{\xi}\phi_{0}={}&
 - 3 \kappa_1 (\rho' \tho - \rho \tho' - \tau' \edt + \tau \edt' - \Psi_{2}) \phi_{0},\\
\mathcal{L}_{\xi}\phi_{1}={}&-3 \kappa_1 (\rho' \tho - \rho \tho' - \tau' \edt + \tau \edt') \phi_{1},\\
\mathcal{L}_{\xi}\phi_{2}={}& - 3 \kappa_1 (\rho' \tho - \rho \tho' - \tau' \edt + \tau \edt' + \Psi_{2} ) \phi_{2},
\end{align}
\end{subequations}

\subsection{Spin-2}\label{sec:GHPformSpin2}
The components of the TME operator \eqref{eq:Spin2OopDef} are
\begin{subequations} \label{eq:Spin2TMEGHPForm}
\begin{align}
(\TMEOop \phi)_{0}={}&
 -  \frac{\kappa_1^3}{\bar{\kappa}_{1'}} \tho \bigl(\kappa_1 \bar{\kappa}_{1'}(\tho' - 4 \rho')\bigr)\phi_{0} \nonumber \\
 &+ \frac{\kappa_1^3}{\bar{\kappa}_{1'}} \edt \bigl(\kappa_1 \bar{\kappa}_{1'}(\edt' - 4 \tau')\bigr)\phi_{0} + \kappa_1^3 \mathcal{L}_{\xi}\phi_{0},\\
(\TMEOop \phi)_{1}={}&0, \quad
(\TMEOop \phi)_{2}={}0,\quad
(\TMEOop \phi)_{3}={}0,\\
(\TMEOop \phi)_{4}={}&
 -  \frac{\kappa_1^3}{\bar{\kappa}_{1'}} \tho' \bigl(\kappa_1 \bar{\kappa}_{1'}(\tho - 4 \rho)\bigr)\phi_{4} \nonumber \\
 &+ \frac{\kappa_1^3}{\bar{\kappa}_{1'}} \edt' \bigl(\kappa_1 \bar{\kappa}_{1'}(\edt - 4 \tau)\bigr)\phi_{4} - \kappa_1^3 \mathcal{L}_{\xi}\phi_{4}.
\end{align}
\end{subequations} 
The components of the operator \eqref{eq:TSIOperator} are
\begin{subequations} \label{eq:Spin2TSIOpGHPForm}
\begin{align}
(\TSIOop \phi)_{0'}={}&- \frac{\bar{\kappa}_{1'}^3}{\kappa_1} \tho \tho \tho \tho (\kappa_1^4\phi_{4})
 + \frac{\bar{\kappa}_{1'}^3}{\kappa_1} \edt' \edt' \edt' \edt' (\kappa_1^4\phi_{0}),\\
(\TSIOop \phi)_{1'}={}&0,\quad
(\TSIOop \phi)_{2'}={}0,\quad
(\TSIOop \phi)_{3'}={}0,\\
(\TSIOop \phi)_{4'}={}&- \frac{\bar{\kappa}_{1'}^3}{\kappa_1} \tho' \tho' \tho' \tho' (\kappa_1^4\phi_{0})
 + \frac{\bar{\kappa}_{1'}^3}{\kappa_1} \edt \edt \edt \edt (\kappa_1^4\phi_{4}).
\end{align}
\end{subequations}

The Lie derivative of $\phi_{ABCD}$ components along $\xi$ is given by
\begin{subequations} \label{eq:GHPLieXiphi40}
\begin{align}
\mathcal{L}_{\xi}\phi_{0}={}&
 - 3 \kappa_1 (\rho' \tho -  \rho \tho'  -  \tau' \edt  +  \tau \edt' -2 \Psi_{2}) \phi_{0},\\
\mathcal{L}_{\xi}\phi_{1}={}& - 3 \kappa_1 (\rho' \tho - \rho \tho'  - \tau' \edt  + \tau \edt' - \Psi_{2}) \phi_{1},\\
\mathcal{L}_{\xi}\phi_{2}={}&-3 \kappa_1 (\rho' \tho - \rho \tho' - \tau' \edt + \tau \edt') \phi_{2},\\
\mathcal{L}_{\xi}\phi_{3}={}&
 - 3 \kappa_1 (\rho' \tho - \rho \tho' - \tau' \edt + \tau \edt' + \Psi_{2}) \phi_{3},\\
\mathcal{L}_{\xi}\phi_{4}={}&
 - 3 \kappa_1 (\rho' \tho - \rho \tho' - \tau' \edt + \tau \edt' + 2 \Psi_{2}) \phi_{4}.
\end{align}
\end{subequations} 

The curvature components of the Debye map given by the metric \eqref{eq:Spin2LinMetrichpm} for the plus sign are given by
\begin{subequations} \label{eq:Spin2LinMetrichpCurvGHP}
\begin{align}
\overline{\underline{\chi}}^{+}_{0'}={}&\edt' \edt' \edt' \edt' (\kappa_1^4\phi_{0}),\\
\overline{\underline{\chi}}^{+}_{1'}={}&\bigl(\tho' (\edt'
 -  \bar{\tau})
 + 3 \rho' \bar{\tau}
 - 6 \bar{\rho}' \bar{\tau}
 + 3 \bar{\rho}' \tau'\bigr)\nonumber \\
  &(\edt'
 + 2 \bar{\tau})(\edt'
 + 2 \bar{\tau})(\kappa_1^4\phi_{0}),\\
\overline{\underline{\chi}}^{+}_{2'}={}&\bigl( (\edt'
 -  \bar{\tau})\tho' + 6 \rho' \bar{\tau}
 - 12 \bar{\rho}' \bar{\tau}
 + 6 \bar{\rho}' \tau' \bigr)\nonumber \\
  &\bigl((\edt'
  + 2 \bar{\tau})(\tho'
  + 3 \bar{\rho}') + \rho' \bar{\tau}
 - 2 \bar{\rho}' \bar{\tau}
 + \bar{\rho}' \tau' \bigr)(\kappa_1^4\phi_{0}),\\
\overline{\underline{\chi}}^{+}_{3'}={}& \left(\edt' (\tho'
 -  \bar{\rho}')
 + 3 \rho' \bar{\tau}
 - 6 \bar{\rho}' \bar{\tau}
 + 3 \bar{\rho}' \tau'\right)\nonumber \\
  &(\tho'
 + 2 \bar{\rho}')(\tho'
 + 2 \bar{\rho}')(\kappa_1^4\phi_{0}),\\
\overline{\underline{\chi}}^{+}_{4'}={}&\tho' \tho' \tho' \tho' (\kappa_1^4\phi_{0}),\\
\underline{\psi}^{+}_{0}={}&\Psi_{2} \kappa_1^3 \mathcal{L}_{\xi}\phi_{0}, \\
 \underline{\psi}^{+}_{1}={}&0, \qquad
\underline{\psi}^{+}_{2}={}0,\qquad
\underline{\psi}^{+}_{3}={}0,\qquad
\underline{\psi}^{+}_{4}={}0,
\end{align}
\end{subequations}
and for the minus sign we get
\begin{subequations} \label{eq:Spin2LinMetrichmCurvGHP}
\begin{align}
\overline{\underline{\chi}}^{-}_{0'}={}&\tho \tho \tho \tho (\kappa_1^4\phi_{4}),\\
\overline{\underline{\chi}}^{-}_{1'}={}& \bigl(\edt (\tho
 -  \bar{\rho})
 + 3 \bar{\rho} \tau
 + 3 \rho \bar{\tau}'
 - 6 \bar{\rho} \bar{\tau}' \bigr)\nonumber \\
  &(\tho
 + 2 \bar{\rho})(\tho
 + 2 \bar{\rho})(\kappa_1^4\phi_{4}),\\
\overline{\underline{\chi}}^{-}_{2'}={}&\bigl( (\edt
 -  \bar{\tau}')\tho + 6 \bar{\rho} \tau
 + 6 \rho \bar{\tau}'
 - 12 \bar{\rho} \bar{\tau}' \bigr) \nonumber \\
 &\bigl((\edt
  + 2 \bar{\tau}')(\tho
  + 3 \bar{\rho}) + \bar{\rho} \tau
 + \rho \bar{\tau}'
 - 2 \bar{\rho} \bar{\tau}' \bigr)(\kappa_1^4\phi_{4}),\\
\overline{\underline{\chi}}^{-}_{3'}={}& \bigl(\tho (\edt -\bar{\tau}')
 + 3 \bar{\rho} \tau
 + 3 \rho \bar{\tau}'
 - 6 \bar{\rho} \bar{\tau}' \bigr)\nonumber \\
  &(\edt
 + 2 \bar{\tau}')(\edt
 + 2 \bar{\tau}')(\kappa_1^4\phi_{4}),\\
\overline{\underline{\chi}}^{-}_{4'}={}&\edt \edt \edt \edt (\kappa_1^4\phi_{4}),\\
\underline{\psi}^{-}_{0}={}&0,\qquad
\underline{\psi}^{-}_{1}={}0,\qquad
\underline{\psi}^{-}_{2}={}0,\qquad
\underline{\psi}^{-}_{3}={}0, \\
\underline{\psi}^{-}_{4}={}&- \Psi_{2} \kappa_1^3 \mathcal{L}_{\xi}\phi_{4}.
\end{align}
\end{subequations} 
Restricting to the real or imaginary part of the metric \eqref{eq:Spin2LinMetrichpm} leads to the linear combinations $\frac{1}{2}(\overline{\underline{\chi}}^{\pm}_{n} + \overline{\underline{\psi}}^{\pm}_{n})$ or $\frac{1}{2i}(\overline{\underline{\chi}}^{\pm}_{n} - \overline{\underline{\psi}}^{\pm}_{n})$ for the self-dual curvature.

We note that on vacuum type D background the identities
\begin{subequations} 
\begin{align*}
\edt' \edt' \edt' \edt' (\kappa_1^4\phi_{0})={}&(\edt'
 -  \bar{\tau})(\edt'
 -  \bar{\tau})(\edt'
 -  \bar{\tau})(\edt'
 + 3 \bar{\tau})(\kappa_1^4\phi_{0}), \\
\tho' \tho' \tho' \tho' (\kappa_1^4\phi_{0})={}&(\tho'
 -  \bar{\rho}')(\tho'
 -  \bar{\rho}')(\tho'
 -  \bar{\rho}')(\tho'
 + 3 \bar{\rho}')(\kappa_1^4\phi_{0}),
\end{align*}
\end{subequations} 
and their GHP-primed and c.c. versions follow from the Ricci identities. This connects the more compact form presented here to the equations given in \cite{wald:1978PhRvL..41..203W}, \cite{1975PhLA...54....5C}, \cite{Chrzanowski:1975}.

\subsection*{Acknowledgements} We are grateful to Lars Andersson for many enlightening discussions and remarks. We also thank Bernard Whiting for his interest in this work.


\begin{thebibliography}{25}%
\makeatletter
\providecommand \@ifxundefined [1]{%
 \@ifx{#1\undefined}
}%
\providecommand \@ifnum [1]{%
 \ifnum #1\expandafter \@firstoftwo
 \else \expandafter \@secondoftwo
 \fi
}%
\providecommand \@ifx [1]{%
 \ifx #1\expandafter \@firstoftwo
 \else \expandafter \@secondoftwo
 \fi
}%
\providecommand \natexlab [1]{#1}%
\providecommand \enquote  [1]{``#1''}%
\providecommand \bibnamefont  [1]{#1}%
\providecommand \bibfnamefont [1]{#1}%
\providecommand \citenamefont [1]{#1}%
\providecommand \href@noop [0]{\@secondoftwo}%
\providecommand \href [0]{\begingroup \@sanitize@url \@href}%
\providecommand \@href[1]{\@@startlink{#1}\@@href}%
\providecommand \@@href[1]{\endgroup#1\@@endlink}%
\providecommand \@sanitize@url [0]{\catcode `\\12\catcode `\$12\catcode
  `\&12\catcode `\#12\catcode `\^12\catcode `\_12\catcode `\%12\relax}%
\providecommand \@@startlink[1]{}%
\providecommand \@@endlink[0]{}%
\providecommand \url  [0]{\begingroup\@sanitize@url \@url }%
\providecommand \@url [1]{\endgroup\@href {#1}{\urlprefix }}%
\providecommand \urlprefix  [0]{URL }%
\providecommand \Eprint [0]{\href }%
\providecommand \doibase [0]{http://dx.doi.org/}%
\providecommand \selectlanguage [0]{\@gobble}%
\providecommand \bibinfo  [0]{\@secondoftwo}%
\providecommand \bibfield  [0]{\@secondoftwo}%
\providecommand \translation [1]{[#1]}%
\providecommand \BibitemOpen [0]{}%
\providecommand \bibitemStop [0]{}%
\providecommand \bibitemNoStop [0]{.\EOS\space}%
\providecommand \EOS [0]{\spacefactor3000\relax}%
\providecommand \BibitemShut  [1]{\csname bibitem#1\endcsname}%
\let\auto@bib@innerbib\@empty
\bibitem [{\citenamefont {Carter}(1968)}]{Carter:1968}%
  \BibitemOpen
  \bibfield  {author} {\bibinfo {author} {\bibfnamefont {Brandon}\ \bibnamefont
  {Carter}},\ }\bibfield  {title} {\enquote {\bibinfo {title}
  {Hamilton-{J}acobi and {S}chr\"{o}dinger separable solutions of {E}instein's
  equations},}\ }\href {http://projecteuclid.org/euclid.cmp/1103841118}
  {\bibfield  {journal} {\bibinfo  {journal} {Comm. Math. Phys.}\ }\textbf
  {\bibinfo {volume} {10}},\ \bibinfo {pages} {280--310} (\bibinfo {year}
  {1968})}\BibitemShut {NoStop}%
\bibitem [{\citenamefont {{Carter}}\ and\ \citenamefont
  {{McLenaghan}}(1979)}]{Carter:McLenaghan:1979}%
  \BibitemOpen
  \bibfield  {author} {\bibinfo {author} {\bibfnamefont {B.}~\bibnamefont
  {{Carter}}}\ and\ \bibinfo {author} {\bibfnamefont {R.~G.}\ \bibnamefont
  {{McLenaghan}}},\ }\bibfield  {title} {\enquote {\bibinfo {title}
  {{Generalized total angular momentum operator for the Dirac equation in
  curved space-time}},}\ }\href {\doibase 10.1103/PhysRevD.19.1093} {\bibfield
  {journal} {\bibinfo  {journal} {Phys. Rev. D}\ }\textbf {\bibinfo {volume}
  {19}},\ \bibinfo {pages} {1093--1097} (\bibinfo {year} {1979})}\BibitemShut
  {NoStop}%
\bibitem [{\citenamefont {{Kamran}}\ and\ \citenamefont
  {{McLenaghan}}(1984)}]{Kamran:McLenaghan:1984}%
  \BibitemOpen
  \bibfield  {author} {\bibinfo {author} {\bibfnamefont {N.}~\bibnamefont
  {{Kamran}}}\ and\ \bibinfo {author} {\bibfnamefont {R.~G.}\ \bibnamefont
  {{McLenaghan}}},\ }\bibfield  {title} {\enquote {\bibinfo {title}
  {{Separation of variables and symmetry operators for the neutrino and Dirac
  equations in the space-times admitting a two-parameter abelian orthogonally
  transitive isometry group and a pair of shearfree geodesic null
  congruences}},}\ }\href {\doibase 10.1063/1.526269} {\bibfield  {journal}
  {\bibinfo  {journal} {J. Math. Phys.}\ }\textbf {\bibinfo {volume} {25}},\
  \bibinfo {pages} {1019--1027} (\bibinfo {year} {1984})}\BibitemShut {NoStop}%
\bibitem [{\citenamefont {{Kalnins}}\ \emph {et~al.}(1989)\citenamefont
  {{Kalnins}}, \citenamefont {{Miller}},\ and\ \citenamefont
  {{Williams}}}]{kalnins:miller:1989}%
  \BibitemOpen
  \bibfield  {author} {\bibinfo {author} {\bibfnamefont {E.~G.}\ \bibnamefont
  {{Kalnins}}}, \bibinfo {author} {\bibfnamefont {W.}~\bibnamefont {{Miller}},
  \bibfnamefont {Jr.}}, \ and\ \bibinfo {author} {\bibfnamefont {G.~C.}\
  \bibnamefont {{Williams}}},\ }\bibfield  {title} {\enquote {\bibinfo {title}
  {{Killing-Yano tensors and variable separation in Kerr geometry}},}\ }\href
  {\doibase 10.1063/1.528565} {\bibfield  {journal} {\bibinfo  {journal} {J.
  Math. Phys.}\ }\textbf {\bibinfo {volume} {30}},\ \bibinfo {pages}
  {2360--2365} (\bibinfo {year} {1989})}\BibitemShut {NoStop}%
\bibitem [{\citenamefont {{Carter}}(2006)}]{Carter:2006}%
  \BibitemOpen
  \bibfield  {author} {\bibinfo {author} {\bibfnamefont {B.}~\bibnamefont
  {{Carter}}},\ }\bibfield  {title} {\enquote {\bibinfo {title} {{Half Century
  of Black-Hole Theory: From Physicists' Purgatory to Mathematicians'
  Paradise}},}\ }in\ \href {\doibase 10.1063/1.2218167} {\emph {\bibinfo
  {booktitle} {A Century of Relativity Physics: ERE 2005}}},\ \bibinfo {series}
  {American Institute of Physics Conference Series}, Vol.\ \bibinfo {volume}
  {841},\ \bibinfo {editor} {edited by\ \bibinfo {editor} {\bibfnamefont
  {L.}~\bibnamefont {{Mornas}}}\ and\ \bibinfo {editor} {\bibfnamefont
  {J.}~\bibnamefont {{Diaz Alonso}}}}\ (\bibinfo {year} {2006})\ pp.\ \bibinfo
  {pages} {29--50},\ \Eprint {http://arxiv.org/abs/gr-qc/0604064}
  {gr-qc/0604064} \BibitemShut {NoStop}%
\bibitem [{\citenamefont {{Andersson}}\ \emph
  {et~al.}(2014{\natexlab{a}})\citenamefont {{Andersson}}, \citenamefont
  {{B{\"a}ckdahl}},\ and\ \citenamefont
  {{Blue}}}]{ABB:symop:2014CQGra..31m5015A}%
  \BibitemOpen
  \bibfield  {author} {\bibinfo {author} {\bibfnamefont {L.}~\bibnamefont
  {{Andersson}}}, \bibinfo {author} {\bibfnamefont {T.}~\bibnamefont
  {{B{\"a}ckdahl}}}, \ and\ \bibinfo {author} {\bibfnamefont {P.}~\bibnamefont
  {{Blue}}},\ }\bibfield  {title} {\enquote {\bibinfo {title} {{Second order
  symmetry operators}},}\ }\href {\doibase 10.1088/0264-9381/31/13/135015}
  {\bibfield  {journal} {\bibinfo  {journal} {Class. Quant. Grav.}\ }\textbf
  {\bibinfo {volume} {31}},\ \bibinfo {eid} {135015} (\bibinfo {year}
  {2014}{\natexlab{a}})},\ \Eprint {http://arxiv.org/abs/1402.6252}
  {arXiv:1402.6252 [gr-qc]} \BibitemShut {NoStop}%
\bibitem [{\citenamefont {{Kegeles}}\ and\ \citenamefont
  {{Cohen}}(1979)}]{1979PhRvD..19.1641K}%
  \BibitemOpen
  \bibfield  {author} {\bibinfo {author} {\bibfnamefont {L.~S.}\ \bibnamefont
  {{Kegeles}}}\ and\ \bibinfo {author} {\bibfnamefont {J.~M.}\ \bibnamefont
  {{Cohen}}},\ }\bibfield  {title} {\enquote {\bibinfo {title} {{Constructive
  procedure for perturbations of spacetimes}},}\ }\href {\doibase
  10.1103/PhysRevD.19.1641} {\bibfield  {journal} {\bibinfo  {journal} {Phys.
  Rev. D}\ }\textbf {\bibinfo {volume} {19}},\ \bibinfo {pages} {1641--1664}
  (\bibinfo {year} {1979})}\BibitemShut {NoStop}%
\bibitem [{\citenamefont {{Cohen}}\ and\ \citenamefont
  {{Kegeles}}(1975)}]{1975PhLA...54....5C}%
  \BibitemOpen
  \bibfield  {author} {\bibinfo {author} {\bibfnamefont {J.~M.}\ \bibnamefont
  {{Cohen}}}\ and\ \bibinfo {author} {\bibfnamefont {L.~S.}\ \bibnamefont
  {{Kegeles}}},\ }\bibfield  {title} {\enquote {\bibinfo {title} {{Space-time
  perturbations}},}\ }\href {\doibase 10.1016/0375-9601(75)90583-6} {\bibfield
  {journal} {\bibinfo  {journal} {Physics Letters A}\ }\textbf {\bibinfo
  {volume} {54}},\ \bibinfo {pages} {5--7} (\bibinfo {year}
  {1975})}\BibitemShut {NoStop}%
\bibitem [{\citenamefont {Chrzanowski}(1975)}]{Chrzanowski:1975}%
  \BibitemOpen
  \bibfield  {author} {\bibinfo {author} {\bibfnamefont {Paul~L.}\ \bibnamefont
  {Chrzanowski}},\ }\bibfield  {title} {\enquote {\bibinfo {title} {{Vector
  Potential and Metric Perturbations of a Rotating Black Hole}},}\ }\href
  {\doibase 10.1103/PhysRevD.11.2042} {\bibfield  {journal} {\bibinfo
  {journal} {Phys. Rev.}\ }\textbf {\bibinfo {volume} {D11}},\ \bibinfo {pages}
  {2042--2062} (\bibinfo {year} {1975})}\BibitemShut {NoStop}%
\bibitem [{\citenamefont {{Wald}}(1978)}]{wald:1978PhRvL..41..203W}%
  \BibitemOpen
  \bibfield  {author} {\bibinfo {author} {\bibfnamefont {R.~M.}\ \bibnamefont
  {{Wald}}},\ }\bibfield  {title} {\enquote {\bibinfo {title} {{Construction of
  solutions of gravitational, electromagnetic, or other perturbation equations
  from solutions of decoupled equations}},}\ }\href {\doibase
  10.1103/PhysRevLett.41.203} {\bibfield  {journal} {\bibinfo  {journal} {Phys.
  Rev. Lett.}\ }\textbf {\bibinfo {volume} {41}},\ \bibinfo {pages} {203--206}
  (\bibinfo {year} {1978})}\BibitemShut {NoStop}%
\bibitem [{\citenamefont {{Carter}}(1977)}]{carter:1977PhRvD..16.3395C}%
  \BibitemOpen
  \bibfield  {author} {\bibinfo {author} {\bibfnamefont {B.}~\bibnamefont
  {{Carter}}},\ }\bibfield  {title} {\enquote {\bibinfo {title} {{Killing
  tensor quantum numbers and conserved currents in curved space}},}\ }\href
  {\doibase 10.1103/PhysRevD.16.3395} {\bibfield  {journal} {\bibinfo
  {journal} {Phys. Rev. D}\ }\textbf {\bibinfo {volume} {16}},\ \bibinfo
  {pages} {3395--3414} (\bibinfo {year} {1977})}\BibitemShut {NoStop}%
\bibitem [{\citenamefont {Andersson}\ and\ \citenamefont
  {Blue}(2015)}]{andersson:blue:0908.2265}%
  \BibitemOpen
  \bibfield  {author} {\bibinfo {author} {\bibfnamefont {Lars}\ \bibnamefont
  {Andersson}}\ and\ \bibinfo {author} {\bibfnamefont {Pieter}\ \bibnamefont
  {Blue}},\ }\bibfield  {title} {\enquote {\bibinfo {title} {Hidden symmetries
  and decay for the wave equation on the {K}err spacetime},}\ }\href {\doibase
  10.4007/annals.2015.182.3.1} {\bibfield  {journal} {\bibinfo  {journal} {Ann.
  of Math. (2)}\ }\textbf {\bibinfo {volume} {182}},\ \bibinfo {pages}
  {787--853} (\bibinfo {year} {2015})},\ \Eprint
  {http://arxiv.org/abs/0908.2265} {arXiv:0908.2265 [math.AP]} \BibitemShut
  {NoStop}%
\bibitem [{\citenamefont {{Andersson}}\ \emph {et~al.}(2015)\citenamefont
  {{Andersson}}, \citenamefont {{B{\"a}ckdahl}},\ and\ \citenamefont
  {{Blue}}}]{2015arXiv150402069A}%
  \BibitemOpen
  \bibfield  {author} {\bibinfo {author} {\bibfnamefont {L.}~\bibnamefont
  {{Andersson}}}, \bibinfo {author} {\bibfnamefont {T.}~\bibnamefont
  {{B{\"a}ckdahl}}}, \ and\ \bibinfo {author} {\bibfnamefont {P.}~\bibnamefont
  {{Blue}}},\ }\bibfield  {title} {\enquote {\bibinfo {title} {{Spin geometry
  and conservation laws in the Kerr spacetime}},}\ }in\ \href {\doibase
  10.4310/SDG.2015.v20.n1.a8} {\emph {\bibinfo {booktitle} {One hundred years
  of general relativity}}},\ \bibinfo {editor} {edited by\ \bibinfo {editor}
  {\bibfnamefont {L.}~\bibnamefont {Bieri}}\ and\ \bibinfo {editor}
  {\bibfnamefont {S.-T.}\ \bibnamefont {Yau}}}\ (\bibinfo  {publisher}
  {International Press},\ \bibinfo {address} {Boston},\ \bibinfo {year}
  {2015})\ pp.\ \bibinfo {pages} {183--226},\ \Eprint
  {http://arxiv.org/abs/1504.02069} {arXiv:1504.02069 [gr-qc]} \BibitemShut
  {NoStop}%
\bibitem [{\citenamefont {{Aksteiner}}\ \emph {et~al.}(2019)\citenamefont
  {{Aksteiner}}, \citenamefont {{Andersson}},\ and\ \citenamefont
  {{B{\"a}ckdahl}}}]{2016arXiv160106084A}%
  \BibitemOpen
  \bibfield  {author} {\bibinfo {author} {\bibfnamefont {S.}~\bibnamefont
  {{Aksteiner}}}, \bibinfo {author} {\bibfnamefont {L.}~\bibnamefont
  {{Andersson}}}, \ and\ \bibinfo {author} {\bibfnamefont {T.}~\bibnamefont
  {{B{\"a}ckdahl}}},\ }\bibfield  {title} {\enquote {\bibinfo {title} {{New
  identities for linearized gravity on the Kerr spacetime}},}\ }\href {\doibase
  10.1103/PhysRevD.99.044043} {\bibfield  {journal} {\bibinfo  {journal} {Phys.
  Rev.}\ }\textbf {\bibinfo {volume} {D99}},\ \bibinfo {pages} {044043}
  (\bibinfo {year} {2019})},\ \Eprint {http://arxiv.org/abs/1601.06084}
  {arXiv:1601.06084 [gr-qc]} \BibitemShut {NoStop}%
\bibitem [{\citenamefont {{Cohen}}\ and\ \citenamefont
  {{Kegeles}}(1974)}]{cohen:kegeles:1974PhRvD..10.1070C}%
  \BibitemOpen
  \bibfield  {author} {\bibinfo {author} {\bibfnamefont {J.~M.}\ \bibnamefont
  {{Cohen}}}\ and\ \bibinfo {author} {\bibfnamefont {L.~S.}\ \bibnamefont
  {{Kegeles}}},\ }\bibfield  {title} {\enquote {\bibinfo {title}
  {{Electromagnetic fields in curved spaces: A constructive procedure}},}\
  }\href {\doibase 10.1103/PhysRevD.10.1070} {\bibfield  {journal} {\bibinfo
  {journal} {Phys. Rev. D}\ }\textbf {\bibinfo {volume} {10}},\ \bibinfo
  {pages} {1070--1084} (\bibinfo {year} {1974})}\BibitemShut {NoStop}%
\bibitem [{\citenamefont {Whiting}\ and\ \citenamefont
  {Price}(2005)}]{Whiting:2005hr}%
  \BibitemOpen
  \bibfield  {author} {\bibinfo {author} {\bibfnamefont {B.~F.}\ \bibnamefont
  {Whiting}}\ and\ \bibinfo {author} {\bibfnamefont {L.~R.}\ \bibnamefont
  {Price}},\ }\bibfield  {title} {\enquote {\bibinfo {title} {{Metric
  reconstruction from Weyl scalars}},}\ }\href {\doibase
  10.1088/0264-9381/22/15/003} {\bibfield  {journal} {\bibinfo  {journal}
  {Class. Quant. Grav.}\ }\textbf {\bibinfo {volume} {22}},\ \bibinfo {pages}
  {S589--S604} (\bibinfo {year} {2005})}\BibitemShut {NoStop}%
\bibitem [{\citenamefont {Aksteiner}(2014)}]{aksteiner:thesis}%
  \BibitemOpen
  \bibfield  {author} {\bibinfo {author} {\bibfnamefont {S.}~\bibnamefont
  {Aksteiner}},\ }\emph {\bibinfo {title} {Geometry and analysis in black hole
  spacetimes}},\ \href@noop {} {Ph.D. thesis},\ \bibinfo  {school} {Gottfried
  Wilhelm Leibniz Universit\"at Hannover} (\bibinfo {year} {2014}),\ \bibinfo
  {note}
  {\href{http://d-nb.info/1057896721}{http://d-nb.info/1057896721}}\BibitemShut
  {NoStop}%
\bibitem [{\citenamefont {Mart\'{\i}n-Garc\'{\i}a}(2002-2018)}]{xAct}%
  \BibitemOpen
  \bibfield  {author} {\bibinfo {author} {\bibfnamefont {J.~M.}\ \bibnamefont
  {Mart\'{\i}n-Garc\'{\i}a}},\ }\href@noop {} {\enquote {\bibinfo {title}
  {x{A}ct: {E}fficient tensor computer algebra for {M}athematica},}\ }
  (\bibinfo {year} {2002-2018}),\ \bibinfo {note}
  {\href{http://www.xact.es}{http://www.xact.es}}\BibitemShut {NoStop}%
\bibitem [{\citenamefont {Penrose}\ and\ \citenamefont
  {Rindler}(1986)}]{Penrose:1986fk}%
  \BibitemOpen
  \bibfield  {author} {\bibinfo {author} {\bibfnamefont {Roger}\ \bibnamefont
  {Penrose}}\ and\ \bibinfo {author} {\bibfnamefont {Wolfgang}\ \bibnamefont
  {Rindler}},\ }\href {\doibase 10.1017/CBO9780511564048} {\emph {\bibinfo
  {title} {{Spinors and Space-time I {\&} II}}}},\ Cambridge Monographs on
  Mathematical Physics\ (\bibinfo  {publisher} {Cambridge University Press},\
  \bibinfo {address} {Cambridge},\ \bibinfo {year} {1986})\BibitemShut
  {NoStop}%
\bibitem [{\citenamefont {{Walker}}\ and\ \citenamefont
  {{Penrose}}(1970)}]{walker:penrose:1970CMaPh..18..265W}%
  \BibitemOpen
  \bibfield  {author} {\bibinfo {author} {\bibfnamefont {M.}~\bibnamefont
  {{Walker}}}\ and\ \bibinfo {author} {\bibfnamefont {R.}~\bibnamefont
  {{Penrose}}},\ }\bibfield  {title} {\enquote {\bibinfo {title} {{On quadratic
  first integrals of the geodesic equations for type $\{$2,2$\}$
  spacetimes}},}\ }\href {\doibase 10.1007/BF01649445} {\bibfield  {journal}
  {\bibinfo  {journal} {Comm. Math. Phys.}\ }\textbf {\bibinfo {volume} {18}},\
  \bibinfo {pages} {265--274} (\bibinfo {year} {1970})}\BibitemShut {NoStop}%
\bibitem [{\citenamefont {{Torres del Castillo}}(1985)}]{1985:TDCastillo}%
  \BibitemOpen
  \bibfield  {author} {\bibinfo {author} {\bibfnamefont {G.~F.}\ \bibnamefont
  {{Torres del Castillo}}},\ }\bibfield  {title} {\enquote {\bibinfo {title}
  {Killing spinors and massless spinor fields},}\ }\href {\doibase
  10.1098/rspa.1985.0071} {\bibfield  {journal} {\bibinfo  {journal} {Proc.
  Royal Soc. A}\ }\textbf {\bibinfo {volume} {400}},\ \bibinfo {pages}
  {119--126} (\bibinfo {year} {1985})}\BibitemShut {NoStop}%
\bibitem [{\citenamefont {B\"ackdahl}\ and\ \citenamefont
  {Kroon}(2016)}]{BaeVal15}%
  \BibitemOpen
  \bibfield  {author} {\bibinfo {author} {\bibfnamefont {Thomas}\ \bibnamefont
  {B\"ackdahl}}\ and\ \bibinfo {author} {\bibfnamefont {Juan A.~Valiente}\
  \bibnamefont {Kroon}},\ }\bibfield  {title} {\enquote {\bibinfo {title} {A
  formalism for the calculus of variations with spinors},}\ }\href {\doibase
  10.1063/1.4939562} {\bibfield  {journal} {\bibinfo  {journal} {J. Math.
  Phys.}\ }\textbf {\bibinfo {volume} {57}},\ \bibinfo {pages} {022502}
  (\bibinfo {year} {2016})},\ \Eprint {http://arxiv.org/abs/1505.03770}
  {arXiv:1505.03770 [gr-qc]} \BibitemShut {NoStop}%
\bibitem [{\citenamefont {Prabhu}\ and\ \citenamefont
  {Wald}(2018)}]{Prabhu:2018jvy}%
  \BibitemOpen
  \bibfield  {author} {\bibinfo {author} {\bibfnamefont {Kartik}\ \bibnamefont
  {Prabhu}}\ and\ \bibinfo {author} {\bibfnamefont {Robert~M.}\ \bibnamefont
  {Wald}},\ }\bibfield  {title} {\enquote {\bibinfo {title} {{Canonical Energy
  and Hertz Potentials for Perturbations of Schwarzschild Spacetime}},}\ }\href
  {\doibase 10.1088/1361-6382/aae9ae} {\bibfield  {journal} {\bibinfo
  {journal} {Class. Quant. Grav.}\ }\textbf {\bibinfo {volume} {35}},\ \bibinfo
  {pages} {235004} (\bibinfo {year} {2018})},\ \Eprint
  {http://arxiv.org/abs/1807.09883} {arXiv:1807.09883 [gr-qc]} \BibitemShut
  {NoStop}%
\bibitem [{\citenamefont {{Andersson}}\ \emph
  {et~al.}(2014{\natexlab{b}})\citenamefont {{Andersson}}, \citenamefont
  {{B{\"a}ckdahl}},\ and\ \citenamefont {{Joudioux}}}]{2014CMaPh.331..755A}%
  \BibitemOpen
  \bibfield  {author} {\bibinfo {author} {\bibfnamefont {L.}~\bibnamefont
  {{Andersson}}}, \bibinfo {author} {\bibfnamefont {T.}~\bibnamefont
  {{B{\"a}ckdahl}}}, \ and\ \bibinfo {author} {\bibfnamefont {J.}~\bibnamefont
  {{Joudioux}}},\ }\bibfield  {title} {\enquote {\bibinfo {title} {{Hertz
  Potentials and Asymptotic Properties of Massless Fields}},}\ }\href {\doibase
  10.1007/s00220-014-2078-x} {\bibfield  {journal} {\bibinfo  {journal} {Comm.
  Math. Phys.}\ }\textbf {\bibinfo {volume} {331}},\ \bibinfo {pages}
  {755--803} (\bibinfo {year} {2014}{\natexlab{b}})},\ \Eprint
  {http://arxiv.org/abs/1303.4377} {arXiv:1303.4377 [math.AP]} \BibitemShut
  {NoStop}%
\bibitem [{\citenamefont {{Geroch}}\ \emph {et~al.}(1973)\citenamefont
  {{Geroch}}, \citenamefont {{Held}},\ and\ \citenamefont {{Penrose}}}]{GHP}%
  \BibitemOpen
  \bibfield  {author} {\bibinfo {author} {\bibfnamefont {R.}~\bibnamefont
  {{Geroch}}}, \bibinfo {author} {\bibfnamefont {A.}~\bibnamefont {{Held}}}, \
  and\ \bibinfo {author} {\bibfnamefont {R.}~\bibnamefont {{Penrose}}},\
  }\bibfield  {title} {\enquote {\bibinfo {title} {{A space-time calculus based
  on pairs of null directions}},}\ }\href {\doibase 10.1063/1.1666410}
  {\bibfield  {journal} {\bibinfo  {journal} {J. Math. Phys.}\ }\textbf
  {\bibinfo {volume} {14}},\ \bibinfo {pages} {874--881} (\bibinfo {year}
  {1973})}\BibitemShut {NoStop}%
\end{thebibliography}

%

\end{document}